\newcounter{mnotecount}[section]
\newcommand{\s}{s}
\newcommand{\kk}{k_+}
\newcommand{\kkk}{k_-}
\DeclareFontFamily{OT1}{pzc}{}
\DeclareFontShape{OT1}{pzc}{m}{it}%
              {<-> s * pzcmi8t}{}
\DeclareMathAlphabet{\mathpzc}{OT1}{pzc}%
                                {m}{it}
\newtheorem{Thm}{Theorem}[section]
\newtheorem{Lem}[Thm]{Lemma}
\newtheorem{Cor}[Thm]{Corollary}
\newtheorem{Prop}[Thm]{Proposition}
\newtheorem{Hyp}[Thm]{Hypothesis}
\newtheorem{Rmk}[Thm]{Remark}
\newcommand{\cal}{\mathcal}
\renewcommand{\Psi}{\rho}
\newcommand{\mrpp}{\mathring}
\newcommand{\mv}{\mathring{v}}
\newcommand{\R}{\mathbb{R}}
\newcommand{\eps}{\varepsilon}
\newcommand{\truemu}{\frac{2\varpi}{r}-\frac{e^2}{r^2}+\frac{\Lambda}{3}r^2}
\newcommand{\mysigma}{{\cal P}}
\newcommand{\myumax}{U}
\newcommand{\ckrm}{\check r_-}
\newcommand{\ckrp}{\check r_+}
\newcommand{\hatz}{\widehat{\frac{\zeta}{\nu}}}
\newcommand{\cg}{\Gamma}
\newcommand{\vgr}{v_{\ckrm}}
\newcommand{\ugr}{u_{\ckrm}}
\newcommand{\ugrp}{u_{\ckrp}}
\newcommand{\vgrp}{v_{\ckrp}}
\newcommand{\ug}{u_r}
\newcommand{\vgs}{v_s}
\newcommand{\ckr}{\check{r}}
\newcommand{\uckr}{u_{\ckr}}
\newcommand{\gam}{\gamma}
\newcommand{\ugam}{u_{\gam}}
\newcommand{\vgam}{v_{\gam}}
\newcommand{\ruv}{{\cal R}_{(U,V)}}
\newcommand{\cD}{{\cal D}}
\newcommand{\Ric}{\operatorname{Ric}}
\begin{document}

\newcounter{enumii_saved}

\title[Occurrence of mass inflation with an exponential Price law]{On the occurrence of mass inflation for the Einstein-Maxwell-scalar field system with a cosmological constant and an exponential Price law}

\author{Jo\~ao L.~Costa}
\author{Pedro M.~Gir\~ao}
\author{Jos\'{e} Nat\'{a}rio}
\author{Jorge Drumond Silva}

\address{Jo\~ao L.~Costa: ISCTE - Instituto Universitário de Lisboa, Av.\ das Forças Armadas, 1649-026 Lisboa, Portugal and CAMGSD}
\email{jlca@iscte-iul.pt}
\address{Pedro M.~Gir\~ao, Jos\'{e} Nat\'{a}rio and Jorge Drumond Silva: CAMGSD, Instituto Superior T\'ecnico, Universidade de Lisboa, Av.\ Rovisco Pais, 1049-001 Lisboa, Portugal}
\email{pgirao@math.ist.utl.pt}
\email{jnatar@math.ist.utl.pt}
\email{jsilva@math.ist.utl.pt}

\subjclass[2010]{Primary 83C05; Secondary 35Q76, 83C22, 83C57, 83C75}
\keywords{Einstein equations, black holes, strong cosmic censorship, Cauchy horizon, scalar field, spherical symmetry}
\thanks{This work was partially supported by FCT/Portugal through UID/MAT/04459/2013 and grant (GPSEinstein) PTDC/MAT-ANA/1275/2014.}

\maketitle

\begin{center}
{\bf Abstract}
\end{center}

In this paper we study the spherically symmetric characteristic initial data problem for the Einstein-Maxwell-scalar
field system with a positive cosmological constant in the interior of a black hole, assuming an exponential Price law
along the event horizon. More precisely, we construct open sets of characteristic data which, on the outgoing initial null hypersurface  
(taken to be the event horizon), converges exponentially to a reference Reissner-N\"{o}rdstrom black hole at infinity.

We prove the stability of the radius function at the Cauchy horizon, and show that, depending on the decay rate of the initial data, mass inflation 
may or may not occur. In the latter case, we find that the solution can be extended across the Cauchy horizon with continuous metric and Christoffel symbols in $L^2_{{\rm loc}}$, 
thus violating the Christodoulou-Chru\'sciel version of strong cosmic censorship.

\setcounter{tocdepth}{1}
\tableofcontents

\vspace{5mm}

\section{Introduction}

\subsection{Strong cosmic censorship and spherical symmetry}
Determinism of a physical system, modeled mathematically by evolution equations, is embodied in the questions of existence and uniqueness of solutions for given initial data. The initial value problem (or Cauchy problem) is therefore the appropriate setting for studying these models. Well known examples of equations where the Cauchy problem is quintessential are those of Newtonian mechanics, the Euler and Navier-Stokes systems in hydrodynamics and Maxwell's equations of electromagnetism.

Historically, the geometric nature and mathematical complexity of the Einstein equations made it difficult to recognize that they also fit into this framework. It was not until the seminal work of Y.~Choquet-Bruhat \cite{Choquet52}, and her later work with R.~Geroch \cite{Choquet69}, that the central role of the Cauchy problem in general relativity was established. These results relied crucially on recognizing the hyperbolic character of the Einstein equations. Uniqueness of the solutions, as for any hyperbolic PDE, then follows from a domain of dependence property. The essence of \cite{Choquet69} consists precisely in showing that given initial data there exists a maximal globally hyperbolic development (MGHD) for the corresponding Cauchy problem, that is, a maximal spacetime where this domain of dependence property holds.

For the Einstein equations, global uniqueness fails, and therefore determinism breaks down, if extensions of MGHDs to strictly larger spacetimes can be found. The statement that generically, for suitable  Cauchy initial data,\footnote{We will not provide a full discussion of the subtleties regarding the formulation of this conjecture (see for instance \cite{ChruscielSCC, RingstromCauchy}). As an example, note that trivial extensions can occur simply because the initial data is given on an incomplete Cauchy surface.} the corresponding MGHD cannot be extended is known as the strong cosmic censorship conjecture (SCCC) \cite{PenroseSingularities, ChruscielSCC, ChristodoulouGlobalnew}.

A crucial point in the precise formulation of this conjecture is deciding what exactly is meant by an extension. Various proposals have been advanced, differing on the degree of regularity that is demanded for the larger spacetime. The strongest formulation would correspond to the impossibility of extending the MGHD with a continuous Lorentzian metric. This happens for instance in the Schwarzschild solution, where continuous extensions across the singularity $r=0$ do not exist \cite{Sbierski}. However, from the PDE point of view, the conjecture should rather prevent the existence of extensions which are themselves solutions of the Einstein equations, even in a weak sense. If we collectively represent the Christoffel symbols by $\Gamma$, then a weak formulation of the vacuum Einstein equations $\Ric=0$ can be represented schematically as
\[
\int_M \Ric \phi = 0 \Leftrightarrow \int_M (\partial \Gamma + \Gamma \Gamma) \phi = 0 \Leftrightarrow \int_M ( - \Gamma \partial \phi + \Gamma \Gamma \phi) = 0,
\]
for any test function $\phi$. Thus, as pointed out by Christodoulou and Chru\'sciel~\cite{Christodoulou:2008, ChruscielSCC}, any weak solution extension is ruled out if the Christoffel symbols fail to be in $L^2_{{\rm loc}}$. We shall therefore refer to the conjecture that extensions with Christoffel symbols in $L^2_{{\rm loc}}$ generically do not exist as the Christodoulou-Chru\'sciel formulation of strong cosmic censorship.

The reason why a genericity condition must be included in any version of the SCCC is that there exist well known MGHDs, arising from complete asymptotically flat initial data, which can be smoothly extended to strictly larger solutions of the Einstein equations; when such extensions exist, the boundary of the MGHD in the larger spacetime is known as the Cauchy horizon. The paradigmatic example exhibiting this behavior is the Kerr family of solutions, describing rotating black holes, where the Cauchy horizon occurs inside the event horizon. In \cite{PenroseBatelle} (see also \cite{SimpsonInternal}), Penrose provided a heuristic argument, based on the blue-shift effect, by which arbitrarily small perturbations of the black hole exterior would be infinitely amplified along the Cauchy horizon, turning it into a ``singularity'' beyond which no extension should exist. According to this picture, extensions of spacetimes across Cauchy horizons would be artifacts of very particular solutions, such as the Kerr family; therefore they should be unstable and devoid of physical significance.

An obvious path for studying the validity of the SCCC is then to focus on perturbations of these exceptional MGHDs where a Cauchy horizon is known to exist. Given the considerable difficulty of the full system of Einstein's field equations, it is natural to introduce symmetry assumptions to reduce the number of degrees of freedom, even though this necessarily implies some loss of genericity. Spherical symmetry is a popular choice, since it leads to partial differential equations in only two independent variables and is compatible with both the asymptotic flatness requirements for modelling isolated astrophysical systems and standard cosmological spacetimes. Moreover, spherically symmetric solutions with a Cauchy horizon, analogous to the Kerr family, can be obtained by including an electromagnetic field: these constitute the so-called Reissner-Nordstr\"{o}m family of solutions, describing charged black holes. In the words of John Wheeler, ``charge is a poor man's angular momentum''.

However, Birkhoff's theorem imposes local uniqueness of electrovacuum solutions in spherical symmetry, that is, it establishes that there are no gravitational dynamical degrees of freedom under these assumptions. For that reason, Christodoulou introduced in \cite{Christodoulou:1986} a model where the Einstein equations are coupled to a massless scalar field, arguably the simplest model which retains the wavelike behavior expected from the Einstein equations but is not constrained by Birkhoff's theorem. Another important consideration for choosing  a massless scalar field is that, unlike other matter models (e.g. perfect fluids), it does not develop singularities in the absence of gravity, and so any breakdown of the solution when coupled to the Einstein equations can be attributed to purely gravitational effects. 

By adding an electromagnetic field, Dafermos \cite{Dafermos1} adapted Christodoulou's model to study perturbations of the interior of a Reissner-Nordstr\"{o}m black hole, and in particular the stability of the Cauchy horizon. More precisely, he considered a characteristic initial value problem for the Einstein-Maxwell-scalar field system in spherical symmetry, with Reissner-Nordstr\"{o}m data on the event horizon and arbitrary data along an ingoing null hypersurface; the Reissner-Nordstr\"{o}m spacetime itself is, of course, a particular solution of this problem. This work established the following two results: first, the radius function does not vanish at the Cauchy horizon, and so the metric can always be extended continuously across it; second, if the Reissner-Nordstr\"{o}m initial data is sufficiently subextremal and the free data on the ingoing null hypersurface decays sufficiently slowly towards the event horizon then a scalar invariant, the so-called renormalized Hawking mass, blows up at the Cauchy horizon, a phenomenon known as mass inflation (first identified by Poisson and Israel \cite{IsraelPoisson}). This in turn leads to the blow-up of the Kretschmann scalar, and so no $C^2$ extensions are possible for the metric in this setting. In fact, something much stronger is expected, namely that mass inflation even prevents the existence of extensions with Christoffel symbols in $L^2_{{\rm loc}}$ (see \cite{DafermosBlack}).

These results were further extended by the authors in \cite{relIst1, relIst2, relIst3}, with the inclusion of a cosmological constant of any sign and a more detailed analysis of the solution at the Cauchy horizon, depending on the precise decay of the initial data. Notably, this work established that for sufficiently fast decaying initial data not only does mass inflation not occur, but also it is possible to extend the spacetime across the Cauchy horizon as a classical solution of the Einstein-Maxwell-scalar field system, thus calling into question the SCCC.

However, assuming Reissner-Nordstr\"{o}m data on the event horizon is a somewhat artificial problem. A more realistic model is obtained if gravitational collapse arises from the evolution of initial data prescribed on a Cauchy surface. In this case, following the heuristic work of Price in 1972 \cite{Price}, it is widely expected that, in the absence of a cosmological constant, the scalar field decays polynomially along the event horizon (with respect to an Eddington-Finkelstein-type null coordinate). The conjecture that in a generic gravitational collapse scenario the scalar field decays with some precise rate became known as {\em Price's law}. In \cite{Dafermos2}, Dafermos proved that the radius function does not vanish at the Cauchy horizon if a polynomial Price's law is assumed as an upper bound, while mass inflation occurs if the corresponding lower bound is also imposed. The validity of the polynomial Price's law as an upper bound was subsequently established by Dafermos and Rodnianski \cite{DafermosProof} for the spherically symmetric collapse of a massless scalar field, thus yielding stability of the Cauchy horizon for black hole formation in this setting. Nevertheless, the occurrence of mass inflation, and therefore inextedibility with Christoffel symbols in $L^2_{{\rm loc}}$, remains an open problem. In recent work, Luk and Oh \cite{LukOh1, LukOh2} showed that solutions resulting from gravitational collapse of generic asymptotically flat initial data satisfy 
an integral lower bound along the event horizon, which, although weaker than the
pointwise lower bound predicted by Price, turns out to be enough to rule out the
existence of $C^2$ extensions. Whether it suffices to establish mass inflation is
still not clear.

In the presence of a positive cosmological constant, it is widely expected that the corresponding Price law should guarantee exponential decay of the scalar field along the event horizon (see for instance the linear analysis in~\cite{DafermosWavedeSitter, Dyatlov}, the numerical study in \cite{Brady97} or the nonlinear stability results in \cite{HintzVasy}). In this paper, we will therefore consider such an exponential decay and extend the analysis in \cite{relIst1, relIst2, relIst3} to this case. We prove the stability of the radius function at the Cauchy horizon, and show that, depending on the decay rate of the initial data, mass inflation may or may not occur. In the latter case, we find that the solution can be extended across the Cauchy horizon with Christoffel symbols in $L^2_{{\rm loc}}$. A more precise statement of our results can be found in Theorem~\ref{thmMain}.

\subsection{Summary of the main results}

We consider the Einstein-Maxwell-real massless scalar field equations in the presence of a cosmological constant $\Lambda$ (in units for which $c=4\pi G=\varepsilon_0=1$):
\begin{align*}
& R_{\mu\nu} - \frac12 R g_{\mu\nu} + \Lambda g_{\mu\nu} = 2 T_{\mu\nu}; \\
& dF = d\star F = 0; \\
& \Box \phi = 0; \\
& T_{\mu\nu} = \partial_\mu \phi \, \partial_\nu \phi - \frac12 \partial_\alpha \phi \, \partial^\alpha \phi \, g_{\mu\nu} + F_{\mu\alpha} F_{\nu}^{\,\,\alpha} - \frac14 F_{\alpha\beta} F^{\alpha \beta} g_{\mu\nu}.
\end{align*}
These form a system of partial differential equations for the components of the spacetime metric $g$, the Faraday electromagnetic $2$-form $F$, and the real massless scalar field $\phi$; here $R_{\mu\nu}$ are the components of the Ricci tensor, $R$ is the scalar curvature, $\star$ is the Hodge star operator and $\Box$ is the d'Alembertian (all depending on $g$).

In the spherically symmetric case, we can write the metric in double null coordinates $(u,v)$ as 
\[
g=-\Omega^2(u,v)\,dudv+r^2(u,v)\, \sigma_{\mathbb{S}^2}, 
\]
where $\sigma_{\mathbb{S}^2} := d\theta^2 + \sin^2 \theta d\varphi^2$ is the round metric on the $2$-sphere $\mathbb{S}^2$. In this case, the Maxwell equations decouple from the system, since they can be immediately solved to yield
\[
F = - \frac{Q_e \, \Omega^2(u,v)}{2 \, r^2(u,v)} \, du \wedge dv + Q_m \sin \theta d\theta \wedge d\varphi.
\]
Here $Q_e$ and $Q_m$ are constants, corresponding to a total electric charge $4 \pi Q_e$ and a total magnetic charge $4 \pi Q_m$. The remaining equations depend only on the parameter 
$$e = \sqrt{{Q_e}^2+{Q_m}^2},$$
which we assume to be nonzero. They can then be written as follows (see~\cite{relIst1}):
a wave equation for $r$,
\begin{equation}\label{wave_r} 
\partial_u\partial_vr=-\frac{\Omega^2}{4r} - \frac{\partial_ur\,\partial_vr}{r} + \frac{\Omega^2e^2}{4r^3} + \frac{\Omega^2 \Lambda r}{4},
\end{equation}
a wave equation for $\phi$,
\begin{equation}\label{wave_phi} 
\partial_u\partial_v\phi=-\,\frac{\partial_ur\,\partial_v\phi+\partial_vr\,\partial_u\phi}{r},
\end{equation}
the Raychaudhuri equation in the $u$ direction,
\begin{equation}\label{r_uu} 
\partial_u\left(\frac{\partial_ur}{\Omega^2}\right)=-r\frac{(\partial_u\phi)^2}{\Omega^2},
\end{equation}
the Raychaudhuri equation in the $v$ direction,
\begin{equation}\label{r_vv} 
\partial_v\left(\frac{\partial_vr}{\Omega^2}\right)=-r\frac{(\partial_v\phi)^2}{\Omega^2},
\end{equation}
and a wave equation for $\ln\Omega$,
 \begin{equation}\label{wave_Omega} 
\partial_v\partial_u\ln\Omega=-\partial_u\phi\,\partial_v\phi-\,\frac{\Omega^2e^2}{2r^4}+\frac{\Omega^2}{4r^2}+\frac{\partial_ur\,\partial_vr}{r^2}.
\end{equation}

We summarize the  main results of this paper in the following statement.
\begin{Thm}
\label{thmMain}
Consider the characteristic initial value problem for the spherically symmetric Einstein-Maxwell-scalar field system \eqref{wave_r}-\eqref{wave_Omega} on the domain $[0,U] \times \left[0,\infty\right[$, written in null coordinates $(u,v)$ determined by the conditions $\partial_vr(0,v)=g(\nabla r, \nabla r)(0,v)$ and $\partial_ur(u,0)=-1$. Take any subextremal element of the Reissner-Nordstr\"{o}m family of solutions with mass $\varpi_0$, non-vanishing charge parameter $e$ and cosmological constant $\Lambda$, and let $r_+$, $k_+$ and $k_-$ be, respectively, the corresponding event horizon radius and the surface gravities of the event and the Cauchy horizons. Then, for any $\varepsilon > 0$ and  fixed $s>0$ it is possible to construct open sets of initial data such that $\phi(u,0)$ is free along the ingoing null direction $\{v=0\}$, while $r(0,v) \to r_+$, $\partial_vr(0,v) \to 0$ and
\begin{equation} \label{PriceThm}
e^{-(sk_+ + \varepsilon)v} \, \lesssim \, \partial_v\phi(0,v) \, \lesssim \, e^{-(sk_+ - \varepsilon)v}
\end{equation}
as $v \to \infty$ along the event horizon $\{u=0\}$. 

Given $U > 0$ sufficiently small, there exists a unique maximal development of this characteristic initial value problem, defined on a past set ${\cal P} \subset [0,U] \times [0,\infty[$.
Moreover, for small enough $\varepsilon>0$ the following results hold (with $\rho=k_-/k_+>1$):
\begin{enumerate}[{\rm (1)}]
\item  {\em Stability of the radius function at the Cauchy horizon (Theorem~\ref{stability_of_Cauchy_horizon}).} There exists $U>0$ such that
$$[0,U]\times[0,\infty[\,\subset {\cal P},$$
and $r_0>0$ for which
$$r(u,v)>r_0,\ {\rm for\ all}\ (u,v)\in[0,U]\times[0,\infty[.$$
Consequently, $({\cal M},g,\phi)$ extends, across the Cauchy horizon $\{v=\infty\}$, to $(\hat {\cal M},\hat g,\hat \phi)$, with $\hat g$ and $\hat \phi$ in~$C^0$.
\end{enumerate}
\begin{enumerate}[{\rm (1)}]\addtocounter{enumi}{+1}
\item {\em Mass inflation (Theorem~\ref{mass_inflation_thm}).} If $\s<\min\left\{\rho,2\right\}$ then the renormalized Hawking mass $\varpi$ (see \eqref{bar_rafaeli}) satisfies
\begin{equation*}
\lim_{v\rightarrow\infty}\varpi(u,v)=\infty,\ {\it for\ each}\/\  0<u\leq U.
\end{equation*}
In particular, no $C^2$ extensions across the Cauchy horizon exist.
\item {\em No mass inflation (Theorem~\ref{no_mass}).}  If $\rho<\frac{9}{7}$ and $s>\frac{14}{9} \rho$ then
\[
\lim_{v\rightarrow\infty}\varpi(u,v)<\infty,\ {\it for\ each}\/\  0<u\leq U,
\]
provided that $U$ is sufficiently small.
\item {\em Breakdown of the Christodoulou-Chru\'sciel criterion (Theorem~\ref{breakdown}).} 
Under the same hypotheses as in {\rm (3)}, the Christodoulou-Chru\'sciel inextendibility criterion fails, i.e.\ 
$({\cal M},g,\phi)$ extends across the Cauchy horizon to $(\hat {\cal M},\hat g,\hat \phi)$, with $\hat g$ and $\hat \phi$ in~$C^0$,
Christoffel symbols $\hat\Gamma$ in $L^2_{\rm loc}$, and $\hat\phi$ in $H^1_{\rm loc}$.
\end{enumerate}
\end{Thm}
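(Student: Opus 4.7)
The plan is to decouple the statement into a local/semi-global existence step on the slab $[0,U]\times[0,\infty[$ and four quantitative claims about the behavior at the Cauchy horizon, each to be established in its own section via Theorems~\ref{stability_of_Cauchy_horizon}, \ref{mass_inflation_thm}, \ref{no_mass}, and \ref{breakdown}. First I would construct the open set of admissible data by taking $\partial_v\phi(0,v) = \alpha(v)\,e^{-sk_+v}$ with $\alpha$ pinched between two positive constants and sufficiently smooth, prescribing $r(0,v) \to r_+$ and $\partial_v r(0,v) \to 0$ consistently with the gauge $\partial_v r(0,v) = g(\nabla r, \nabla r)(0,v)$, then solving the Raychaudhuri constraint \eqref{r_vv} for $\Omega$ along $\{u=0\}$ and choosing $\phi(u,0)$ freely, small enough that standard local existence yields a strip $[0,U]\times[0,\infty[$. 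Both bounds in \eqref{PriceThm} are open in $\alpha$, so the resulting set of data is open.

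For part (1), the natural approach is a bootstrap on strips $[0,U]\times[0,V]$ in the spirit of \cite{Dafermos1, relIst1, relIst2}. I would postulate pointwise bounds of the form $|\partial_v r|\lesssim e^{-2k_+v}$, $|\partial_v\phi|\lesssim e^{-sk_+v}$, together with matching control of $\partial_v r/\Omega^2$, propagate them in $u$ through \eqref{wave_r}-\eqref{r_vv}, and close the bootstrap by taking $U$ small enough that the vertical amplification is controlled. The lower bound $r(u,v) > r_0$ then follows from integrating $\partial_u r$ in $u$, starting from $\partial_u r(u,0) = -1$ and using \eqref{r_uu}; continuous extension of $g$ and $\phi$ across $\{v=\infty\}$ is an immediate consequence of the uniform estimates together with \eqref{wave_Omega}.

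Parts (2) and (3) both rest on the transport identity for the renormalized Hawking mass obtained by combining \eqref{wave_r}-\eqref{r_vv}, which expresses $\partial_v\varpi$ as a signed quantity quadratic in $\partial_v\phi$ weighted by $(1-\mu)^{-1}$. For part (2), when $s<\min\{\rho,2\}$, the slow exponential decay of $\partial_v\phi$ combined with the blue-shift factor $(1-\mu)^{-1}\sim e^{2k_-v}$ produces a nonintegrable lower bound on the integrand, forcing $\varpi\to\infty$ as $v\to\infty$ along each $\{u=u_0\}$ with $u_0 > 0$; the mechanism is the one isolated in \cite{Dafermos1, relIst2}, with the polynomial Price decay replaced by its exponential analogue. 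For part (3), the decay is instead fast enough that the cumulative contribution stays finite, but extracting this requires simultaneously controlling $\varpi$ and the auxiliary quantity $\partial_v r/\Omega^2$ along outgoing characteristics, and iterating the bootstrap within ever-thinner regions near the Cauchy horizon. The numerical thresholds $\rho<\tfrac{9}{7}$ and $s>\tfrac{14}{9}\rho$ should arise precisely as the regime in which all relevant exponents balance favorably.

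Finally, part (4) will be deduced from (3): once $\varpi$ is bounded and $r$ is bounded below, $\Omega^2$ is pinched from above and below, so $\hat g$ extends continuously across $\{v=\infty\}$; an $L^2$-type bound on $\partial_v\phi$, together with its $\partial_u\phi$ counterpart obtained from \eqref{r_uu}, gives $\hat\phi\in H^1_{\rm loc}$, and the Christoffel symbols, being algebraic in $\partial r/\Omega^2$, $\partial\ln\Omega$ and $r$, inherit local square integrability from the same estimates. The main obstacle throughout is to track precisely the interplay between the decay rate $s$ and the blue-shift exponent $\rho$ as $v\to\infty$; this is most delicate in part (3), where the non-divergence of $\varpi$ must be established against both the geometric blue-shift amplification and the near-resonances between the several exponential rates that drive the dynamics.
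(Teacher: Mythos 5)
Your overall skeleton (construct data, analyse the interior, treat the thresholds in $s$ versus $\rho$, then extend) matches the paper, but several concrete steps are either unaddressed or would fail as stated. First, the data construction: along $\{u=0\}$ the functions $\theta$, $\varpi$, $\lambda$, $r$ are tied together by the constraints \eqref{omega_v} and \eqref{kappa_at_u}, so if you prescribe $\partial_v\phi(0,v)=\alpha(v)e^{-sk_+v}$ directly you must still show that the resulting $r(0,v),\varpi(0,v)$ converge to the parameters $(r_+,\varpi_0)$ of the chosen subextremal solution, that $\lambda(0,v)>0$ throughout, and that the exponent $sk_+$ is consistent with the surface gravity of the limiting solution that the data themselves determine; this self-consistency is the whole point of Section~\ref{section3}, where the paper instead prescribes $\hat\varpi'(r)=\hat f$ with $\hat f\to A$ and \emph{derives} the Price law with $s=A/(r_+k_+)-1$. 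Second, and more seriously, your proof of part (1) via a single global bootstrap with $|\partial_vr|\lesssim e^{-2k_+v}$, $|\partial_v\phi|\lesssim e^{-sk_+v}$, closed ``by taking $U$ small'', cannot work: smallness in $u$ does not control amplification as $v\to\infty$, and the postulated bounds are false on parts of the domain --- e.g.\ $-\lambda$ is bounded below by a positive constant in the no-shift region $J^-(\Gamma_{\check r_-})\cap J^+(\Gamma_{\check r_+})$ (see \eqref{tilde_c}), which reaches arbitrarily large $v$ as $u\to0$, and beyond the spacelike curve $\gamma$ the blueshift degrades the decay of $\theta$ below $e^{-sk_+v}$ (cf.\ \eqref{eq10}). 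The paper's argument needs the region decomposition (the new region ${\cal P}_\lambda$ up to the apparent horizon, handled with a two-dimensional Gronwall inequality, then redshift, no-shift, and the curve $\gamma$ with $\beta$ as in \eqref{beta_ii}), and stability of $r$ comes from integrability in $v$ of $\lambda$ furnished by \eqref{lambda-above}, not from a uniform $e^{-2k_+v}$ bound.

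For part (2) your heuristic (slow decay beats blueshift) is the right mechanism, but the actual proof is a contradiction argument following \cite{relIst3}: one must first propagate a pointwise \emph{lower} bound on $\theta$ into $J^+({\cal A})$, which requires the positivity argument for $\theta$ and $\zeta$ (and hence $\varpi(u,\infty)>\varpi_0$, $r(u,\infty)<r_-$), and then play it against the improved upper bound $-\lambda\lesssim e^{(-2k_-+\delta)v}$; your sketch does not explain how the lower bound on $\partial_v\phi$ survives into the interior. Finally, part (4) is not a corollary of the $C^0$ bounds: the dangerous Christoffel symbols are $\partial_u\log\Omega^2$ and $\partial_v\log\Omega^2$, which are not ``algebraic in'' quantities already controlled. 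The paper must pass to a new coordinate $\mathring v$ normalized by $\mathring\Omega^2(U,\mathring v)\equiv1$ (the gauge used in \cite{relIst3} degenerates precisely when $r(\cdot,\infty)\equiv r_-$, a case that cannot be excluded here), integrate the wave equation \eqref{wave_Omega} for $\log\mathring\Omega^2$, and use the uniform-in-$u$ $L^2_v$ bound on $\theta$ coming from \eqref{est}; note also that $\theta$ need not extend continuously, so $\hat\phi\in H^1_{\rm loc}$ requires exactly this quantitative input rather than the qualitative extension statement.
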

The regions of the $(\rho,s)$ plane where we can prove mass inflation and no mass inflation are depicted in the following figure.\footnote{The region where we can prove no mass inflation is not expected to be sharp, since the linear analysis carried out in~\cite{CostaFranzen, peter2} suggests that there exist $H^1$ extensions for $s>\rho$.}

\begin{center}
\begin{psfrags}
\psfrag{a}{{\hspace{-1.7cm} $_{(1,1)}$ \qquad $_{\frac{9}{7}}$}}
\psfrag{b}{{$_{2}$}}
\psfrag{c}{{\!\!\!\!$_{\frac{14}{9}}$}}
\psfrag{d}{{\!\!$_{2}$}}
\psfrag{r}{{\tiny $\rho$}}
\psfrag{s}{{\tiny $s$}}
\includegraphics[scale=.8]{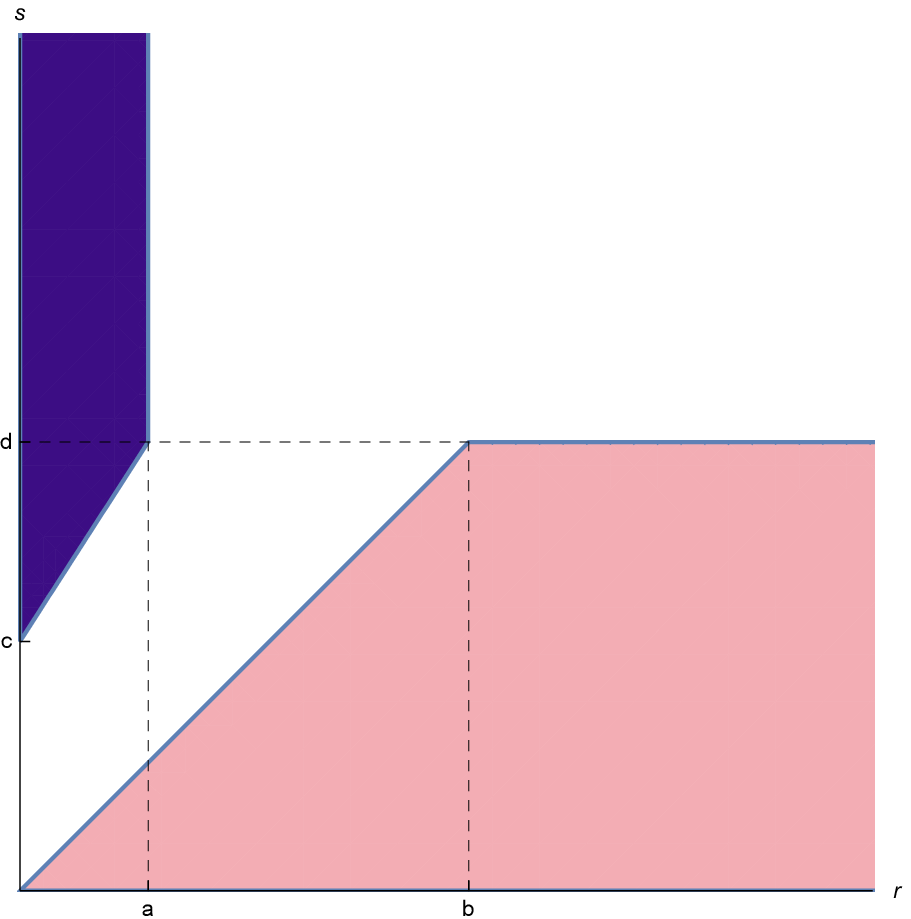}\end{psfrags} \label{figura}

\vspace{.5cm}\hspace{1cm}
\parbox{8cm}{
{\color[rgb]{0.24, 0.05, 0.52}\rule{.3cm}{.3cm}} - no mass inflation

{\color[rgb]{0.95, 0.68, 0.71}\rule{.3cm}{.3cm}} - mass inflation

}

\end{center}

\subsection{Implications for cosmic censorship}

As discussed above, the results in this paper do not apply directly to the SCCC, since this conjecture refers to  
global uniqueness of solutions arising from generic Cauchy data, while we consider characteristic data prescribed on 
a dynamic event horizon along which the scalar field satisfies a Price law of the form~\eqref{PriceThm}; 
that is, our results assume that a black hole is already present, as well as a specific decay of the field 
in its exterior.     

However, as is clear from Theorem~\ref{thmMain}, just the qualitative change in Price's law from polynomial ($\Lambda=0$) to exponential ($\Lambda>0$) is not
enough to obtain definitive conclusions about the behavior of the solutions at the Cauchy horizon, and therefore the validity of the SCCC. The final outcome requires,
in particular, a very precise quantitative knowledge of the value of $sk_+$ in~\eqref{PriceThm}, or, more precisely, of how such quantity
relates to the surface gravities of the Cauchy, event and cosmological horizons.
This is in stark contrast with the asymptotically flat case ($\Lambda=0$), where Price's law is expected to yield an inverse power
decay, which in turn is enough to establish mass inflation in the entire subextremal parameter range.

A recent numerical calculation of the quasinormal modes of Reissner-Nordstr\"om-de Sitter~\cite{cardosoRNdS} considerably changed the perspective on how $sk_+$ depends on the black hole parameters. In particular, this work (numerically) disproved a long-standing conjecture with roots in \cite{Brady97} and \cite{BradyCosmic}. Strictly speaking, the results in~\cite{cardosoRNdS} only apply to solutions of the linear wave equation in a fixed Reissner-Nordstr\"om-de Sitter background, but we expect similar results to also hold for the Einstein-Maxwell-scalar field system with a positive cosmological constant $\Lambda$. Assuming that this is the case, \cite{cardosoRNdS} shows that,  in the limit of large charge (for which $\rho$ approaches $1$), the decay corresponds to $sk_+$ close to $2k_-$ (that is, $s$ close to $2\rho$). In this regime, our results guarantee the existence of solutions with no mass inflation (see the figure above), as well as the existence of extensions beyond the MGHD with Christoffel symbols in $L^2_{loc}$. Note that the large charge limit can easily be obtained by picking a large cosmological horizon radius and then choosing the Cauchy horizon radius suitably close to the event horizon radius; these choices are in loose agreement with what one expects from the parameters of some astrophysical black holes.

It is interesting to note that the extendibility identified in~\cite{cardosoRNdS} occurs for near extremal black holes, where the blueshift effect is weaker. This feature can be compared with the fully nonlinear results in~\cite{GajicLuk}, where it is proved that one can indeed extend solutions of the Einstein-Maxwell-scalar field system across the Cauchy horizon of extremal black holes for $\Lambda=0$. The absence of blueshift in this case suggests that a similar result should hold for $\Lambda>0$. Similarly, one can expect the Cauchy horizon stability in the non-spherically symmetric setting, recently proved in~\cite{DafermosLuk} for $\Lambda = 0$, to remain true for $\Lambda>0$. Moreover, it is likely that these latter solutions can also be extended across the Cauchy horizon for near extremal  (i.e. rapidly rotating) black holes.

In conclusion, our results indicate that, with our current knowledge, the validity of the SCCC in the presence of a positive cosmological constant does not stand on firm ground. Nonetheless, the final verdict will only become clear once a precise quantitative understanding of the exponential Price law for $\Lambda>0$, in the full non-linear setting,  is achieved.

\subsection{Technical overview}

Introducing an exponential Price law creates new difficulties when compared to simply prescribing Reissner-Nordstr\"{o}m data along the event horizon, as in \cite{relIst2, relIst3}. We now summarize the main new technical features of the present work.

As for any characteristic initial value problem, our initial data is constrained by the evolution equations, and thus cannot be freely chosen (see Section~4 of \cite{relIst1}). Solving these nonlinear constraint equations while at the same time guaranteeing that our data describes a dynamical event horizon along which the scalar field decays at a prescribed rate is a non-trivial task; this problem is solved in Section~\ref{section3}. We use the radial derivative of the renormalized Hawking mass on the outgoing direction as the pivotal free function from which all the remaining quantities can then be constructed. In addition to exponentially decaying initial data, we produce sets of initial data with different types of decay, including the polynomial case studied by Dafermos \cite{Dafermos2}.

An important qualitative feature of the dynamics in the interior of the black hole is the celebrated redshift effect, characterized by an exponential decay of the form $e^{-2k_+v}$. This is the fastest decay that can be expected to be carried over by the evolution from the event horizon towards the Cauchy horizon. As exemplified in \cite{Dafermos2}, when the decay of the initial data is slower than exponential then it overwhelms the redshift effect and mass inflation always occurs. On the other hand, for faster than exponential decaying initial data the redshift effect dominates, and mass inflation may not occur, as was found in \cite{relIst3}. Therefore, initial data asymptotic to $e^{-sk_+v}$ constitutes the most interesting case, as it allows for a detailed analysis around the threshold value $s=2$, and is the only case that we will pursue.

Contrary to what happens in the Reissner-Nordstr\"{o}m solution, the interior of the black hole solutions that we are now considering does not coincide with the trapped region. In particular, an apparent horizon forms, whose asymptotic geometry must be understood, mostly by soft arguments, before proceeding to study the solution in greater detail (Section~\ref{section3.5}). To precisely estimate all relevant quantities in the region ${\cal P}_{\lambda}$ that lies in the past of the apparent horizon, as well as the region in its immediate future, we develop a two-dimensional version of Gr\"onwall's inequality adapted to this problem (Sections~\ref{section4} and \ref{section5}). As the Cauchy horizon is approached, the analysis becomes akin to that in~\cite{relIst2, relIst3} (Sections~\ref{section6} to \ref{section10}).

In this work we are not able to exclude the existence of non-trivial solutions whose radius function and renormalized Hawking mass are constant along the Cauchy horizon\footnote{In fact we conjecture that such solutions do exist.} (in ~\cite{relIst2, relIst3} such behavior was ruled out simply by assuming a non-zero ingoing perturbation). This creates new difficulties when analyzing solutions with no mass inflation and trying to construct extensions with Christoffel symbols in $L^2_{{\rm loc}}$ beyond the Cauchy horizon, which are averted by introducing a novel change of coordinates (see Section~\ref{section11}).   

\section{The spherically symmetric Einstein-Maxwell-scalar field \newline equations as a first order system}

To write the Einstein equations as a first order system of PDE we define the following quantities:
\begin{equation}\label{nu_0}
\nu:=\partial_u r,
\end{equation}
\begin{equation}\label{lambda_0}
\lambda:=\partial_v r,
\end{equation}
\begin{equation}\label{bar_rafaeli} 
\varpi:=\frac{e^2}{2r}+\frac{r}{2}-\frac{\Lambda}{6}r^3+\frac{2r}{\Omega^2}\nu\lambda,
\end{equation}
\begin{equation}\label{mu} 
\mu:=\truemu,
\end{equation}
\begin{equation}\label{theta} 
\theta:=r\partial_v\phi,
\end{equation}
\begin{equation}\label{zeta} 
\zeta:=r\partial_u\phi
\end{equation}
and
\begin{equation}\label{kappa_0} 
 \kappa:=-\frac{\Omega^2}{4\nu}.
\end{equation}
From~\eqref{bar_rafaeli} we obtain
\begin{equation}
 \lambda = \kappa (1-\mu).
\end{equation}
It is easy to see that
\[
1-\mu = g(\nabla r,\nabla r). 
\]
Therefore $\varpi$, like $r$, is a geometric quantity: it is called the renormalized Hawking mass.
Note that $(1-\mu)$ depends on $(u,v)$ only through $(r,\varpi)$. In what follows, in a slight abuse of notation, we will interchangeably regard $(1-\mu)$ as a function of either pair of variables, with the meaning being clear from the context.

The Einstein equations imply (see \cite{relIst1}) the following first order system for the variables $(r,\nu,\lambda,\varpi,\theta,\zeta,\kappa)$:
\begin{eqnarray} 
 \partial_ur&=&\nu\label{r_u},\\
 \partial_vr&=&\lambda\label{r_v},\\
 \partial_u\lambda&=&\nu\kappa\partial_r(1-\mu)\label{lambda_u},\\
 \partial_v\nu&=&\nu\kappa\partial_r(1-\mu),\label{nu_v}\\
 \partial_u\varpi&=&\frac 12(1-\mu)\left(\frac\zeta\nu\right)^2\nu,\label{omega_u}\\
 \partial_v\varpi&=&\frac 12\frac{\theta^2}{\kappa},\label{omega_v}\\
 \partial_u\theta&=&-\,\frac{\zeta\lambda}{r},\label{theta_u}\\
 \partial_v\zeta&=&-\,\frac{\theta\nu}{r},\label{zeta_v}\\
 \partial_u\kappa&=&\kappa\nu\frac 1r\left(\frac{\zeta}{\nu}\right)^2,\label{kappa_u}
\end{eqnarray}
with the restriction
\begin{equation}\label{kappa_at_u} 
\lambda=\kappa(1-\mu).
\end{equation}
From~(\ref{r_v}), (\ref{nu_v}), (\ref{omega_v}) and (\ref{kappa_at_u}) we obtain
\begin{equation}\label{ray_v_bis} 
\partial_v\left(\frac{\nu}{1-\mu}\right)=\frac{\nu}{1-\mu}\left(\frac{\theta}{\lambda}\right)^2\frac{\lambda}{r},
\end{equation}
which can also be written as
\begin{equation}\label{ray_v_dois} 
\partial_v\left(-\frac{\lambda}{\kappa\nu}\right)=\frac{\theta^2}{\kappa\nu r}.
\end{equation}
This is the Raychaudhury equation~\eqref{r_vv} written in these variables. The Raychaudhury equation~\eqref{r_uu} corresponds to~\eqref{kappa_u}.

Existence and uniqueness for the characteristic initial value problem associated to the first order system~\eqref{r_u}$-$\eqref{kappa_at_u}, as well as a continuation criterion, were studied in~\cite{relIst1}. There it was also shown that, under appropriate regularity conditions for the initial data, this system implies the Einstein equations. We shall therefore study the spherically symmetric Einstein-Maxwell-scalar field system in this framework.

\section{Initial conditions and behavior at the event horizon}\label{section3}

We wish to study the interior of a black hole of finite mass arising from gravitational collapse. In order to do that, we consider a coordinate system $(u,v)$ such that $u=0$ corresponds to the event horizon and $v$ increases along the outgoing null direction. To eliminate the remaining gauge freedom in the choice of coordinates we set
\begin{equation}\label{iu}
\begin{cases}
\nu(u,0)=-1,\\
\kappa(0,v)=1.
\end{cases}
\end{equation}
For this choice of $\kappa(0,v)$, geodesic completeness of the event horizon requires that the $v$ coordinate takes values in $\left[0,\infty\right[$. We assume that the coordinate $u$ takes values in  $\left[0,U\right]$, with $U>0$ to be chosen. 

Integration of equations~\eqref{nu_v} and~\eqref{kappa_u} with initial conditions \eqref{iu} implies $\kappa>0$ and $\nu<0$ over the whole solution domain. From~\eqref{ray_v_dois} we then have
\[
\partial_v\left(-\frac{\lambda}{\kappa\nu}\right) \leq 0.
\]
A simple consequence of the Mean Value Theorem yields the following result.
\begin{Lem}\label{sign}
Let $u\in[0,U]$.
\begin{enumerate}[{\rm(i)}]
\item
If $\lambda(u,\bar v)=0$ then $\lambda(u,v)\leq 0$ for all $v>\bar v$.
\item
If $\lambda(u,\bar v)<0$ then $\lambda(u,v)< 0$ for all $v>\bar v$.
\end{enumerate}
\end{Lem}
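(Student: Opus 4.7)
The plan is to exploit the monotonicity inequality $\partial_v\!\left(-\lambda/(\kappa\nu)\right)\leq 0$ already derived in the paragraph preceding the lemma, together with the sign information $\kappa>0$ and $\nu<0$ obtained from integrating \eqref{nu_v} and \eqref{kappa_u} against the gauge choices \eqref{iu}.

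First, I would introduce the auxiliary function
\[
g(u,v):=-\,\frac{\lambda(u,v)}{\kappa(u,v)\,\nu(u,v)}.
\]
Because $\kappa>0$ and $\nu<0$ throughout the solution domain, the prefactor $-1/(\kappa\nu)$ is strictly positive; hence $g$ has the same sign as $\lambda$ and vanishes exactly when $\lambda$ vanishes. From \eqref{ray_v_dois},
\[
\partial_v g \;=\; \frac{\theta^{2}}{\kappa\,\nu\, r}\;\leq\;0,
\]
since $\kappa>0$, $\nu<0$, $r>0$ and $\theta^{2}\geq 0$. So for each fixed $u\in[0,U]$, the map $v\mapsto g(u,v)$ is non-increasing.

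For part (i), if $\lambda(u,\bar v)=0$ then $g(u,\bar v)=0$, and the Mean Value Theorem applied to $g(u,\cdot)$ on $[\bar v,v]$ gives $g(u,v)-g(u,\bar v)=\partial_v g(u,v^{\ast})(v-\bar v)\leq 0$, so $g(u,v)\leq 0$. Multiplying by the positive factor $-\kappa(u,v)\nu(u,v)$ yields $\lambda(u,v)\leq 0$. For part (ii), if $\lambda(u,\bar v)<0$ then $g(u,\bar v)<0$, and the same monotonicity gives $g(u,v)\leq g(u,\bar v)<0$; multiplying again by the positive factor $-\kappa(u,v)\nu(u,v)$ preserves the strict inequality and yields $\lambda(u,v)<0$.

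There is no serious obstacle here; the only thing to double-check is that $\kappa$ and $\nu$ retain their signs on the entire domain on which the statement is made, but this is exactly what the text establishes just before the lemma by integrating the ODEs \eqref{nu_v} and \eqref{kappa_u} in $v$ and $u$ respectively, starting from the initial conditions \eqref{iu}. No additional regularity beyond the existence of $\partial_v g$ is needed, which is why the author labels this as a simple consequence of the Mean Value Theorem.
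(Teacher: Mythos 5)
Your proof is correct and follows essentially the same route the paper intends: the monotonicity $\partial_v\bigl(-\lambda/(\kappa\nu)\bigr)\leq 0$ from \eqref{ray_v_dois}, combined with $\kappa>0$, $\nu<0$ (so that $-\lambda/(\kappa\nu)$ carries the sign of $\lambda$), immediately propagates the sign of $\lambda$ forward in $v$, which is exactly the "simple consequence of the Mean Value Theorem" the paper alludes to.
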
 

Hawking's area theorem implies that $\lambda$ must be nonnegative over the event horizon.
In previous papers (see \cite{relIst1, relIst2, relIst3}) we considered the case where $\lambda(0,v)\equiv 0$.
The case where $\lambda$ starts out positive and then becomes identically zero can be reduced
to the one of the previous papers by fixing a new origin for the $v$ axis.
In view of the previous lemma, the only case that remains to be studied is the one where $\lambda$ is strictly positive over the event
horizon.

Under the previous hypotheses, equations \eqref{r_v} and \eqref{omega_v} imply that $r$ and $\varpi$ increase along the event horizon. To be consistent with the usual picture of gravitational collapse, we assume that the limits
\begin{equation}\label{rplus}
r(0,\infty)=r_+
\end{equation} 
and 
\begin{equation}\label{omegazero}
\varpi(0,\infty)=\varpi_0
\end{equation}
are finite,\footnote{In fact, this necessarily happens in the case $\Lambda>0$, as $\lambda=1-\mu$ must remain nonnegative along the event horizon, that is, $1-\frac{2\varpi}{r}+\frac{e^2}{r^2}\geq\frac{\Lambda}{3}r^2$.} which can be interpreted as data asymptotically converging to a Reissner-Nordstr\"{o}m black hole with (constant) renormalized Hawking mass $\varpi_0$ and (constant) event horizon radius $r_+$.
By the Mean Value Theorem, there exists a sequence $v_n\nearrow\infty$ such that $\lambda(0,v_n)\to 0$, and so, from \eqref{kappa_at_u} and \eqref{iu}, $(1-\mu)(0,v_n)\to 0$.
Equation~\eqref{mu}, together with the fact that $r(0,\,\cdot\,)$ and $\varpi(0,\,\cdot\,)$
have limits at infinity, implies $(1-\mu)(0,\,\cdot\,)$ has a limit at infinity.
We conclude that $r_+$ and $\varpi_0$ cannot be chosen arbitrarily, as they must satisfy
\begin{equation}\label{1mmu=0}
(1-\mu)(r_+,\varpi_0)=0.
\end{equation}
Moreover, we assume that the black hole is asymptotically non-extremal, that is,
$$
\kk:=\frac12\partial_r(1-\mu)(r_+,\varpi_0)>0.
$$
As is well known, this quantity is called the surface gravity of the event horizon for the Reissner-Nordstr\"{o}m black hole with parameters $r_+$ and $\varpi_0$.

As explained in \cite{relIst1}, for a given choice of coordinates the initial data for the characteristic initial value problem consists of two free functions, one along the ingoing null segment $v=0$ and the other along the event horizon $u=0$. On the ingoing null segment we can freely specify $\zeta(u,0)$, but on the event horizon the functions $\lambda(0,v)$, $\varpi(0,v)$ and $\theta(0,v)$ are interrelated through \eqref{omega_v} and \eqref{kappa_at_u}. Because of these constraints, as well as \eqref{rplus} and \eqref{omegazero}, it turns out that the simplest approach is to start by choosing $\varpi$ as a function of $r$ along the event horizon. The functions $\lambda(0,v)$ and $\theta(0,v)$ can then be obtained from \eqref{kappa_at_u} and \eqref{omega_v}, respectively, making sure in the end that $\lambda(0,v) > 0$. We will now describe this procedure in detail.

Since we assume that $\lambda(0,v)$ is strictly positive, $r(0,v)$ is a strictly increasing function of $v$, and so it may be used as a coordinate along the event horizon.
Accordingly, we will write a hat over a function to mean that it is written in terms of this new coordinate.

As explained above, we will start by specifying $\hat\varpi(r)$, but the restriction \eqref{omegazero} means that the true free function is its derivative $\hat \varpi'(r)$, which we will prescribe as a continuous integrable function $\hat f:\,]0,r_+[\,\to\R^+_0$, so that
\begin{equation}\label{omega_eh}
\hat{\varpi}(r)=\varpi_0-\int_r^{r_+}\hat f(\tilde r)\,d\tilde r.
\end{equation}

In terms of the $r$ coordinate, \eqref{iu} becomes
\begin{equation}\label{kappa_eh}
\hat\kappa(r)\equiv 1,
\end{equation}
from which~\eqref{kappa_at_u} implies
\begin{equation}\label{lambda_eh}
\hat{\lambda}(r)=\widehat{(1-\mu)}(r)=1-\frac{2\hat{\varpi}(r)}{r}+\frac{e^2}{r^2}-\frac{\Lambda}{3}r^2.
\end{equation}
From the discussion that leads to \eqref{1mmu=0} it is also clear that
\begin{equation} \label{lambdagoestozero}
\lim_{r\to r_+}\hat{\lambda}(r)=\lim_{r\to r_+}\widehat{(1-\mu)}(r)=0.
\end{equation}
Moreover,
$$
\hat{\lambda}'(r)=\partial_r(1-\mu)(r,\hat\varpi(r))-\frac{2\hat{\varpi}'(r)}{r}.
$$
We now make the extra assumption that\footnote{This is equivalent to assuming that $\lim_{v \to \infty} \frac{\theta^2}{\lambda}(0,v)=2A$.}
$$\lim_{r\to r_+}\hat f(r) = A.
$$
Then we have
\begin{equation}\label{lambda_d_eh}
\lim_{r\to r_+}\hat{\lambda}'(r)=\partial_r(1-\mu)(r_+,\varpi_0)-\frac{2A}{r_+} = \frac{2}{r_+}(r_+\kk - A).
\end{equation}
In view of \eqref{lambdagoestozero}, we see that a necessary condition for $\hat{\lambda}$ to be positive in a left
neighborhood of $r_+$ is
\begin{equation}
\label{A}
A\in\left[ r_+ \kk,\infty\right].
\end{equation}  

We define $\omega:\,]0,r_+]\to\R$ (not to be confused with $\varpi$) by
$$
\omega(r)=\frac r2\left(1+\frac{e^2}{r^2}-\,\frac{\Lambda}{3}r^2\right).
$$
This is the value of the mass that would make $(1-\mu)$ vanish at $r$,
\begin{equation}\label{n=1}
(1-\mu)(r,\omega(r))=0.
\end{equation} 
In particular $\omega(r)>\varpi(r)$ for $r<r_+$ and $\omega(r_+) = \varpi_0$.
We have
$$
\omega'(r)=\frac{1}{2}\left(1-\frac{e^2}{r^2}-\Lambda r^2\right),
$$
which can also be written, by differentiating \eqref{n=1}, as
\begin{equation}\label{omega-linha}
\omega'(r)=-\,\frac{\partial_r(1-\mu)(r,\omega(r))}
{\partial_\varpi(1-\mu)(r,\omega(r))}=\frac{r}{2}\partial_r(1-\mu)(r,\omega(r)).
\end{equation}
Moreover, it is easy to check that
$$
\omega(r)=\frac{r}{2}(1-\mu)(r,\varpi_0)+\varpi_0,
$$
and so, for $n\geq 1$,
\begin{equation}\label{particular}
\omega^{(n)}(r_+)=\left.
\partial^n_r\left(\frac{r}{2}(1-\mu)(r,\varpi_0)\right)\right|_{r=r_+}.
\end{equation}

In the case when $A=r_+\kk=\omega'(r_+)$,
we will also assume that
\begin{equation}
\label{f-A}
\hat{f}(r)>\frac{1}{2}\Bigl(1-\frac{e^2}{r^2}-\Lambda r^2\Bigr)=
\omega'(r)=
\frac{r}{2}\partial_r(1-\mu)(r,\omega(r))
\end{equation}
for $r$ in a left neighborhood of $r_+$.
We claim that this assumption
guarantees that $\hat\lambda$ is positive in a left neighborhood of $r_+$.
Indeed, from~\eqref{f-A}, we have, for $r<r_+$,
\begin{eqnarray*}
\hat\lambda'(r)&=&\widehat{(1-\mu)}'(r) \\
&=&\frac{2}{r^2}
\left(\varpi_0-\int_r^{r_+}\hat{f}(\tilde r)\,d\tilde r\right) 
-\,\frac{2}{r}\hat{f}(r)-\,\frac{2e^2}{r^3}-\,\frac{2\Lambda}3r\\
&<&\frac{2}{r^2}
\left(\varpi_0-\int_r^{r_+}\omega'(\tilde r)\,d\tilde r\right)
-\partial_r(1-\mu)(r,\omega(r))-\,\frac{2e^2}{r^3}-\,\frac{2\Lambda}3r\\
&=&\frac{2}{r^2}
\left(\frac{r}{2}+\frac{e^2}{2r}-\,\frac{\Lambda}{6}r^3\right) 
-\,\frac{1}{r}\left(1-\frac{e^2}{r^2}-\Lambda r^2\right)-\,\frac{2e^2}{r^3}-\,\frac{2\Lambda}3r\\
&=&0.
\end{eqnarray*}
This proves our claim.

In the case when $A>r_+\kk$,
 $\hat\lambda$ is positive in a left neighborhood of $r_+$ because
 $\lim_{r\to r_+}\hat\lambda'(r)<0$.

We now list the most relevant choices of $\hat f$ satisfying the hypotheses above, noting in particular that when $A=r_+\kk$ the function $\hat{f}$ is chosen to be the sum of a Taylor polynomial of $\omega'$ at $r_+$ with a term that ensures $\hat{f}>\omega'$ in a left neighborhood of $r_+$.

\begin{Hyp}[on $\hat f$]\label{hyp}
The function $\hat{f}:\,]0,r_+[\,\to\R^+_0$ is continuous, integrable and has
limit $A\in\left[r_+\kk,\infty\right]$ as $r \to r_+$.
In addition, in a left neighborhood of $r_+$ one of the five following alternatives holds:\footnote{We could work with other assumptions. 
But if $\hat{f}$ blows up too fast at $r_+$, for example,
$\hat{f}(r)\sim C(r_+-r)^\alpha$, with $-1<\alpha<0$, it can be proved that we are
led to an incomplete event horizon.}
\begin{enumerate}[{\rm (i)}]
\item $A=\infty$ and there exist $c,C>0$ such that
$$
-c\ln(r_+-r)\leq\hat{f}(r)\leq -C\ln(r_+-r).
$$
\item $r_+\kk<A<\infty$.
\item $A=r_+\kk$ and
there exist $c,C>0$ and $0<\alpha_1\leq\alpha_2<1$ such that
$$
\omega'(r_+)+c(r_+-r)^{\alpha_2}\leq
\hat{f}(r)\leq\omega'(r_+)+C(r_+-r)^{\alpha_1}.
$$
\item $A=r_+\kk$ and
there exist $c,C>0$ and $n\geq 1$ such that
\begin{eqnarray*}
&&\sum_{k=0}^n\frac{\omega^{(k+1)}}{k!}(r_+)(r-r_+)^k+c(r_+-r)^{n}\\
&&\qquad\qquad\qquad\qquad\qquad\qquad\leq\hat{f}(r)\leq\\
&&\qquad\qquad\qquad\qquad\qquad\qquad\sum_{k=0}^n
\frac{\omega^{(k+1)}}{k!}(r_+)(r-r_+)^k+C(r_+-r)^{n}.
\end{eqnarray*}
\item $A=r_+\kk$ and
there exist $c,C>0$, $n\geq 1$ and $n<\alpha_1\leq\alpha_2<n+1$ such that
\begin{eqnarray*}
&&\sum_{k=0}^n\frac{\omega^{(k+1)}}{k!}(r_+)(r-r_+)^k+c(r_+-r)^{\alpha_2}\\
&&\qquad\qquad\qquad\qquad\qquad\qquad\leq\hat{f}(r)\leq\\
&&\qquad\qquad\qquad\qquad\qquad\qquad\sum_{k=0}^n\frac{\omega^{(k+1)}}{k!}(r_+)(r-r_+)^k+C(r_+-r)^{\alpha_1}.
\end{eqnarray*}
\end{enumerate}
\end{Hyp}

Case~(iii) corresponds to~(v) with $n=0$. However, we consider case~(iii) separately because it is especially interesting, as it leads to the polynomial Price law studied by Dafermos in \cite{Dafermos2}. In fact, each case above yields a different type of Price law, as we will see in the remainder of this section. However, only case (ii) will be pursued in detail in the following sections, since it corresponds to the exponential Price law expected for a positive cosmological constant.

For each of the cases~(i) through~(iv) above, we now proceed to examine the behavior of the following functions along the event horizon:
\begin{enumerate}[{\bf (a)}]
\item $\hat{\lambda}$,
\item $\left|\frac{\hat{\theta}}{\hat{\lambda}}\right|$,
\item $r(0,\,\cdot\,)$,
\item $\lambda_0(\,\cdot\,):=\lambda(0,\,\cdot\,)$,
\item $\theta_0(\,\cdot\,):=\theta(0,\,\cdot\,)$.
\end{enumerate}
Case (v) is very similar to cases (iii) and (iv), and will not be treated explicitly.

\bigskip

\begin{center}
{\bf (a)} Estimates for $\hat{\lambda}$
\end{center}

The function $\hat{\lambda}$ is determined over the event horizon
using~\eqref{omega_eh} and~\eqref{lambda_eh}: 
\begin{equation}\label{quociente2}
\hat{\lambda}(r)=
\frac{2}{r}\left(
\frac{r}{2}(1-\mu)(r,\varpi_0)+\int_r^{r_+}\hat f(\tilde r)\,d\tilde r\right).
\end{equation}
In the cases where 
\begin{equation}
\label{obama}
\hat f(r)=\sum_{k=0}^n\frac{\omega^{(k+1)}}{k!}(r_+)(r-r_+)^k + \hat{e}(r),
\end{equation}
equation~\eqref{quociente2} can be written in the form
$$
\hat{\lambda}(r)=
\frac{2}{r}\left(
\frac{r}{2}(1-\mu)(r,\varpi_0)-\sum_{k=0}^n\frac{\omega^{(k+1)}}{(k+1)!}(r_+)(r-r_+)^{k+1}+\int_r^{r_+}\hat e(\tilde r)\,d\tilde r\right).
$$
Using~\eqref{particular}, we then have 
\begin{equation}
\label{quociente3}
\hat\lambda(r)=\frac{2}{r}\left(O((r_+-r)^{n+2})+
\int_r^{r_+}\hat e(\tilde r)\,d\tilde r\right)=(\widehat{1-\mu})(r).
\end{equation}

Let us denote by $\left]r_1,r_+\right[$ a left neighborhood of $r_+$ where one of the assumptions (i) through (iv) above holds and moreover $\hat{\lambda}=\widehat{(1-\mu)}$ is positive. In what follows the constants $c$ and $C$ will be as in Hypothesis~\ref{hyp}, and $\eps>0$ is a parameter that can be made arbitrarily small by choosing $r_1$ sufficiently close to $r_+$. 
\begin{enumerate}[(i)]
\item  
In this case, we note that
$$
\int_r^{r_+}-\ln(r_+-\tilde r)\,d\tilde r=-(r_+-r)\ln(r_+-r)+(r_+-r).
$$
So, using~\eqref{quociente2},
\begin{equation}\label{lambda-1}
-C_1(r_+-r)\ln(r_+-r)
\leq\hat{\lambda}(r)\leq -C_2(r_+-r)\ln(r_+-r),
\end{equation}
for $r \in \left]r_1,r_+\right[$. Here
$C_1=\frac{2c(1-\eps)}{r_+}$ and $C_2=\frac{2C(1+\eps)}{r_+}$.
\item This hypothesis implies
$$
\hat{f}(r)=A+o(1),
$$
as $r\to r_+$. Using~\eqref{quociente2}, we have
\begin{equation}\label{lambda-2}
C_1(r_+-r)\leq\hat \lambda(r)\leq C_2(r_+-r),
\end{equation}
for $r \in \left]r_1,r_+\right[$.
Here 
\begin{equation}\label{C1}
C_1=\left(\frac{2A}{r_+}-\partial_r(1-\mu)(r_+,\varpi_0)\right)(1-\eps)=\frac2{r_+}(A-r_+\kk)(1-\eps)
\end{equation}
and 
\begin{equation}\label{C2}
C_2=\left(\frac{2A}{r_+}-\partial_r(1-\mu)(r_+,\varpi_0)\right)(1+\eps)=\frac2{r_+}(A-r_+\kk)(1+\eps).
\end{equation}
\item In this case, using~\eqref{quociente3} with $n=0$, we obtain
\begin{equation}\label{lambda-3}
C_1(r_+-r)^{1+\alpha_2}
\leq\hat{\lambda}(r)\leq C_2(r_+-r)^{1+\alpha_1},
\end{equation}
for $r \in \left]r_1,r_+\right[$. Here
$C_1=\frac{2c(1-\eps)}{r_+(1+\alpha_2)}$ and $C_2=\frac{2C(1+\eps)}{r_+(1+\alpha_1)}$.
Recall that $0<\alpha_1\leq\alpha_2<1$.
\item Similarly to (iii), we have
\begin{equation}\label{lambda-6}
C_1(r_+-r)^{n+1}
\leq\hat{\lambda}(r)\leq C_2(r_+-r)^{n+1},
\end{equation}
for $r \in \left]r_1,r_+\right[$. Here
$C_1=\frac{2c(1-\eps)}{r_+(n+1)}$ and $C_2=\frac{2C(1+\eps)}{r_+(n+1)}$.
Recall that $n\geq 1$.
\end{enumerate}

\bigskip

\begin{center}
{\bf (b)} Estimates for $\bigl|\frac{\hat{\theta}}{\hat{\lambda}}\bigr|$
\end{center}

The quotient $\frac{\hat\theta}{\hat\lambda}$ 
is a continuous function such that
\begin{equation}\label{tl_eh}
\left(\frac{\hat\theta}{\hat\lambda}\right)^2=\frac{2\hat\varpi'}{\hat{\lambda}}=\frac{2\hat f}{\hat{\lambda}}
\end{equation}
(see \eqref{omega_v}), and so 
$$
\lim_{r\to r_+}\left(\frac{\hat\theta}{\hat\lambda}\right)^2(r)=\frac{2A}{0^+}=\infty.
$$
We now determine the rate of blow up of $\left|\frac{\hat\theta}{\hat\lambda}\right|$ at $r_+$ in each of the cases~(i) through (iv).

\begin{enumerate}[(i)]
\item 
Using~\eqref{tl_eh} and \eqref{lambda-1}, we get
$$
\frac{c_1}{(r_+-r)^{1/2}}\leq
\left|\frac{\hat\theta}{\hat\lambda}\right|(r)\leq
\frac{c_2}{(r_+-r)^{1/2}},
$$
for $r \in \left]r_1,r_+\right[$. Here
$c_1=\sqrt{\frac{2c}{C_2}}$ and $c_2=\sqrt{\frac{2C}{C_1}}$.
\item 
We have from~\eqref{tl_eh} and \eqref{lambda-2}
$$
\frac{c_1}{(r_+-r)^{1/2}}\leq
\left|\frac{\hat\theta}{\hat\lambda}\right|(r)
\leq\frac{c_2}{(r_+-r)^{1/2}},
$$
for $r \in \left]r_1,r_+\right[$. 
Here $c_1=\sqrt{\frac{2A}{C_2}}$ and $c_2=\sqrt{\frac{2A}{C_1}}$.
\item Using~\eqref{tl_eh} and \eqref{lambda-3},
we have
$$
\frac{c_1}{(r_+-r)^{\frac{1+\alpha_1}2}}
\leq\left|\frac{\hat\theta}{\hat\lambda}\right|(r)\leq
\frac{c_2}{(r_+-r)^{\frac{1+\alpha_2}2}},
$$
for $r \in \left]r_1,r_+\right[$. 
Here $c_1=\sqrt{\frac{2A}{C_2}}$ and $c_2=\sqrt{\frac{2A}{C_1}}$.
\item We have from~\eqref{tl_eh} and \eqref{lambda-6}
$$
\frac{c_1}{(r_+-r)^{\frac{n+1}2}}
\leq\left|\frac{\hat\theta}{\hat\lambda}\right|(r)\leq
\frac{c_2}{(r_+-r)^{\frac{n+1}2}},
$$
for $r$ in a left neighborhood of $r_+$. 
Here $c_1=\sqrt{\frac{2A}{C_2}}$ and $c_2=\sqrt{\frac{2A}{C_1}}$.
\end{enumerate}

\bigskip

\begin{center}
{\bf (c)} Estimates for $r(0,\,\cdot\,)$
\end{center}
Let us define
\begin{equation}
\label{r0_hat}
\hat r_0=\inf\{\hat r<r_+:\partial_r(1-\mu)(r,\hat\varpi(\hat r))>0\ {\rm for\ all}\ \hat r\leq r\leq r_+\}.
\end{equation}
(Note that $\hat r_0>r_0$, where $r_-<r_0<r_+$ is such that $\partial_r(1-\mu)(r_0,\varpi_0)=0$.)
Choose $r_2 \in\left]\max\{\hat r_0,r_1\},r_+\right[$.
To fix the $v$ coordinate we set $r(0,0)=r_2$, so that
\begin{equation}
\label{r-v}
v(r)=\int_{r_2}^r\frac{1}{\widehat{\partial_v r}(\tilde r)}\,d\tilde r
=\int_{r_2}^r\frac{1}{\hat{\lambda}(\tilde r)}\,d\tilde r.
\end{equation}
Whichever the case,
\eqref{lambda-1}, \eqref{lambda-2}, \eqref{lambda-3} or~\eqref{lambda-6},
 $1/\hat{\lambda}$
is not integrable in $]r_2,r_+[$, and so
$v(r_+)=\infty$.

Using~\eqref{r-v}, we can now determine the behavior of~$r$ as a function of~$v$ along the event horizon.
\begin{enumerate}[(i)]
\item 
In this case, we have
$$
(r_+-r_2)^{e^{C_2v}}\leq r_+-r(0,v)\leq(r_+-r_2)^{e^{C_1v}}.
$$
We can assume, without loss of generality, that $r_+-r_2<1$.
\item
Here, we obtain
$$
(r_+-r_2)e^{-C_2v}\leq r_+-r(0,v)\leq(r_+-r_2)e^{-C_1v}.
$$
\item
In this case, we have
$$
\frac 1{\left[\alpha_1C_2v+
\frac{1}{(r_+-r_2)^{\alpha_1}}\right]^{\frac{1}{\alpha_1}}}
\leq r_+-r(0,v)\leq
\frac 1{\left[\alpha_2C_1v+
\frac{1}{(r_+-r_2)^{\alpha_2}}\right]^{\frac{1}{\alpha_2}}}.
$$
Recall that $0<\alpha_1\leq\alpha_2<1$.
\item
Here, we obtain
$$
\frac 1{\left[nC_2v+
\frac{1}{(r_+-r_2)^{n}}\right]^{\frac{1}{n}}}
\leq r_+-r(0,v)\leq
\frac 1{\left[nC_1v+
\frac{1}{(r_+-r_2)^{n}}\right]^{\frac{1}{n}}}.
$$
Recall that $n\geq 1$.
\end{enumerate}

\bigskip

\begin{center}
{\bf (d)} Estimates for $\lambda_0$
\end{center}

The estimates for $r(0,v)$ obtained in~(c) now allow us to rewrite the bounds for $\hat{\lambda}$, determined in (a), as bounds for $\lambda_0$ in terms of~$v$.

\begin{enumerate}[(i)]
\item
In this case, we have
$$
C_1\ln\left({\textstyle\frac{1}{r_+-r_2}}\right)e^{C_1v}
\left({\textstyle\frac{1}{r_+-r_2}}\right)^{-e^{C_2v}}\leq 
\lambda_0(v)\leq C_2\ln\left({\textstyle\frac{1}{r_+-r_2}}\right)e^{C_2v}
\left({\textstyle\frac{1}{r_+-r_2}}\right)^{-e^{C_1v}}.
$$
Recall that $r_+-r_2<1$.
\item
Here, we obtain
\begin{equation}\label{l_ii}
C_1(r_+-r_2)e^{-C_2v}\leq\lambda_0(v)\leq C_2(r_+-r_2)e^{-C_1v}.
\end{equation}
\item
In this case, we have
\begin{equation}\label{l_iii}
\frac{C_1}{\left[\alpha_1C_2v+
\frac{1}{(r_+-r_2)^{\alpha_1}}\right]^{\frac{1+\alpha_2}{\alpha_1}}}
\leq
\lambda_0(v)\leq \frac{C_2}{\left[\alpha_2C_1v+
\frac{1}{(r_+-r_2)^{\alpha_2}}\right]^{\frac{1+\alpha_1}{\alpha_2}}}.
\end{equation}
Recall that $0<\alpha_1\leq\alpha_2<1$.
\item
Here, we obtain
$$
\frac{C_1}{\left[nC_2v+
\frac{1}{(r_+-r_2)^{n}}\right]^{\frac{n+1}{n}}}
\leq
\lambda_0(v)\leq \frac{C_2}{\left[nC_1v+
\frac{1}{(r_+-r_2)^{n}}\right]^{\frac{n+1}{n}}}.
$$
Recall that $n\geq 1$.
\end{enumerate}

\bigskip

\begin{center}
{\bf (e)} Estimates for $\theta_0$
\end{center}

Obviously, $\hat{\theta}$ is determined over the event horizon by
\begin{equation}
\label{t_eh}
\hat{\theta}(r)=\hat{\frac{\theta}{\lambda}}(r)\hat{\lambda}(r).
\end{equation}
Again, using the bounds in~(c) for $r(0,v)$, we may
bound $\theta_0(v)$ as follows.
\begin{enumerate}[(i)]
\item
In this case, we have
$$
c_1C_1\ln\left({\textstyle\frac{1}{r_+-r_2}}\right)e^{C_1v}
\left({\textstyle\frac{1}{r_+-r_2}}\right)^{-\,\frac{e^{C_2v}}{2}}\leq
|\theta_0|(v)\leq c_2C_2\ln\left({\textstyle\frac{1}{r_+-r_2}}\right)e^{C_2v}
\left({\textstyle\frac{1}{r_+-r_2}}\right)^{-\,\frac{e^{C_1v}}{2}}.
$$
Recall that $r_+-r_2<1$.
\item
Here, we obtain
\begin{equation}\label{theta_ii}
c_1C_1(r_+-r_2)^{\frac{1}{2}}e^{-\frac{C_2}{2}v}\leq|\theta_0|(v)\leq c_2C_2(r_+-r_2)^{\frac{1}{2}}e^{-\frac{C_1}{2}v}.
\end{equation}
\item
In this case, we have
\begin{equation}\label{theta_iii}
\frac{c_1C_1}{\left[\alpha_1C_2v+
\frac{1}{(r_+-r_2)^{\alpha_1}}
\right]^{\frac{1+\alpha_2+(\alpha_2-\alpha_1)}{2\alpha_1}}}
\leq
|\theta_0|(v)\leq \frac{c_2C_2}{\left[\alpha_2C_1v+
\frac{1}{(r_+-r_2)^{\alpha_2}}
\right]^{\frac{1+\alpha_1-(\alpha_2-\alpha_1)}{2\alpha_2}}}.
\end{equation}
Recall that $0<\alpha_1\leq\alpha_2<1$.
\item
Here, we obtain
$$
\frac{c_1C_1}{\left[nC_2v+
\frac{1}{(r_+-r_2)^{n}}\right]^{\frac{n+1}{2n}}}
\leq
|\theta_0|(v)\leq \frac{c_2C_2}{\left[nC_1v+
\frac{1}{(r_+-r_2)^{n}}\right]^{\frac{n+1}{2n}}}.
$$
Recall that $n\geq 1$.
\end{enumerate}

\vspace{6mm}

{\em Recap of the initial conditions}.
To finish this section on the initial conditions, let us summarize the procedure for constructing the initial data.
We start by prescribing the integrable function $\hat f$, which determines
$\hat\varpi$ by~\eqref{omega_eh}.
The $v$ coordinate is fixed by setting $\hat\kappa \equiv 1$,
which in turn yields $\hat\lambda$ by~\eqref{lambda_eh}.
The function $\hat\theta$ is obtained (up to a choice of sign) from~\eqref{tl_eh},
and $r$ is determined at the event horizon by~\eqref{r-v}.
To complete the definition of the initial data, we choose
\begin{equation}
\left\{
\begin{array}{lclcl}
 r(u,0)&=&r_2-u,&&\\
 \nu(u,0)&=&\nu_0(u)&\equiv&-1,\\
 \zeta(u,0)&=&\zeta_0(u),&&
\end{array}
\right.\qquad{\rm for}\ u\in[0,U],
\end{equation}
where $\zeta_0$ is a free continuous function.

\begin{Prop} Under the previous conditions,
if\/ $\hat{f}$ is chosen according to\/ {\rm Hypothesis~\ref{hyp}},
then $\hat\lambda$ is positive in $\left]r_2,r_+\right[$ and
$r(0,\,\cdot\,)$,
$\lambda_0(\,\cdot\,)=\lambda(0,\,\cdot\,)$ and
$\theta_0(\,\cdot\,)=\theta(0,\,\cdot\,)$
have the decay given in\/ {\rm\bf (c)}, {\rm\bf (d)} and\/ {\rm\bf (e)}, respectively.
\end{Prop}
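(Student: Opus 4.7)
The statement is essentially a consolidation of the computations carried out in paragraphs (a)--(e) immediately above; my plan is therefore to organize those ingredients into a clean argument.

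First, I would establish positivity of $\hat\lambda$ on $\left]r_2,r_+\right[$. Since $\hat\lambda(r_+)=0$ by \eqref{lambdagoestozero} and $\hat\lambda$ is continuous, it suffices to show that $\hat\lambda>0$ in some left neighborhood of $r_+$ and that $\hat\lambda$ has no zero in that neighborhood up to $r_2$. For $A>r_+\kk$, the computation \eqref{lambda_d_eh} gives $\lim_{r\to r_+}\hat\lambda'(r)<0$, so $\hat\lambda>0$ just to the left of $r_+$. For $A=r_+\kk$ (cases (iii), (iv), (v) of the hypothesis), the inequality \eqref{f-A} together with the chain of estimates leading to ``This proves our claim'' gives $\hat\lambda'(r)<0$, hence $\hat\lambda>0$, in a left neighborhood of $r_+$. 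For case (i), the lower bound on $\hat f$ combined with \eqref{quociente2} directly yields $\hat\lambda>0$ near $r_+$. Thus in each case we obtain an interval $\left]r_1,r_+\right[$ of positivity, and by choosing $r_2\in\left]\max\{\hat r_0,r_1\},r_+\right[$ as prescribed by \eqref{r0_hat}, we have $\hat\lambda>0$ on $\left]r_2,r_+\right[$.

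Next, the estimates in (a) for $\hat\lambda$ follow in each of the four cases by inserting the upper and lower bounds on $\hat f$ from Hypothesis~\ref{hyp} into \eqref{quociente2} (or into its rewritten form \eqref{quociente3} in the Taylor cases), using \eqref{particular} to cancel the polynomial part of $\frac{r}{2}(1-\mu)(r,\varpi_0)$ against the Taylor polynomial of $\omega'$, and evaluating the elementary integrals $\int_r^{r_+}(r_+-\tilde r)^\alpha d\tilde r$ and $\int_r^{r_+}-\ln(r_+-\tilde r)\,d\tilde r$. The estimate (b) for $\left|\hat\theta/\hat\lambda\right|$ then follows at once from the identity \eqref{tl_eh}, substituting the bounds on $\hat\lambda$ from (a) into the denominator and the asymptotics of $\hat f$ into the numerator.

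For the statements to be proven in the proposition, the decay of $r(0,\cdot)$ in (c) is obtained by viewing $v=v(r)$ as in \eqref{r-v} and substituting the bounds in (a) for $1/\hat\lambda$; an elementary integration yields $v(r)$ in closed form up to constants, and inversion gives the two-sided estimates for $r_+-r(0,v)$. The estimate (d) for $\lambda_0(v)=\hat\lambda(r(0,v))$ follows by composing the bounds in (a) with the estimates in (c). Finally, the estimate (e) for $\theta_0$ follows from \eqref{t_eh} by multiplying the bounds in (b) (composed with (c)) by those in (d). The main routine care is in keeping track of the constants $C_1,C_2,c_1,c_2$ and the parameter $\eps$, which can be made arbitrarily small by taking $r_1$, and hence $r_2$, sufficiently close to $r_+$; no new conceptual obstacle arises beyond the algebra already displayed in paragraphs (a)--(e).
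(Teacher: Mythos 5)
Your proposal is correct and takes essentially the same route as the paper, whose ``proof'' of this Proposition is precisely the material of paragraphs {\bf (a)}--{\bf (e)} together with the positivity argument (via \eqref{f-A} when $A=r_+\kk$, and via $\lim_{r\to r_+}\hat\lambda'(r)<0$ when $A>r_+\kk$) and the choice $r_2\in\left]\max\{\hat r_0,r_1\},r_+\right[$. Nothing essential is missing; your direct use of \eqref{quociente2} in case (i) and the observation that the bounds in (a) make $1/\hat\lambda$ non-integrable (so $v(r_+)=\infty$ and the inversion in (c) makes sense) are exactly the paper's steps.
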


\section{The apparent horizon}\label{section3.5}

Unlike what happens in the Reissner-N\"ordstrom solutions, or in the more general solutions studied in \cite{relIst1, relIst2, relIst3, Dafermos1}, the interiors of the black holes that we are now considering do not coincide with the trapped region, that is, the set of points where $\lambda < 0$. In fact, as a consequence of $\lambda>0$ on the event horizon, there will also exist a regular region ${\cal P}_\lambda$, where $\lambda \geq 0$. It will be shown in this section that the apparent horizon ${\cal A}$, that is, the set of points where $\lambda = 0$, is, in our domain, a $C^1$ curve that separates the regular from the trapped regions. Moreover, we will prove that this curve can be parametrized by $v\mapsto(u_\lambda(v),v)$, with $u_\lambda'\leq 0$ and $\frac{d}{dv}r(u_\lambda(v),v)\geq 0$. We will finish by sketching ${\cal A}$ and the curves where $r$ is constant.

From Theorem 4.4 in \cite{relIst1} we have 
\begin{Thm}\label{maximal} The characteristic initial value problem\/~\eqref{r_u}$-$\eqref{kappa_at_u}, with the initial 
conditions of the previous section
has a unique solution defined on a maximal past set $\mysigma$ containing $[0,\myumax]\times\{0\}\cup\{0\}\times[0,\infty[$.\end{Thm}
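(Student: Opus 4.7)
The plan is to establish local existence and uniqueness on small rectangles via a Picard-type fixed point argument, then extend to a maximal past set by exhaustion, using the uniqueness to glue local solutions.

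First, I would rewrite the first order system \eqref{r_u}--\eqref{kappa_at_u} as a coupled system of integral equations by integrating each equation along the appropriate characteristic direction. For example, integrating \eqref{r_u} and \eqref{lambda_u} in $u$, and \eqref{r_v}, \eqref{nu_v}, \eqref{zeta_v}, \eqref{kappa_u} in $v$, one converts the problem to a system of the form $\Phi = \Phi_0 + T[\Phi]$, where $\Phi=(r,\nu,\lambda,\varpi,\theta,\zeta,\kappa)$, $\Phi_0$ encodes the initial data on the null segments $\{v=0\}$ and $\{u=0\}$ (produced by the construction of Section~\ref{section3}), and $T$ is a nonlinear integral operator. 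The restriction \eqref{kappa_at_u} is then preserved under evolution (this is a standard computation from \eqref{r_v}, \eqref{nu_v}, \eqref{omega_v} and the definitions, and was verified in \cite{relIst1}).

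The local step is to show that for any point $(u_\star,v_\star)$ on the initial segments, one can find a small rectangle $R=[u_\star,u_\star+\delta_u]\times[v_\star,v_\star+\delta_v]$ on which $T$ is a contraction in the Banach space of continuous maps $\Phi:R\to\R^7$ endowed with the sup norm, restricted to the closed subset where $r$ stays bounded away from $0$, $\kappa$ stays bounded away from $0$, and $\nu$ stays bounded away from $0$ (these are open conditions at $(u_\star,v_\star)$ by the initial data and are preserved on a sufficiently small $R$). On such a set the coefficients $\partial_r(1-\mu)$, $1/r$, $1/\nu$, etc., are Lipschitz, so the contraction estimates reduce to routine bounds of the form $\|T[\Phi]-T[\tilde\Phi]\|\leq C\delta\|\Phi-\tilde\Phi\|$ for $\delta=\max\{\delta_u,\delta_v\}$ small. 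Banach's fixed point theorem then produces a unique continuous solution on $R$, and bootstrapping through the equations gives the required regularity. The signs $\kappa>0$ and $\nu<0$ are then propagated globally using \eqref{nu_v} and \eqref{kappa_u}, as already observed after \eqref{iu}.

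To obtain the maximal past set, I would define $\mysigma$ as the union of all past subsets $P\subset[0,U]\times[0,\infty[$ containing the initial segments on which a continuous solution to the integral system exists. The local step above shows $\mysigma$ is nonempty, and uniqueness of the fixed point implies that two solutions agree on the intersection of their domains; hence the solutions patch to a single solution on $\mysigma$. Past-set closure follows because solvability on any rectangle $[0,u]\times[0,v]$ is inherited by every sub-rectangle $[0,u']\times[0,v']$ with $u'\leq u$, $v'\leq v$. To verify that $\mysigma$ contains $[0,U]\times\{0\}\cup\{0\}\times[0,\infty[$, one applies the local existence result at each point of these segments; at points on $\{u=0\}$ with large $v$ one uses that the initial data on the event horizon stays in a compact range (since $r(0,v)\in[r_2,r_+[$, $\kappa(0,v)\equiv 1$, and $\lambda_0,\theta_0\to 0$ by the estimates \textbf{(d)} and \textbf{(e)}), so the local existence time $\delta_u$ can be chosen \emph{uniformly} in $v$, giving a uniform strip $[0,\delta_u]\times[0,\infty[\subset\mysigma$.

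The main obstacle is precisely this uniformity in $v$ at the event horizon: a naive local argument would give $\delta_u(v)\to 0$, which would be inconsistent with reaching the full segment $\{u=0\}\times[0,\infty[$. Overcoming it requires exploiting the equations to obtain a priori bounds on a slab of positive $u$-width: $\kappa$ is non-increasing in $u$ by \eqref{kappa_u}, so $\kappa\leq 1$ on any such slab; $\nu$ satisfies a linear ODE in $v$ along each $u={\rm const}$ from \eqref{nu_v}, which, combined with the control of $\kappa(1-\mu)=\lambda$ from the initial data and the Raychaudhuri equation \eqref{ray_v_dois}, yields $r$-independent bounds on $(\nu,\kappa,\lambda/\kappa)$ for small $u$. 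These bounds, together with the continuation criterion from \cite{relIst1}, show that the solution cannot break down on a thin strip adjacent to the event horizon, and hence that $\mysigma$ is indeed a past set containing the prescribed initial segments.
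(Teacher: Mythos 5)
The paper does not actually prove this theorem: it is imported verbatim as Theorem~4.4 of \cite{relIst1}, where the well-posedness of the first order system \eqref{r_u}$-$\eqref{kappa_at_u} (integral reformulation, contraction on small characteristic rectangles, propagation of the constraint \eqref{kappa_at_u}, gluing by uniqueness into a maximal past set, plus a continuation criterion) was carried out. Your first three paragraphs reproduce exactly that strategy, so on the existence/uniqueness/maximality core your proposal is the standard argument and matches the cited source.

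The issue is your final paragraph. The statement only asserts that $\mysigma$ \emph{contains the two initial segments} $[0,\myumax]\times\{0\}$ and $\{0\}\times[0,\infty[$; it does not assert that $\mysigma$ contains a uniform strip $[0,\delta_u]\times[0,\infty[$. For the stated claim, local existence at each point of the segments suffices, with $\delta_u(v)$ allowed to degenerate as $v\to\infty$: the union of the resulting rectangles with the initial segments is already a past set containing them. The uniform strip you try to establish is precisely the content of Theorem~\ref{stability_of_Cauchy_horizon} (stability of the Cauchy horizon), whose proof occupies Sections~\ref{section4} through~\ref{section8} of the paper and hinges on the quantitative redshift/blueshift analysis, the two-dimensional Gr\"onwall lemmas, and the control of $r$ away from zero across the apparent horizon and the curves $\Gamma_{\ckrp}$, $\Gamma_{\ckrm}$, $\gamma$. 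Your two-sentence justification (``$\kappa\leq 1$, a linear ODE for $\nu$, hence $r$-independent bounds and no breakdown on a thin strip'') does not deliver this: in particular it does not control $r$ away from $0$, nor $\nu$ or $1-\mu$ uniformly in $v$, which is exactly what the continuation criterion of \cite{relIst1} requires you to rule out. So either delete that paragraph as unnecessary for the theorem as stated, or recognize that what you are sketching there is a separate, substantially harder result that cannot be obtained by soft a priori bounds.
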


We denote by $\Gamma_{\ckr}$ the set where $r$ is equal to $\ckr$. Since $\nu<0$,
this set is a curve that can be parametrized by
$v\mapsto(\uckr(v),v)$. Obviously, 
\begin{equation}\label{r=r}
r(\uckr(v),v)=\ckr.
\end{equation}
Moreover, $r$ is $C^1$, so $\uckr$ is $C^1$. Differentiating both sides of~\eqref{r=r} with respect to $v$ yields
\begin{equation}\label{sobe}
\uckr'(v)=-\,\frac{\lambda(\uckr(v),v)}{\nu(\uckr(v),v)}.
\end{equation}

In Lemma~\ref{sign} we looked at the behavior of $\lambda$ along a line where $u$ is constant.
Now we look at the behavior of $\lambda$ along a curve $\Gamma_{\ckr}$.
\begin{Lem}\label{domain} Fix $\ckr\in\,]0,r_+[$.
If $\lambda(\uckr(\bar v),\bar v)=0$, then $\lambda(\uckr(v),v)\leq 0$ for all $v>\bar v$. In particular, $\uckr(\,\cdot\,)$ is defined in $\left[\bar v,\infty\right[$.
\end{Lem}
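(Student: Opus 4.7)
I would argue by contradiction, using \eqref{sobe}, which combined with $\nu<0$ on the whole domain makes the sign of $\uckr'(v)$ coincide with that of $\lambda(\uckr(v),v)$, together with Lemma~\ref{sign}.

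Suppose, for contradiction, that there exists $v_1>\bar v$ with $\lambda(\uckr(v_1),v_1)>0$, and set
\[
v_0:=\sup\bigl\{v\in[\bar v,v_1]\,:\,\lambda(\uckr(v),v)\leq 0\bigr\}.
\]
The hypothesis places $\bar v$ in this set, so $v_0\geq\bar v$; continuity of $v\mapsto\lambda(\uckr(v),v)$ makes the set closed and forces $v_0<v_1$. Hence $\lambda(\uckr(v_0),v_0)\leq 0$, while the definition of the supremum yields $\lambda(\uckr(v),v)>0$ for every $v\in(v_0,v_1]$; continuity then gives $\lambda(\uckr(v_0),v_0)=0$.

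Writing $u_0:=\uckr(v_0)$ and applying Lemma~\ref{sign}(i) along $\{u=u_0\}$, I get $\lambda(u_0,v)\leq 0$ for all $v\geq v_0$, and integrating \eqref{r_v} yields
\[
r(u_0,v)=\ckr+\int_{v_0}^v\lambda(u_0,v')\,dv'\leq\ckr,\qquad v\geq v_0.
\]
Because $\nu<0$, $u\mapsto r(u,v)$ is strictly decreasing, and the defining identity $r(\uckr(v),v)=\ckr$ together with the inequality above forces $\uckr(v)\leq u_0$ for every $v\in[v_0,v_1]$. On the other hand, \eqref{sobe} combined with $\nu<0$ and $\lambda(\uckr(v),v)>0$ for $v$ slightly above $v_0$ gives $\uckr'(v)>0$ on a right-neighborhood of $v_0$, hence $\uckr(v)>u_0$ there. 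This contradicts the bound $\uckr(v)\leq u_0$, so $\lambda(\uckr(v),v)\leq 0$ for all $v>\bar v$ in the domain.

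For the second assertion, the inequality just proved combined with \eqref{sobe} and $\nu<0$ gives $\uckr'(v)\leq 0$, so $\uckr$ is non-increasing and stays in $[0,\uckr(\bar v)]$. Since $r\equiv\ckr>0$ along $\Gamma_{\ckr}$, no degeneracy of the equations can occur on this curve, and the past-set property of $\mysigma$ from Theorem~\ref{maximal}, together with $\{0\}\times[0,\infty[\,\subset\mysigma$, ensures that $\Gamma_{\ckr}\cap\{v\geq\bar v\}$ remains in $\mysigma$ up to $v=\infty$. The only genuinely delicate point I foresee is this last extension step; the rest is a straightforward monotonicity contradiction built on Lemma~\ref{sign} and the monotonicity of $r$ in $u$.
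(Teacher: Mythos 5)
Your proof of the main inequality is correct and is essentially the paper's own argument: you locate the last value $v_0$ at which $\lambda\leq 0$ along $\Gamma_{\ckr}$, apply Lemma~\ref{sign} along $\{u=u_0\}$ and the monotonicity of $r$ in $u$ to force $\uckr(v)\leq u_0$, and contradict this with $\uckr'>0$ coming from~\eqref{sobe}. The paper runs the very same contradiction, only phrased in terms of radii ($\ckr\geq r(\uckr(\underline v),v)>r(\uckr(v),v)=\ckr$) rather than $u$-coordinates, so there is nothing to add there.

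The gap is in the final extension step, exactly where you flag it. The past-set property of $\mysigma$ together with $\{0\}\times[0,\infty[\,\subset\mysigma$ gives nothing: the causal past of a point $(0,v)$ of the event horizon is contained in $\{0\}\times[0,v]$, so knowing that the event horizon lies in $\mysigma$ carries no information about whether points of $\Gamma_{\ckr}$ with $u>0$ and large $v$ belong to $\mysigma$. Similarly, ``no degeneracy of the equations along the curve'' is not by itself an argument; the ingredient actually needed, and the one the paper invokes, is the breakdown criterion of~\cite{relIst1}: $r$ tends to zero on the boundary of $\mysigma$ inside $[0,\myumax]\times[0,\infty[$. Since $r\equiv\ckr>0$ along $\Gamma_{\ckr}$, the curve cannot terminate on that boundary. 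One must also exclude termination at the event horizon, which follows because $\lambda$ is strictly positive there while, by the inequality you have just proved, $\lambda\leq 0$ along the curve for $v>\bar v$; and the curve cannot exit through $u=\myumax$ because $\uckr'\leq 0$. With these three exclusions --- the first of which genuinely requires the quoted result from~\cite{relIst1} --- one obtains that $\uckr(\,\cdot\,)$ is defined on all of $\left[\bar v,\infty\right[$.
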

\begin{proof}
Let $\bar v$ be such that $\lambda(\uckr(\bar v),\bar v)=0$.
Suppose there exists $v>\bar v$ such that $\lambda(\uckr(v),v)>0$.
Define $\underline{v}=\inf\{\hat{v}:
\lambda(\uckr(\tilde{v}),\tilde{v})>0\ {\rm for\ all}\ \tilde{v}\in\,
]\hat{v},v]\}$. Clearly, $\underline{v}\geq\bar{v}$ and
$\lambda(\uckr(\underline{v}),\underline{v})=0$.
Equality~\eqref{sobe} shows that $\uckr'(\tilde v)>0$ for all
$\tilde{v}\in\,]\underline{v},v]$. 
Thus, $\uckr(\underline{v})<\uckr(v)$.
According to Lemma~\ref{sign},
$\lambda(\uckr(\underline{v}),\tilde v)\leq 0$ for all 
$\tilde{v}\in\,]\underline{v},v]$. This implies
$\ckr\geq r(\uckr(\underline{v}),v)$. Since
$\uckr(\underline{v})<\uckr(v)$ and $\nu<0$, we have
$r(\uckr(\underline{v}),v)>r(\uckr(v),v)$.
However, $r(\uckr(v),v)=\ckr$.
Hence we reached the contradiction $\ckr>\ckr$.

Since $\uckr'(v) \leq 0$ for $v \geq \bar v$, we have $\uckr(v)\leq \uckr(\bar v) < U$. Therefore, only two possibilities could occur to prevent $\uckr(\,\cdot\,)$ from being defined in $\left[\bar v,\infty\right[$: either the curve $\Gamma_{\ckr}$ reaches the event horizon, or the boundary of $\mysigma$ in $[0,\myumax]\times[0,\infty[$. However, the first possibility is excluded because $\lambda$ is strictly positive over the event horizon, and the second by the fact that $r$ goes to zero on the boundary of $\mysigma$ (see \cite{relIst1}).
\end{proof}
Consider 
$$
\bar v=\inf\{v\geq 0:\lambda(\uckr(v),v)\leq 0\}.
$$

\begin{Cor}
One of the following occurs:
\begin{enumerate}[{\rm (a)}]
\item $\bar v=\infty$ and so $\lambda(\uckr(v),v)$ is positive for all $v$.
\item $\bar v\in\R^+_0$ and so $\lambda(\uckr(v),v)$ is positive for 
$v<\bar v$ 
and is nonpositive for $v\geq\bar v$.
\end{enumerate} 
\end{Cor}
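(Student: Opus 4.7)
The plan is to deduce the dichotomy directly from the definition of $\bar v$ together with Lemma~\ref{domain}, using the continuity of $v\mapsto\lambda(\uckr(v),v)$, which follows from $\uckr\in C^1$ via \eqref{sobe} together with continuity of $\lambda$. If the set $S:=\{v\geq 0:\lambda(\uckr(v),v)\leq 0\}$ is empty then by convention $\bar v=\infty$, and $\lambda(\uckr(v),v)>0$ on all of $\Gamma_{\ckr}$, which is case~(a). Otherwise $\bar v\in\R^+_0$, and the definition of the infimum immediately yields $\lambda(\uckr(v),v)>0$ for every $v<\bar v$. So the whole content of case~(b) reduces to showing $\lambda(\uckr(v),v)\leq 0$ for $v\geq\bar v$.

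First I would treat the generic case $\bar v>0$. Taking the limit from the left, where $\lambda>0$, yields $\lambda(\uckr(\bar v),\bar v)\geq 0$. On the other hand, since $\bar v=\inf S$, one can either pick a sequence $v_n\in S$ with $v_n\searrow\bar v$, or use $\bar v\in S$ directly; continuity of $v\mapsto\lambda(\uckr(v),v)$ then forces $\lambda(\uckr(\bar v),\bar v)\leq 0$. Hence $\lambda(\uckr(\bar v),\bar v)=0$, and Lemma~\ref{domain} applied at $\bar v$ gives $\lambda(\uckr(v),v)\leq 0$ for all $v>\bar v$, while simultaneously ensuring that $\uckr$ is defined on the entire interval $[\bar v,\infty[$, so that the conclusion is non-vacuous.

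The remaining subtlety is the degenerate case $\bar v=0$. If $\lambda(\uckr(0),0)=0$, Lemma~\ref{domain} applies verbatim. If instead $\lambda(\uckr(0),0)<0$, I would argue by contradiction: assuming some $v_0>0$ satisfies $\lambda(\uckr(v_0),v_0)>0$, the intermediate value theorem applied to $v\mapsto\lambda(\uckr(v),v)$ on $[0,v_0]$ produces some $\tilde v\in(0,v_0)$ where this function vanishes, and then Lemma~\ref{domain} applied at $\tilde v$ forces $\lambda(\uckr(v_0),v_0)\leq 0$, a contradiction. The main technical point in this last branch is to confirm that $\Gamma_{\ckr}$ remains inside $\mysigma$ on the interval $[0,v_0]$ so that the intermediate value argument is meaningful; this is settled by exactly the trapping-of-the-curve reasoning used in the proof of Lemma~\ref{domain}, invoking $\uckr'=-\lambda/\nu$ together with the sign of $\nu$ and the fact that $r\to 0$ on $\partial\mysigma$. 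I expect no real obstacle beyond this bookkeeping, since the substantive dynamical input---that $\lambda$ cannot re-emerge as positive on $\Gamma_{\ckr}$ once it has touched zero---is already packaged in Lemma~\ref{domain}.
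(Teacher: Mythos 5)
Your argument is correct and matches the paper's intent: the Corollary is stated there without proof, as an immediate consequence of the definition of $\bar v$ together with Lemma~\ref{domain}, and your write-up simply fills in the continuity/infimum details (plus the harmless degenerate case $\bar v=0$ with $\lambda(\uckr(0),0)<0$). No gap to report.
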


From~\eqref{sobe}, in case (b) we have 
$$
\max\left\{\uckr(v):v\in\R^+_0\right\}=\uckr(\bar v).
$$

Recall that the initial data is prescribed so that $\lambda$ is strictly positive on $\{0\}\times[0,\infty[$.
We now choose $U$ sufficiently small so that $\lambda$ is positive on $[0,U]\times\{0\}$.
Let us define the set 
$$
{\cal P}_\lambda:=\{(u,v)\in[0,U]\times[0,\infty[\,:\lambda(u,v)\geq 0\}.
$$
By Lemma~\ref{sign}, if $(u,v) \in {\cal P}_\lambda$ then $\{u\}\times[0,v]\subset{\cal P}_\lambda$. This and the fact that $\nu(u,0)<0$ imply that $$r\geq r(U,0)\ \ {\rm on}\ {\cal P}_\lambda.$$

As noted at the beginning of Section~\ref{section3}, we have $\kappa>0$ and $\nu<0$ over the whole solution domain $\mathcal{P}$. Therefore, $1-\mu=\lambda/\kappa$ is positive on $[0,U]\times\{0\}$, and, by~\eqref{omega_u}, $\min\{\varpi(u,0):0\leq u\leq U\}=\varpi(U,0)$.
Since $\partial_v\varpi\geq 0$, $\varpi$ achieves its minimum at $(U,0)$:
$$
\varpi(U,0)\leq\varpi\ {\rm on}\ {\cal P}.
$$

Since $r(0,0)=r_2>\hat r_0$ (see~\eqref{r0_hat}), we have
$$
\min_{r\in [r_2,r_+]}{\partial_r(1-\mu)}(r,\hat\varpi(r_2))>0.
$$

By further reducing $U$, if necessary, we can make $r(U,0)$ sufficiently close to $r_2$, and therefore
$\varpi(U,0)$ sufficiently close to $\hat{\varpi}(r_2)$, so that
$$
\min_{r\in [r(U,0),r_+]}{\partial_r(1-\mu)}(r,\varpi(U,0))>0.
$$
Since $\partial_r(1-\mu)$ increases with $\varpi$ and $r<r_+$ on $\mathcal{P}$, we then have
$$
\min_{(u,v)\in{\cal P}_\lambda}\partial_r(1-\mu)(u,v)\geq\min_{r\in [r(U,0),r_+]}{\partial_r(1-\mu)}(r,\varpi(U,0))>0.
$$
In addition,
$$
\partial_r(1-\mu)(r(U,0),\varpi_0)>
\partial_r(1-\mu)(r(U,0),\varpi(U,0))>0,
$$ 
and so $r(U,0)>r_0$.

From equation~\eqref{lambda_u}, we have
\begin{equation}\label{dlmz}
\partial_u\lambda<0\ {\rm on}\ {\cal P}_\lambda.
\end{equation} 
Moreover, Lemma~\ref{sign} implies that if $(u,v)$ is such that $\lambda(u,v)=0$, then
$\partial_v\lambda(u,v)\leq 0$. From Lemma~6.1 in~\cite{relIst1} we know that $\lambda$ is $C^1$,
because $\nu_0$, $\kappa_0(\,\cdot\,)=\kappa(0,\,\cdot\,)$ and $\lambda_0$ are $C^1$. We therefore conclude that
if the set 
$$
{\cal A}:=\{(u,v)\in{\cal P}_\lambda:\lambda(u,v)=0\}
$$
is nonempty (as will be seen to be the case in the next section), then it is a $C^1$ manifold
which we can parametrize by $v\mapsto(u_\lambda(v),v)$ with
\begin{equation}\label{ulambdalinha}
u_\lambda'(v)=-\,
\frac{\partial_v\lambda(u_\lambda(v),v)}{\partial_u\lambda(u_\lambda(v),v)}\leq 0.
\end{equation}
To examine the behavior of $r$ along the curve where $\lambda=0$ we compute
\begin{eqnarray}
\frac{d}{dv}r(u_\lambda(v),v)&=&\nu(u_\lambda(v),v)u_\lambda'(v)+\lambda(u_\lambda(v),v)\nonumber\\
&=&\nu(u_\lambda(v),v)u_\lambda'(v)\geq 0.\label{ul}
\end{eqnarray}

Arguing as in the second half of the proof of Lemma~\ref{domain}, we conclude
that the domain of $u_\lambda$ is an interval of the form $[v_0,\infty[$, for some $v_0>0$.

Typically, the (thin) curves of
constant $r$ and the (thick) curve where $\lambda=0$
behave as in the following figure.

\begin{center}
\begin{turn}{45}
\begin{psfrags}
\psfrag{v}{\tiny $v$}
\includegraphics[scale=.7]{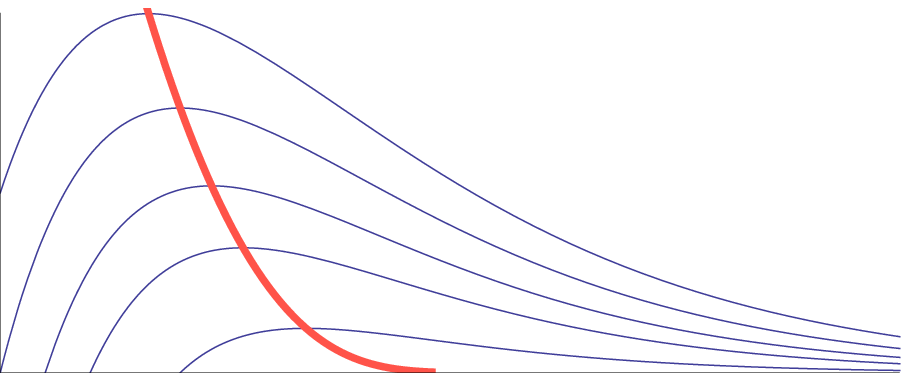}
\end{psfrags}
\end{turn}
\end{center}

Clearly, if a portion of a curve $\Gamma_r$ is parametrized by $v\mapsto(\tilde c,v)$ for $v\in I$, then $\lambda=0$ over that portion of curve. 
Conversely, equation~\eqref{ul} shows that if $r$ is constant over an interval $I$ along the curve
where $\lambda$ is zero, then the portion of this curve over $I$ is parametrized by $v\mapsto(\tilde c,v)$.

Hence, it is not excluded that a curve of constant $r$ and the curve where $\lambda=0$
could partially overlap (as is the case on the event horizon of the Reissner-Nordstr\"{o}m solution).
This is illustrated in the next figure, where again the thin line represents a curve of
constant $r$ and the thick line represents the curve where $\lambda=0$.

\begin{center}
\begin{turn}{45}
\begin{psfrags}
\psfrag{v}{\tiny $v$}
\includegraphics[scale=.7]{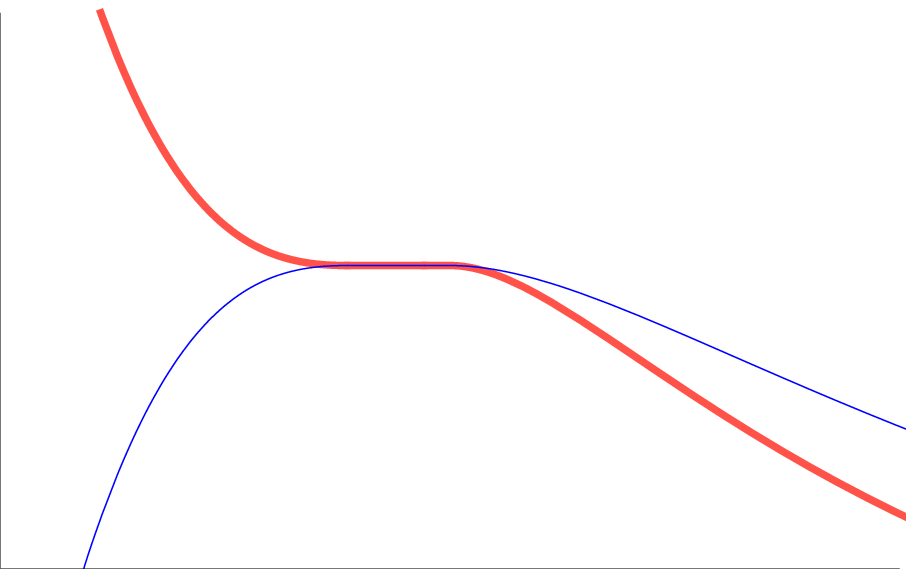}
\end{psfrags}
\end{turn}
\end{center} \label{pagina}

Equation~\eqref{ulambdalinha} implies that ${\cal P}_\lambda$ is a past set. Therefore,
since $\partial_u\varpi\leq 0$ in ${\cal P}_\lambda$, the supremum of $\varpi$ 
in ${\cal P}_\lambda$ is $\varpi_0$:
$$
\varpi\leq\varpi_0\ \ {\rm on}\ {\cal P}_\lambda.
$$

\section{Behavior of the solution on ${\cal P}_\lambda$} \label{section4}

In Hypothesis~\ref{hyp} we listed the main reasonable asymptotics for the renormalized Hawking mass along the event horizon. As was mentioned then, we will now focus exclusively on case~(ii), since it corresponds to the exponential Price law expected for a positive cosmological constant. Note from~\eqref{theta_ii} that in this case
\[
\int_0^\infty |\theta_0|(v)\,dv < \infty.
\]

In this section we will analyze in detail the behavior of the solution in the region ${\cal P}_\lambda$ between the event and the apparent horizons, where $\lambda$ is still nonnegative (in spite of being located inside the black hole), and so the monotonicity properties of the radius and mass functions are the same as outside the black hole. This region does not exist in the Reissner-Nordstr\"om solution, or in the solutions studied in \cite{relIst2,relIst3}, where the two horizons coincide. Therefore we must study the propagation of the decays of the main quantities from the event horizon to the apparent horizon, after which an analysis similar to what was done in \cite{relIst2,relIst3} can be performed. The most significant phenomenon influencing this propagation is the redshift effect.

The estimates in this section depend on accurately controlling the growth of $\frac{\zeta}{\nu}$ in ${\cal P}_\lambda$. The importance of this function stems from the fact that it is a geometric quantity, and so the redshift effect is reflected in its evolution equation \eqref{zeta_nu}. The resulting exponential decay plays a fundamental role in estimating the remaining key quantities.

More precisely, we start by deriving an appropriate two-dimensional version of Gronwall's inequality 
to bound $\frac{\zeta}{\nu}$ in ${\cal P}_\lambda$. We then 
show that the derivative $\partial_r(1-\mu)$ is close to
$\partial_r(1-\mu)(r_+,\varpi_0)= 2k_+$, and $\kappa$ is close to $1$, provided that $u$ is
sufficiently small and $v$ is sufficiently large. This allows us to go back to our
previous estimate for $\frac{\zeta}{\nu}$ and improve it to an exponential decay, dominated by the slower of two competing effects: the redshift arising from the evolution equation, essentially $e^{-2k_+ v}$, and the exponential decay of $\theta_0$.
We can then use it to control the remaining key quantities $\theta$, $\nu$, $u_\lambda$ and $v_\lambda$ (given by~\eqref{vlambda}).
Finally, we show that in ${\cal P}_\lambda$ the radius function $r$ and the renormalized Hawking mass $\varpi$ converge to $r_+$ and
$\varpi_0$, respectively, as $v\to\infty$.

We will start by writing an integral formula for $\frac{\zeta}{\nu}$ in ${\cal P}_\lambda$, which features a crucial dependence on $\theta_0$ and $\frac{\zeta}{\nu}$. Integrating~\eqref{theta_u}, we get
\begin{equation}
\label{bar}
\theta(u,v)=\theta_0(v)-\int_0^u\left[\frac{\zeta}{\nu}\frac{\lambda}{r}\nu\right](\tilde u,v)\,d\tilde u,
\end{equation}
while integrating
\begin{equation}\label{zeta_nu}
\partial_v\left(\frac{\zeta}{\nu}\right)=-\,\frac{\theta}{r}-\kappa\partial_r(1-\mu)\frac{\zeta}{\nu},
\end{equation}
we get
\begin{eqnarray}
\frac{\zeta}{\nu}(u,v)&=&\frac{\zeta}{\nu}(u,0)e^{-\int_0^v[\kappa\partial_r(1-\mu)](u,\tilde v)\,d\tilde v}\nonumber\\
&&-\int_0^v\frac{\theta}{r}(u,\tilde v)e^{-\int_{\tilde v}^v[\kappa\partial_r(1-\mu)](u,\bar v)\,d\bar v}\,d\tilde v.
\label{rafaeli}
\end{eqnarray}
The desired formula for $\frac{\zeta}{\nu}$ is obtained combining~\eqref{bar} with~\eqref{rafaeli}:
\begin{eqnarray}
\frac{\zeta}{\nu}(u,v)&=&\frac{\zeta}{\nu}(u,0)e^{-\int_0^v[\kappa\partial_r(1-\mu)](u,\tilde v)\,d\tilde v}\nonumber\\
&&-\int_0^v\frac{\theta_0(\tilde v)}{r(u,\tilde v)}e^{-\int_{\tilde v}^v[\kappa\partial_r(1-\mu)](u,\bar v)
\,d\bar v}\,d\tilde v\nonumber\\
&&+\int_0^v\int_0^u\left[\frac{\zeta}{\nu}\frac{\lambda}{r}\nu\right](\tilde u,\tilde v)\frac{e^{-\int_{\tilde v}^v[\kappa\partial_r(1-\mu)](u,\bar v)
\,d\bar v}}{r(u,\tilde v)}\,d\tilde ud\tilde v.\label{adele}
\end{eqnarray}

We will need the following version of Gronwall's inequality.
\begin{Lem}\label{our_Gronwall3}
Let $M$ be a positive number, and
assume that $f:\,]0,M]\to[0,\infty[$ is continuous and strictly decreasing with $\lim_{x\searrow 0}f(x)=\infty$.
Consider the set $${\cal S}=\{(x,y)\in\,]0,M]\times[0,\infty[\,:y\geq f(x)\},$$
and continuous functions $c:\,]0,M]\to\,[0,\infty[$ and $u,b:{\cal S}\to[0,\infty[$ such that
\begin{equation}\label{eps}
u(x,y)\leq \eps+\int_{f^{-1}(y)}^xc(\tilde x)\,d\tilde x+\int_{f^{-1}(y)}^x\int_{f(\tilde x)}^yu(\tilde x,\tilde y)b(\tilde x,\tilde y)\,d\tilde yd\tilde x,
\end{equation}
for some positive number $\eps$.
Then
\begin{equation}\label{gronwall3}
u(x,y)\leq \left(\eps+\int_{f^{-1}(y)}^xc(\tilde x)\,d\tilde x\right)e^{\int_{f^{-1}(y)}^x\int_{f(\tilde x)}^yb(\tilde x,\tilde y)\,d\tilde yd\tilde x}.
\end{equation}
\end{Lem}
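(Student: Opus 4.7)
My plan is to reduce the two-dimensional integral inequality \eqref{eps} to a one-dimensional linear Gronwall problem by freezing the second variable. Introduce the auxiliary function
\[
G(x,y) := \eps + \int_{f^{-1}(y)}^{x} c(\tilde x)\,d\tilde x + \int_{f^{-1}(y)}^{x}\int_{f(\tilde x)}^{y} u(\tilde x,\tilde y)\,b(\tilde x,\tilde y)\,d\tilde y\,d\tilde x,
\]
so that \eqref{eps} becomes simply $u \leq G$ on ${\cal S}$. The crucial structural observation is that $y \mapsto G(x,y)$ is non-decreasing: since $f$ is strictly decreasing, $f^{-1}$ is strictly decreasing, so increasing $y$ simultaneously lowers the outer lower limit and raises the inner upper limit, and the integrands are non-negative. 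Also note $G(f^{-1}(y),y) = \eps$, which will serve as the initial condition.

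Next I would fix $y$ and regard $x \mapsto G(x,y)$ as a function of a single variable on $[f^{-1}(y),M]$. Applying the Leibniz rule (noting that only the upper limits of the outer integrals depend on $x$) gives
\[
\partial_x G(x,y) = c(x) + \int_{f(x)}^{y} u(x,\tilde y)\,b(x,\tilde y)\,d\tilde y.
\]
Substituting $u \leq G$ inside the inner integral and then using the monotonicity $G(x,\tilde y) \leq G(x,y)$ for $\tilde y \leq y$, I obtain the linear differential inequality
\[
\partial_x G(x,y) \,\leq\, c(x) + G(x,y)\,B(x,y), \qquad B(x,y):=\int_{f(x)}^{y} b(x,\tilde y)\,d\tilde y.
\]

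At this point the one-dimensional argument is standard: multiply by the integrating factor $\exp\bigl(-\int_{f^{-1}(y)}^{x} B(\bar x,y)\,d\bar x\bigr)$, integrate from $f^{-1}(y)$ to $x$ using $G(f^{-1}(y),y)=\eps$, and estimate the source exponential crudely by its maximum value. This yields
\[
G(x,y) \leq \left(\eps + \int_{f^{-1}(y)}^{x} c(\tilde x)\,d\tilde x\right) e^{\int_{f^{-1}(y)}^{x} B(\bar x,y)\,d\bar x},
\]
which is exactly \eqref{gronwall3} once one recognizes the exponent as the double integral of $b$ over the same past region appearing in the hypothesis.

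The main conceptual obstacle, and the reason the classical one-dimensional Gronwall lemma does not apply verbatim, is the interaction between the two moving limits $f^{-1}(y)$ and $f(\tilde x)$; everything goes through only because these two curves are inverse to each other and both integrands are non-negative, so monotonicity of $G$ in $y$ is automatic and the initial value at $x=f^{-1}(y)$ is exactly $\eps$. Regularity is a minor concern: continuity of $u$, $b$, $c$ suffices to apply Leibniz and solve the linear ODE in $x$, and the singularity of $f$ at $0$ is harmless because $f^{-1}(y)>0$ for any finite $y$.
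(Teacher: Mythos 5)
Your proof is correct and follows essentially the same route as the paper's: the paper likewise introduces the majorant function equal to the right-hand side of \eqref{eps}, exploits its monotonicity in the second variable and the boundary value $\eps$ at $x=f^{-1}(y)$, and reduces to the one-dimensional linear differential inequality $\partial_x v\leq c+v\int_{f(x)}^y b$. The only inessential difference is the last step, where the paper divides by the (strictly positive) majorant and integrates the logarithmic derivative, whereas you use an integrating factor and bound the exponential under the integral by its maximum; both give exactly \eqref{gronwall3}.
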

\begin{proof}
For $(x,y)\in{\cal S}$ we define
$$
v(x,y)=\eps+\int_{f^{-1}(y)}^xc(\tilde x)\,d\tilde x+\int_{f^{-1}(y)}^x\int_{f(\tilde x)}^yu(\tilde x,\tilde y)b(\tilde x,\tilde y)\,d\tilde yd\tilde x
$$
(see the figure below).
\begin{center}
\begin{psfrags}
\psfrag{x}{\!\tiny $x$}
\psfrag{s}{\tiny $\underline{x}$}
\psfrag{f}{\tiny $\!\!\!\!\!\!\!f(\underline{x})$}
\psfrag{g}{\tiny $\!\!\!\! y$}
\psfrag{a}{\tiny $\!\!\!\!\!\!\!\!\!\!{f^{-1}(y)}$}
\psfrag{M}{\tiny $\!\! M$}
\includegraphics[scale=.5]{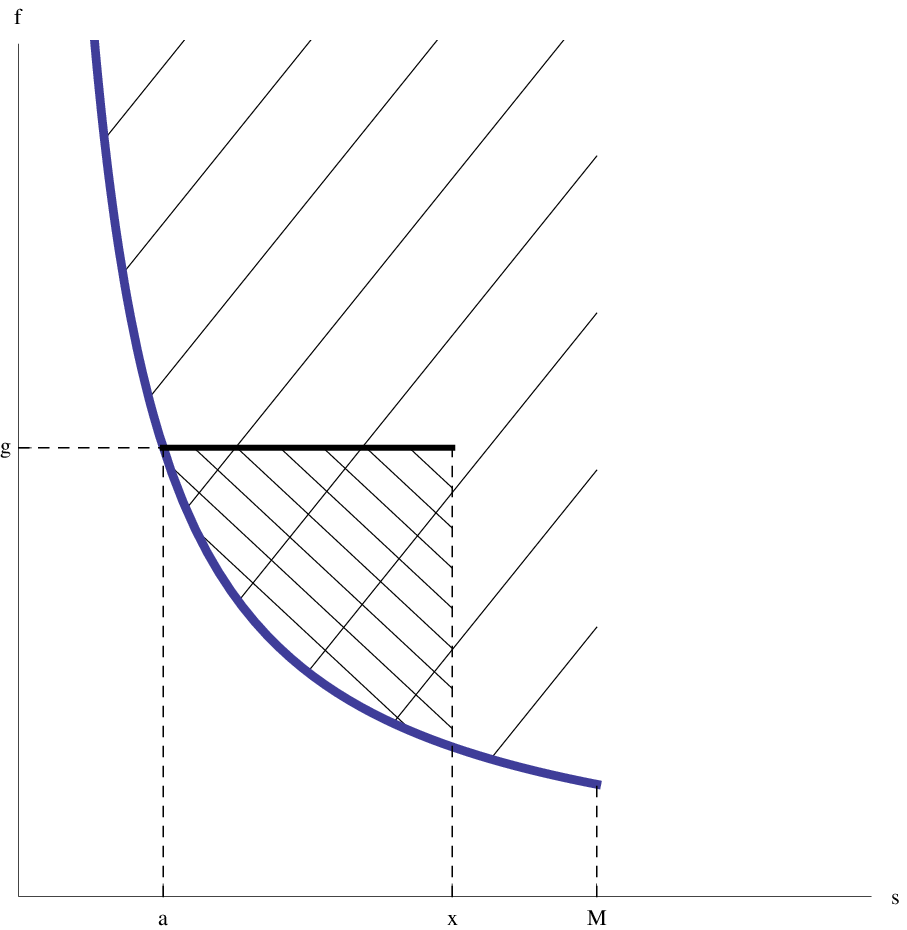}
\end{psfrags}
\end{center}
According to our hypothesis, $u\leq v$. Note that if $\tilde y\leq y$ then $v(x,\tilde y)\leq v(x,y)$, because when we 
change $\tilde y$ to $y$ we are integrating nonnegative
functions over larger domains. Hence, as $b$ is nonnegative,
\begin{eqnarray*}
\partial_x v(x,y)&=&c(x)+\int_{f(x)}^yu(x,\tilde y)b(x,\tilde y)\,d\tilde y\\
&\leq&c(x)+\int_{f(x)}^yv(x,\tilde y)b(x,\tilde y)\,d\tilde y\\
&\leq&c(x)+v(x,y)\int_{f(x)}^yb(x,\tilde y)\,d\tilde y.
\end{eqnarray*}
Next we use $v(x,y)\geq\eps+\int_{f^{-1}(y)}^xc(\tilde x)\,d\tilde x>0$. We may write
$$
\frac{\partial_x v(\tilde x,y)}{v(\tilde x,y)}\leq\frac{c(\tilde x)}{\eps+\int_{f^{-1}(y)}^{\tilde x}c(\bar x)\,d\bar x}+\int_{f(\tilde x)}^yb(\tilde x,\tilde y)\,d\tilde y.
$$
Integrating both sides of the last inequality in $\tilde x$, from $f^{-1}(y)$ to $x$, we get
\begin{eqnarray*}
\ln v(x,y)&\leq&\ln v(f^{-1}(y),y)\\ &&+\ln\left(\eps+\int_{f^{-1}(y)}^xc(\tilde x)\,d\tilde x\right)-
\ln\eps\\
&&+\int_{f^{-1}(y)}^x\int_{f(\tilde x)}^yb(\tilde x,\tilde y)\,d\tilde yd\tilde x.
\end{eqnarray*}
Taking into account that
$
v(f^{-1}(y),y)=\eps,
$
$$
v(x,y)\leq \left(\eps+\int_{f^{-1}(y)}^xc(\tilde x)\,d\tilde x\right)e^{\int_{f^{-1}(y)}^x\int_{f(\tilde x)}^yb(\tilde x,\tilde y)\,d\tilde yd\tilde x}.
$$
Since $u(x,y)\leq v(x,y)$, we obtain~\eqref{gronwall3}.
\end{proof}
\begin{Cor}\label{Corollary-0}
Under the hypotheses of\/ {\rm Lemma~\ref{our_Gronwall3}}, if
\begin{equation}\label{zero-0}
u(x,y)\leq \int_{f^{-1}(y)}^xc(\tilde x)\,d\tilde x+\int_{f^{-1}(y)}^x\int_{f(\tilde x)}^yu(\tilde x,\tilde y)b(\tilde x,\tilde y)\,d\tilde yd\tilde x,
\end{equation}
then
\begin{equation}\label{gronwall4}
u(x,y)\leq \left(\int_{f^{-1}(y)}^xc(\tilde x)\,d\tilde x\right)e^{\int_{f^{-1}(y)}^x\int_{f(\tilde x)}^yb(\tilde x,\tilde y)\,d\tilde yd\tilde x}.
\end{equation}
\end{Cor}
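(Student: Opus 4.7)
The proof will be an immediate limiting argument from Lemma~\ref{our_Gronwall3}. The key observation is that the hypothesis \eqref{zero-0} of the corollary is \eqref{eps} with $\varepsilon=0$, but the lemma's hypothesis is monotone in $\varepsilon$: any estimate of the form $u\leq A$ with $A\geq 0$ trivially implies $u\leq \varepsilon+A$ for every $\varepsilon>0$. So the plan is to apply the lemma at each $\varepsilon>0$ and then send $\varepsilon\searrow 0$.

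More concretely, I would argue as follows. Fix $(x,y)\in{\cal S}$ and let $\varepsilon>0$ be arbitrary. From \eqref{zero-0}, at every point $(x',y')\in{\cal S}$,
$$u(x',y')\leq \varepsilon+\int_{f^{-1}(y')}^{x'}c(\tilde x)\,d\tilde x+\int_{f^{-1}(y')}^{x'}\int_{f(\tilde x)}^{y'}u(\tilde x,\tilde y)b(\tilde x,\tilde y)\,d\tilde y\,d\tilde x,$$
so the hypothesis \eqref{eps} of Lemma~\ref{our_Gronwall3} is satisfied with this $\varepsilon$. The lemma therefore gives
$$u(x,y)\leq \left(\varepsilon+\int_{f^{-1}(y)}^{x}c(\tilde x)\,d\tilde x\right)e^{\int_{f^{-1}(y)}^{x}\int_{f(\tilde x)}^{y}b(\tilde x,\tilde y)\,d\tilde y\,d\tilde x}.$$
Since the right-hand side depends continuously on $\varepsilon\in[0,\infty[$ (the exponential factor is independent of $\varepsilon$), letting $\varepsilon\searrow 0^{+}$ yields \eqref{gronwall4}.

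There is no substantive obstacle: the double integral $\int_{f^{-1}(y)}^{x}\int_{f(\tilde x)}^{y}b(\tilde x,\tilde y)\,d\tilde y\,d\tilde x$ is finite by continuity of $b$ on the compact set $\{(\tilde x,\tilde y):f^{-1}(y)\leq \tilde x\leq x,\ f(\tilde x)\leq \tilde y\leq y\}$, and the estimate produced by Lemma~\ref{our_Gronwall3} holds pointwise for each admissible $\varepsilon$, so passing to the limit is purely algebraic. The corollary is stated separately from the lemma only because, in the subsequent applications in Sections~\ref{section4} onwards, the inequalities satisfied by the relevant quantities (such as $\zeta/\nu$) come naturally with no additive constant on the right-hand side.
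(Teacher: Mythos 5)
Your proposal is correct and is essentially identical to the paper's own proof: note that \eqref{zero-0} implies \eqref{eps} for every $\eps>0$, apply Lemma~\ref{our_Gronwall3}, and let $\eps\searrow 0$. Nothing further is needed.
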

\begin{proof}
If $u$ satisfies~\eqref{zero-0}, then $u$ satisfies~\eqref{eps}, for every positive $\eps$. The result is obtained by letting $\eps\searrow 0$.
\end{proof}

A small variation of Lemma~\ref{our_Gronwall3} is
\begin{Lem}\label{our_Gronwall4}
Let $M$ be a positive number
and consider continuous functions $c:[0,M]\to\,[0,\infty[$ and $u,b:[0,M]\times[0,\infty[\,\to[0,\infty[$ such that
\begin{equation}\label{epsbis}
u(x,y)\leq \eps+\int_{0}^xc(\tilde x)\,d\tilde x+\int_{0}^x\int_{0}^yu(\tilde x,\tilde y)b(\tilde x,\tilde y)\,d\tilde yd\tilde x,
\end{equation}
for some positive number $\eps$.
Then
\begin{equation}\label{gronwall3bis}
u(x,y)\leq \left(\eps+\int_{0}^xc(\tilde x)\,d\tilde x\right)e^{\int_{0}^x\int_{0}^yb(\tilde x,\tilde y)\,d\tilde yd\tilde x}.
\end{equation}
\end{Lem}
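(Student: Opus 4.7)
The plan is to adapt the argument of Lemma~\ref{our_Gronwall3} in a straightforward way, replacing the ``moving'' starting point $f^{-1}(y)$ by the fixed starting point $0$, since now the integration region is simply the rectangle $[0,x]\times[0,y]$ rather than the set $\mathcal{S}$.

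First I would define the majorant
$$
v(x,y)=\eps+\int_{0}^xc(\tilde x)\,d\tilde x+\int_{0}^x\int_{0}^yu(\tilde x,\tilde y)b(\tilde x,\tilde y)\,d\tilde yd\tilde x,
$$
so that $u\leq v$ by hypothesis~\eqref{epsbis}. The crucial monotonicity observation is that, because $u$ and $b$ are nonnegative, the function $\tilde y\mapsto v(x,\tilde y)$ is nondecreasing on $[0,y]$, so that $v(x,\tilde y)\leq v(x,y)$ whenever $\tilde y\leq y$. Differentiating $v$ in $x$ and using this monotonicity together with $u\leq v$,
$$
\partial_x v(x,y)=c(x)+\int_{0}^yu(x,\tilde y)b(x,\tilde y)\,d\tilde y\leq c(x)+v(x,y)\int_{0}^yb(x,\tilde y)\,d\tilde y.
$$

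Next, since $v(x,y)\geq \eps+\int_0^x c(\tilde x)\,d\tilde x>0$, I would divide by $v$ to obtain
$$
\frac{\partial_x v(\tilde x,y)}{v(\tilde x,y)}\leq\frac{c(\tilde x)}{\eps+\int_{0}^{\tilde x}c(\bar x)\,d\bar x}+\int_{0}^yb(\tilde x,\tilde y)\,d\tilde y,
$$
and then integrate in $\tilde x$ from $0$ to $x$. Using $v(0,y)=\eps$ and recognising the first term on the right as the $\tilde x$-derivative of $\ln\bigl(\eps+\int_0^{\tilde x}c(\bar x)\,d\bar x\bigr)$, this yields
$$
\ln v(x,y)\leq\ln\eps+\ln\!\left(\eps+\int_{0}^xc(\tilde x)\,d\tilde x\right)-\ln\eps+\int_{0}^x\int_{0}^yb(\tilde x,\tilde y)\,d\tilde yd\tilde x,
$$
and exponentiating gives the claimed bound~\eqref{gronwall3bis} for $v$, hence also for $u$.

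There is no real obstacle here; the only point that needs care is the monotonicity of $v$ in $y$, which is what legitimises pulling $v(x,y)$ out of the $\tilde y$-integral in the differential inequality. The rest is the standard scalar Gronwall argument applied in the $x$ variable with $y$ frozen, exactly parallel to the proof of Lemma~\ref{our_Gronwall3}. As in Corollary~\ref{Corollary-0}, one then immediately obtains the $\eps=0$ version by letting $\eps\searrow 0$, should it be needed in the sequel.
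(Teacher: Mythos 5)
Your proof is correct and is exactly the adaptation the paper intends: Lemma~\ref{our_Gronwall4} is stated there without proof as a ``small variation'' of Lemma~\ref{our_Gronwall3}, and your argument reproduces that proof verbatim with the moving lower limits $f^{-1}(y)$, $f(\tilde x)$ replaced by $0$, including the key monotonicity of $v$ in $y$ and the use of $v(0,y)=\eps$.
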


We will now apply Lemma~\ref{our_Gronwall4} to~\eqref{adele} in order to bound $\frac{\zeta}{\nu}$ in ${\cal P}_\lambda$. We have seen that on ${\cal P}_\lambda$ we have $r\geq r(U,0)>r_0$,
$\varpi\leq\varpi_0$ and $\partial_r(1-\mu)\geq c>0$. Since the exponentials in~\eqref{adele} are bounded above by $1$ and $\frac{1}{r}$ is bounded
above by $\frac{1}{r(U,0)}$, we get
\begin{eqnarray}
\Bigl|\frac{\zeta}{\nu}\Bigr|(u,v)
&\leq& C\left(\sup_{[0,U]}|\zeta_0|+\frac{1}{r(U,0)}\int_0^\infty|\theta_0|(\tilde v)\,d\tilde v\right)=:C_\lambda.
\label{constant0}
\end{eqnarray}
Indeed, as $\lambda$ is nonnegative and $\partial_u\lambda<0$ on ${\cal P}_\lambda$,
\begin{eqnarray*}
&&\int_0^v\int_0^u\left[\frac{\lambda}{r}(-\nu)\right](\tilde u,\tilde v)\,d\tilde ud\tilde v\leq
\int_0^v\int_0^u\left[\frac{-\nu}{r}\right](\tilde u,\tilde v)\lambda(0,\tilde v)\,d\tilde ud\tilde v\\
&&\ \ \ =
\int_0^v\lambda(0,\tilde v)\ln\left(\frac{r(0,\tilde v)}{r(u,\tilde v)}\right)\,d\tilde v
\leq
\int_{r(0,0)}^{r(0,v)}\ln\left(\frac{r(0,\tilde v)}{r(U,0)}\right)\,dr(0,\tilde v)\\
&&\ \ \  =r(0,v)\ln\left(\frac{r(0,v)}{r(U,0)}\right)-r(0,v)-r(0,0)\ln\left(\frac{r(0,0)}{r(U,0)}\right)+r(0,0)\\
&&\ \ \ <r_+\ln\left(\frac{r_+}{r(U,0)}\right)-r_+-r(0,0)\ln\left(\frac{r(0,0)}{r(U,0)}\right)+r(0,0)\\
&&\ \ \ =\ln\left(\frac{c(r(U,0),r_+)}{r(U,0)}\right)(r_+-r(0,0))\\
&&\ \ \ < \ln\left(\frac{r_+}{r(U,0)}\right)(r_+-r(0,0)).
\end{eqnarray*}
Here $c(r(U,0),r_+)\in\,]r(U,0),r_+[$ is provided by the Mean Value Theorem.
The constant $C$ in~\eqref{constant0} is bounded by
$$
C\leq\left(\frac{r_+}{r(U,0)}\right)^{\frac{r_+-r(0,0)}{r(U,0)}}.
$$

Arguing as in Section~\ref{section3.5}, one can easily see that $r(U,V)\leq r\leq r_+$ and $\varpi(U,V)\leq\varpi\leq\varpi_0$ in the set 
$$
\ruv:=\{(u,v)\in {\cal P}_\lambda:0\leq u\leq U\ {\rm and}\ v\geq V\}.
$$
Since $\lim_{v\to\infty}r(0,v)=r_+$ and $\lim_{v\to\infty}\varpi(0,v)=\varpi_0$, the continuity of the functions $r$ and $\varpi$ guarantees that they are close to $r_+$ and $\varpi_0$ in $\ruv$ if we choose $U$ sufficiently small and $V$ sufficiently large. Therefore, given $\delta>0$, there exist $U>0$ and $V\geq 0$ such that
$$
2\kk-\delta\leq
\partial_r(1-\mu)\leq 2\kk+\delta
$$
in $\ruv$.
Integrating~\eqref{kappa_u} in $\ruv$, we obtain
\begin{equation}\label{c_k}
C_\kappa:=\left(\frac{r(U,V)}{r_+}\right)^{C_\lambda^2}\leq\kappa\leq 1.
\end{equation}
Notice that $C_\kappa$ can be made arbitrarily close to one by choosing $U$ small and $V$ large.
So, in this set,
\begin{align}
-C_\alpha:=-(2\kk +\delta)&=-(\partial_r(1-\mu)(r_+,\varpi_0)+\delta)\nonumber\\
&\leq-\kappa\partial_r(1-\mu)\leq\label{c-alpha}\\
&-C_\kappa(\partial_r(1-\mu)(r_+,\varpi_0)-\delta)=-C_\kappa(2\kk -\delta)=:-c_\alpha.\nonumber
\end{align}

We will now improve our estimate for $\frac{\zeta}{\nu}$ to an exponential decay. Going back to~\eqref{adele}, the first exponential is bounded above by
$$
e^{-\int_0^v[\kappa\partial_r(1-\mu)](u,\tilde v)\,d\tilde v}\leq
e^{-c_\alpha v},
$$
while the second and third exponentials are bounded above by
$$
e^{-\int_{\tilde v}^v[\kappa\partial_r(1-\mu)](u,\bar v)\,d\bar v}\leq
e^{-c_\alpha(v-\tilde v)}.
$$

Applying again Lemma~\ref{our_Gronwall4}, this time 
to $e^{c_\alpha v}\frac{\zeta}{\nu}(u,v)$
(see also the proof of Lemma~4.1 of \cite{relIst2}),
we obtain, in $\ruv$,
\begin{eqnarray}
\Bigl|\frac{\zeta}{\nu}\Bigr|(u,v)
&\leq& C\left(\sup_{[0,U]}\Bigl|\frac\zeta\nu\Bigr|(\,\cdot\,,V)+\int_V^v e^{c_\alpha\tilde v}|\theta_0|(\tilde v)\,d\tilde v\right)e^{-c_\alpha v}.
\label{constant10}
\end{eqnarray}
In view of the decay~\eqref{theta_ii} for $\theta_0$, $V$ can be chosen so that the indefinite integral 
$\int_V^\infty e^{c_\alpha\tilde v}|\theta_0|(\tilde v)\,d\tilde v$
converges provided $c_\alpha<\frac{C_1}{2}$. 
We define
\begin{equation}\label{s}
\s:=\frac{A}{r_+\kk} - 1 > 0,
\end{equation}
the normalized distance from $A$ to its minimum allowed value (see \eqref{A}). Note that in case~(ii) this distance must be positive. The value of $C_1$ is expressed in terms of $s$ by (see~\eqref{C1})
$$
\textstyle
C_1=2\s\kk-\delta,
$$
and so $c_\alpha<\frac{C_1}{2}$ amounts to
\begin{equation}\label{AA}
s>2.
\end{equation}
In this case~\eqref{constant10} yields
\begin{equation}\label{constant}
\Bigl|\frac{\zeta}{\nu}\Bigr|(u,v)\leq
C\left(\sup_{[0,U]}\Bigl|\frac\zeta\nu\Bigr|(\,\cdot\,,V)+1\right)e^{-c_\alpha v}
\end{equation}
When $\s\leq 2$, we obtain from~\eqref{constant10}
\begin{equation}\label{three_1}
\Bigl|\frac{\zeta}{\nu}\Bigr|(u,v)\leq Ce^{-(\s\kk-\delta)v},
\end{equation}
where we still have exponential decay. The existence of these two different regimes reflects the competition phenomenon, mentioned at the beginning of this section, between the redshift arising from the evolution equation (which dominates for $\s>2$) and the exponential decay of $\theta_0$ (dominant for $\s<2$). \label{regime}

We now use this improved estimate for $\bigl|\frac{\zeta}{\nu}\bigr|$ to control the remaining quantities, starting with $\theta$. Using~\eqref{dlmz} and~\eqref{bar}, we have, for $(u,v)\in{\cal P}_\lambda$,
\begin{eqnarray*}
|\theta(u,v)-\theta_0(v)|&\leq& \lambda_0(v)\max_{[0,u]\times\{v\}}\Bigl|\frac{\zeta}{\nu}\Bigr|
\int_0^u\left[\frac{-\nu}{r}\right](\tilde u,v)\,d\tilde u\\
&\leq&\ln\left({\textstyle\frac{r_+}{r(U,0)}}\right)
\lambda_0(v)\max_{[0,u]\times\{v\}}\Bigl|\frac{\zeta}{\nu}\Bigr|.
\end{eqnarray*}
This yields, from~\eqref{l_ii} and~\eqref{constant0},
$$
|\theta(u,v)-\theta_0(v)|\leq Ce^{-C_1v}.
$$
In view of~\eqref{theta_ii}, we conclude that, for $(u,v)\in\ruv$,
\begin{equation}\label{theta_l_ii}
ce^{-\,\frac{C_2}{2}v}\leq|\theta|(u,v)\leq Ce^{-\,\frac{C_1}{2}v}.
\end{equation}
Note that the decay of $\theta$ is faster than that of $\bigl|\frac{\zeta}{\nu}\bigr|$ for $\s>2$ due to the exponential decay of $\lambda_0$. This effect is lost when we cross the apparent horizon, as will be seen in the next section.

Using the integrated form of~\eqref{nu_v},
$$
\nu(u,v)=\nu(u,V)e^{\int_V^v[\kappa\partial_r(1-\mu)](u,\tilde v)\,d\tilde v},
$$
we conclude that
\begin{equation}\label{nu_A}
-Ce^{C_\alpha (v-V)}\leq\nu(u,v)\leq-ce^{c_\alpha (v-V)}
\end{equation}
in $\ruv$.
Here $$-C=\min\{\nu(u,V):u\in[0,U]\}\leq\max\{\nu(u,V):u\in[0,U]\}=-c<0.$$

We will now estimate $u_\lambda$. We start by noticing that, using~\eqref{lambda_u},
$$
-C_\alpha Ce^{C_\alpha (v-V)}\leq\partial_u\lambda(u,v)\leq-c_\alpha c e^{c_\alpha (v-V)}
$$
in the set $\ruv$. Moreover, integrating $\partial_u\lambda$ from the event horizon to the apparent horizon, we obtain
\begin{equation}\label{lambda}
0=\lambda_0(v)+\int_0^{u_\lambda(v)}\partial_u\lambda(\tilde u,v)\,d\tilde u.
\end{equation}

Since (see~\eqref{l_ii})
$$
ce^{-C_2v}\leq\lambda_0(v)\leq Ce^{-C_1v},
$$
we deduce that
\begin{equation}\label{u_lambda_ii}
\frac{ce^{C_\alpha V}}{C_\alpha C} e^{-(C_2+C_\alpha)v}\leq u_\lambda(v)\leq
\frac{Ce^{c_\alpha V}}{c_\alpha c} e^{-(C_1+c_\alpha)v}.
\end{equation}
Let $\delta>0$. Our parameters can be chosen so that (see~\eqref{C1}, \eqref{C2} and~\eqref{c-alpha})
$$
\frac{2A}{r_+}-\delta<C_1+c_\alpha< C_2+C_\alpha<\frac{2A}{r_+}+\delta.
$$
In particular, the exponents in~\eqref{u_lambda_ii} are positive, and consequently ${\cal A}$ is nonempty.

For $u\in\,]0,U]$, we define
\begin{equation}\label{vlambda}
v_\lambda(u)=\min\{v:\lambda(u,v)=0\}.
\end{equation}
Using $u_\lambda(v_\lambda(u))=u$ in~\eqref{u_lambda_ii}, for each $\delta>0$ we have
\begin{equation}\label{v_lambda_ii}
\left(\frac{r_+}{2A}-\delta\right)\ln\left(\frac{c}{u}\right)\leq v_\lambda(u)\leq
\left(\frac{r_+}{2A}+\delta\right)\ln\left(\frac{C}{u}\right).
\end{equation}

We now characterize the behavior of $r$ on ${\cal A}$.
Taking into account~\eqref{lambda_u}, \eqref{c-alpha} and~\eqref{lambda}, we have
\begin{equation}\label{good}
-\,\frac{\lambda_0(v)}{c_\alpha}\leq
\int_{0}^{u_\lambda(v)}\nu(\tilde u,v)\,d\tilde u\leq-\,\frac{\lambda_0(v)}{C_\alpha},
\end{equation}
that is,
$$
\frac{\lambda_0(v)}{C_\alpha}\leq
r(0,v)-r(u_\lambda(v),v)\leq\frac{\lambda_0(v)}{c_\alpha}.
$$
This implies that
\begin{equation}\label{r_A}
\lim_{v\to\infty}r(u_\lambda(v),v)=r_+,
\end{equation}
and so also
\begin{equation}
\lim_{\stackrel{(u,v)\in{\cal P}_\lambda}{{\mbox{\tiny{$v\to\infty$}}}}}r(u,v)=r_+.
\end{equation}

Examining the sign of the components of $d\varpi$, we see that for small $v$
the level curves of $\varpi$ are qualitatively like the ones in the following figure, where the thick curve represents ${\cal A}$. Notice that
\begin{eqnarray}
\frac{d}{dv}[\varpi(u_\lambda(v),v)]&=&\partial_u\varpi(u_\lambda(v),v)u_\lambda'(v)
+\partial_v\varpi(u_\lambda(v),v)\nonumber\\
&=&\partial_v\varpi(u_\lambda(v),v)\ \geq\ 0.\label{grow}
\end{eqnarray}

\begin{center}
\begin{turn}{45}
\begin{psfrags}
\psfrag{v}{\tiny $v$}
\includegraphics[scale=.6]{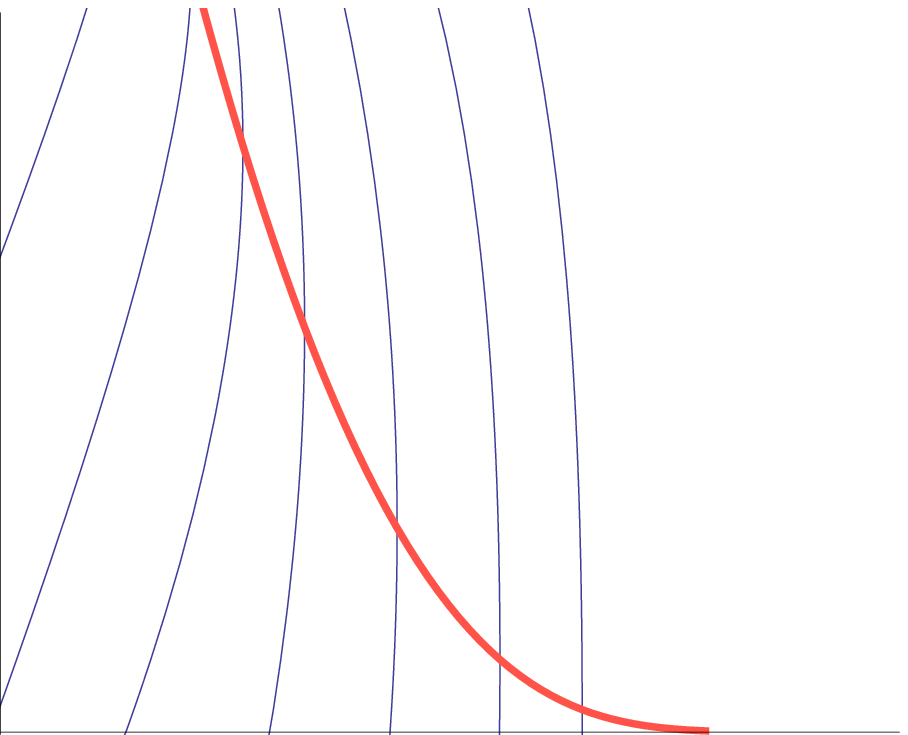}
\end{psfrags}
\end{turn}
\end{center}

The integrated form of~\eqref{omega_u} is
\begin{eqnarray} 
\varpi(u,v)=\varpi(0,v)e^{-\int_0^u\bigl(\frac{\zeta^2}{r\nu}\bigr)(\tilde u,v)\,d\tilde u}\qquad\qquad\qquad\qquad\qquad\qquad\qquad&&\nonumber\\ \qquad\qquad\ \  +
\int_0^ue^{-\int_{\tilde u}^u\frac{\zeta^2}{r\nu}(\bar u,v)\,d\bar u}\left(\frac{1}{2}\left(1+\frac{e^2}{r^2}
-\frac{\Lambda}{3}r^2\right)\frac{\zeta^2}{\nu}\right)(\tilde u,v)\,d\tilde u.&&\label{omega_final}
\end{eqnarray}
The rough estimate (see \eqref{constant0} and \eqref{r_A})
$$
0\leq -\int_0^{u_\lambda(v)}\frac{\zeta^2}{r\nu}(\tilde u,v)\,d\tilde u\leq
C_\lambda^2\ln\left(\frac{r(0,v)}{r(u_\lambda(v),v)}\right) = o(1)
$$
implies that,
for $0\leq u\leq u_\lambda(v)$, the second term on the right-hand side of \eqref{omega_final} is also $o(1)$ as $v \to \infty$, and so
\begin{equation}\label{o1}
\varpi(u,v)=\varpi(0,v)+o(1).
\end{equation}
This yields
\begin{equation}\label{zero}
\lim_{\stackrel{(u,v)\in{\cal P}_\lambda}{{\mbox{\tiny{$v\to\infty$}}}}}\varpi(u,v)=\varpi_0.
\end{equation}

\section{The region $J^-(\Gamma_{\ckrp})\cap J^+({\cal A})$}\label{section5}

From this point on we will consider the solution defined on the intersection of the maximal past set $\cal P$ with the rectangle $\left[0,U\right]\times\left[V,\infty\right[$, for suitably chosen $U>0$ and $V\geq 0$.
In this section we focus on the subset $J^-(\Gamma_{\ckrp})\cap J^+({\cal A})$, for a given ${\ckrp} \in \left]r_0,r_+\right[$, which will later be set conveniently close to $r_+$. We will see that the exponential decays along the apparent horizon of $\frac{\zeta}{\nu}$ and $\theta$ persist in this new region. However, the faster decay rate of $\theta$ in the case $\s>2$ is lost, dominated by that of $\frac{\zeta}{\nu}$. As in Section~4 of~\cite{relIst2}, the solution here still behaves qualitatively as the Reissner-Nordstr\"om solution: $\varpi$ is close to $\varpi_0$, $\kappa$ is close to $1$ and $\zeta, \theta$ are close to $0$. Besides, the approximation improves by making $U$ smaller and $V$ larger.

More precisely, we start by showing that, given $\delta>0$ small, we can choose $V$ sufficiently large
and $U$ sufficiently small so that $\varpi\geq\varpi_0-\delta$ in
$J^+({\cal A})$ and $\partial_r(1-\mu)>0$ in $J^-(\Gamma_{\ckrp})\cap J^+({\cal A})$. This implies that $\partial_u\lambda<0$, and so we can use the two-dimensional version
of Gronwall's inequality in Lemma~\ref{our_Gronwall3} to estimate $\frac{\zeta}{\nu}$.
This allows us to control $\kappa$ from below and $\varpi$ from above,
which, as before, leads to an improved estimate for $\frac{\zeta}{\nu}$.
We then go on to bound $\nu$, $u_{\ckrp}$ and $v_{\ckrp}$.
The bounds on $r$ and $\varpi$ enable us to determine the precise
behavior of $1-\mu$ and, consequently, of $\lambda$. Finally,
we obtain bounds for $\theta$, which are quantitatively like those
for~$\frac{\zeta}{\nu}$.

Let us choose
\begin{equation}\label{rp-delta}
0<\delta<\frac{\ckrp^2}{2}\min_{r\in[\ckrp,r_+]}\partial_r(1-\mu)(r,\varpi_0).
\end{equation}
It is clear from~\eqref{r_A} and~\eqref{zero} that there exists $V$ such that
$$
\left\{
\begin{array}{rcl}
\varpi(u_\lambda(v),v)&\geq&\varpi_0-\delta,\\
r(u_\lambda(v),v)&\geq&\ckrp,
\end{array}
\right.\ \ {\rm for}\ v\geq V.
$$
We choose $0<U\leq u_\lambda(V)$. Then, from \eqref{omega_v},
$$
\varpi\geq\varpi_0-\delta \ \ {\rm in}\ J^+({\cal A}).
$$
It follows that
$$
-\partial_r(1-\mu)(r,\varpi)\leq -\partial_r(1-\mu)(r,\varpi_0)+
\frac{2\delta}{r^2} \ \ {\rm in}\ J^+({\cal A}),
$$ 
and, recalling from \eqref{ul} that $r$ increases along ${\cal A}$, and so $r \leq r_+$ on $J^+({\cal A})$, 
$$
-\partial_r(1-\mu)(r,\varpi)\leq -\min_{r\in[\ckrp,r_+]}\partial_r(1-\mu)(r,\varpi_0)+
\frac{2\delta}{\ckrp^2} \ \ {\rm in}\ J^-(\Gamma_{\ckrp})\cap J^+({\cal A}).
$$ 
Since~\eqref{rp-delta} holds, 
\begin{equation}
\label{dlmz_2}
-\partial_r(1-\mu)(r,\varpi)<0 \ \ {\rm in}\ J^-(\Gamma_{\ckrp})\cap J^+({\cal A}).
\end{equation}

To estimate $\frac{\zeta}{\nu}$ for $(u,v)\in J^+({\cal A})$, we use the expression (similar to~\eqref{adele})
\begin{eqnarray}
\frac{\zeta}{\nu}(u,v)&=&\frac{\zeta}{\nu}(u,v_\lambda(u))e^{-\int_{v_\lambda(u)}^v[\kappa\partial_r(1-\mu)](u,\tilde v)\,d\tilde v}\nonumber\\
&&-\int_{v_\lambda(u)}^v\frac{\theta({u_\lambda(\tilde v)},\tilde v)}{r(u,\tilde v)}e^{-\int_{\tilde v}^v[\kappa\partial_r(1-\mu)](u,\bar v)
\,d\bar v}\,d\tilde v\nonumber\\
&&+\int_{v_\lambda(u)}^v\int_{u_\lambda(\tilde v)}^u\left[\frac{\zeta}{\nu}\frac{\lambda}{r}\nu\right](\tilde u,\tilde v)\frac{e^{-\int_{\tilde v}^v[\kappa\partial_r(1-\mu)](u,\bar v)
\,d\bar v}}{r(u,\tilde v)}\,d\tilde ud\tilde v\label{shakira}
\end{eqnarray}
(recall the definition of $v_\lambda$ in~\eqref{vlambda}).
From \eqref{dlmz_2} we conclude that $\partial_u\lambda<0$ for $(u,v)\in J^-(\cg_{\ckrp})\cap J^+({\cal A})$, and so we have
\begin{eqnarray*}
&&\int_{v_\lambda(u)}^v\int_{u_\lambda(\tilde v)}^u
\left[\frac{(-\nu)}{r}(-\lambda)\right](\tilde u,\tilde v)\,d\tilde ud\tilde v\\
&&\qquad\qquad\qquad\leq\int_{v_\lambda(u)}^v\int_{u_\lambda(\tilde v)}^u
\frac{(-\nu)}{r}(\tilde u,\tilde v)(-\lambda)(u,\tilde v)\,d\tilde ud\tilde v\\
&&\qquad\qquad\qquad\leq\ln\left(\frac{r_+}{\ckrp}\right)\int_{v_\lambda(u)}^v
(-\lambda)(u,\tilde v)\,d\tilde v\\
&&\qquad\qquad\qquad\leq\ln\left(\frac{r_+}{\ckrp}\right)(r(u,v_\lambda(u))-r(u,v))\\
&&\qquad\qquad\qquad\leq\ln\left(\frac{r_+}{\ckrp}\right)(r_+-\ckrp).
\end{eqnarray*}

We apply a generalized version of Lemma~\ref{our_Gronwall3} (because $f(v)=u_\lambda(v)$
might not be strictly decreasing) whose proof we leave to the reader (just approximate
$u_\lambda$ by a strictly decreasing function and pass to the limit).
For $(u,v)\in J^-(\cg_{\ckrp})\cap J^+({\cal A})$,
\begin{eqnarray}
&&\Bigl|\frac{\zeta}{\nu}\Bigr|(u,v)\leq\label{zeta_nu_3}\\
&&\ \
\left(\frac{r_+}{\ckrp}\right)^{\frac{r_+-\ckrp}{\ckrp}}
 \left(\sup_{\left]0,U\right]}\Bigl|\frac{\zeta}{\nu}\Bigr|(u,v_\lambda(u))+\frac{1}{\ckrp}
 \int_{v_\lambda(u)}^\infty|\theta|(u_\lambda(\tilde v),\tilde v)\,d\tilde v\right)=:C_{\ckrp}.
\nonumber
\end{eqnarray}

We see from~\eqref{constant}, \eqref{three_1} and \eqref{theta_l_ii} that $C_{\ckrp}$ is finite.

Integrating~\eqref{kappa_u} yields
$$
\kappa(u,v)=\kappa(u_\lambda(v),v)e^{\int_{u_\lambda(v)}^u\bigl[\left(\frac{\zeta}{\nu}\right)^2
\frac{\nu} r\bigr](\tilde u,v)\,d\tilde u},
$$
and so, using~\eqref{c_k}, we obtain in $J^-(\Gamma_{\ckrp})\cap J^+({\cal A})$
\begin{equation}\label{kappa_2}
C_{\kappa,2}:=C_\kappa\left(\frac{\ckrp}{r_+}\right)^{C_{\ckrp}^2}\leq\kappa\leq 1.
\end{equation}

From~\eqref{zero}, we have 
\begin{eqnarray}
\varpi_m&:=&\inf\{\varpi(u,v):(u,v)\in J^+({\cal A})\ {\rm and}\ v\geq V\}
\label{omega_m}\\ 
&\,=&\varpi(U,v_\lambda(U))\ =\ \varpi_0+o(1),\nonumber
\end{eqnarray}
as $V\to\infty$ (recall that $0<U\leq u_\lambda(V)$). Analogously to~\eqref{omega_final}, we now have
\begin{eqnarray} 
\varpi(u,v)=\varpi(u_\lambda(v),v)e^{-\int_{u_\lambda(v)}^u\bigl(\frac{\zeta^2}{r\nu}\bigr)(\tilde u,v)\,d\tilde u}\qquad\qquad\qquad\qquad\qquad&&\nonumber\\ \qquad\qquad\ \  +
\int_{u_\lambda(v)}^ue^{-\int_{\tilde u}^u\frac{\zeta^2}{r\nu}(\bar u,v)\,d\bar u}\left(\frac{1}{2}\left(1+\frac{e^2}{r^2}
-\frac{\Lambda}{3}r^2\right)\frac{\zeta^2}{\nu}\right)(\tilde u,v)\,d\tilde u.&&\label{omega_final_2}
\end{eqnarray}
Using~\eqref{o1}, \eqref{zeta_nu_3} and \eqref{omega_final_2}, multiplying and dividing by $\nu$ as needed, we have
\begin{eqnarray}
\varpi_M&:=&\sup\{\varpi(u,v):(u,v)\in J^-(\Gamma_{\ckrp})\cap J^+({\cal A})\}
\label{omega_M}\\
&\,\leq&\varpi_0+o(1),\nonumber
\end{eqnarray}
as $\ckrp\nearrow r_+$.
Thus, given $\delta>0$ we can choose $V$ sufficiently large,
$0<U\leq u_\lambda(V)$
and $\ckrp$ sufficiently close to $r_+$
so that, for $(u,v)\in J^-(\Gamma_{\ckrp}) \cap J^+({\cal A})$, we have
\begin{align}
& -C_{\alpha,2}:=-2\kk-\delta=-\partial_r(1-\mu)(r_+,\varpi_0)-\delta\leq\label{C_alpha_2}\\
& \qquad\qquad\ -\max_{r\in[\ckrp,r_+]}\partial_r(1-\mu)(r,\varpi_M)\leq\nonumber\\
& \qquad\qquad\qquad\ \ -\kappa\partial_r(1-\mu)(u,v)\nonumber\\
&\qquad\qquad\qquad\qquad\leq-C_{\kappa,2}\min_{r\in[\ckrp,r_+]}\partial_r(1-\mu)(r,\varpi_m)\nonumber\\
& \qquad\qquad\qquad\qquad\leq-\partial_r(1-\mu)(r_+,\varpi_0)+\delta=-2\kk+\delta=:-c_{\alpha,2}.\label{c_alpha_2}
\end{align}

Applying again a generalized version of Lemma~\ref{our_Gronwall3}, this time 
to $e^{c_{\alpha,2} v}\frac{\zeta}{\nu}(u,v)$
(as was done in~\eqref{constant10}), and carefully taking
the supremum over the exact interval $[u_\lambda(v),u]$ (due to the unboundedness of the exponential term over the apparent horizon), leads to
\begin{eqnarray}
&&\Bigl|\frac{\zeta}{\nu}\Bigr|(u,v)\leq\label{nova}\\
&& C\left(\sup_{\tilde u\in[u_\lambda(v),u]}\Bigl|\frac\zeta\nu\Bigr|(\tilde u,v_\lambda(\tilde u))
e^{c_{\alpha,2}v_\lambda(\tilde u)}+
\int_{v_\lambda(u)}^v e^{c_{\alpha,2}\tilde v}|\theta|(u_\lambda(\tilde v),\tilde v)\,d\tilde v\right)e^{-c_{\alpha,2} v}.
\nonumber
\end{eqnarray}
Suppose first that $c_{\alpha,2}<\frac{C_1}{2}$, which amounts to
$\s>2$ (see~\eqref{AA}).
For the first term in~\eqref{nova} we use~\eqref{constant}
and for the second we use
the decay~\eqref{theta_l_ii} for $\theta$ along the apparent horizon. We get
\begin{eqnarray}
&&\Bigl|\frac{\zeta}{\nu}\Bigr|(u,v)\label{zn_ii_a}
\leq Ce^{-(2\kk-\delta) v}.
\end{eqnarray}
Here and elsewhere we use $\delta$ to mean a parameter that can be made arbitrarily small by taking $U$ sufficiently small, $V$ sufficiently large and $\ckrp$ sufficiently close to $r_+$. In this last inequality, it collects all the previous small quantities, also denoted by $\delta$, arising in the products of the exponentials.

In the case that $\s<2$, we use~\eqref{three_1} and~\eqref{theta_l_ii} to obtain
\begin{eqnarray}
&&\Bigl|\frac{\zeta}{\nu}\Bigr|(u,v)\label{constant4_new}
\leq Ce^{-(\s\kk-\delta)v}.
\end{eqnarray}

From 
$$
\nu(u,v)=\nu(u,v_\lambda(u))e^{\int_{v_\lambda(u)}^v[\kappa\partial_r(1-\mu)](u,\tilde v)\,d\tilde v},
$$
and using \eqref{nu_A}, \eqref{C_alpha_2} and~\eqref{c_alpha_2},
we conclude that
$$
-Ce^{C_\alpha (v_\lambda(u)-V)}e^{C_{\alpha,2} (v-v_\lambda(u))}\leq\nu(u,v)
\leq-ce^{c_\alpha (v_\lambda(u)-V)}e^{c_{\alpha,2} (v-v_\lambda(u))},
$$
implying
\begin{equation}
\label{nu_2}
-Ce^{C_{\alpha,3} v}\leq\nu(u,v)
\leq-ce^{c_{\alpha,3} v},
\end{equation}
where $c_{\alpha,3}=\min\{c_\alpha,c_{\alpha,2}\}$ and $C_{\alpha,3}=\max\{C_\alpha,C_{\alpha,2}\}$.

Using
$$
\ckrp-(r_++o(1))=\ckrp-r(u_\lambda(v),v)=\int_{u_\lambda(v)}^{\ugrp(v)}\nu(\tilde u,v)\,d\tilde u
$$
and our bounds for $\nu$, we get
\begin{eqnarray*}
&&u_\lambda(v)+(r_+-\ckrp+o(1))c\,e^{-C_{\alpha,3}v}\\
&&\qquad\qquad\qquad\qquad\qquad\qquad\leq\ugrp(v)\leq\\
&&\qquad\qquad\qquad\qquad\qquad\qquad\qquad u_\lambda(v)+(r_+-\ckrp+o(1))Ce^{-c_{\alpha,3}v}.
\end{eqnarray*}
Taking into account our bounds~\eqref{u_lambda_ii} for $u_\lambda$, we obtain
\begin{equation}
\label{u_rp}
ce^{-C_{\alpha,3}v}\leq\ugrp(v)\leq Ce^{-c_{\alpha,3}v}.
\end{equation}
Thus,
\begin{equation}
\label{v_rp}
\frac{1}{C_{\alpha,3}}\ln\left(\frac{c}{u}\right)\leq\vgrp(u)\leq
\frac{1}{c_{\alpha,3}}\ln\left(\frac{C}{u}\right).
\end{equation}

From~\eqref{u_rp}, we see that for $(u,v) \in J^-(\Gamma_{\ckrp})$
\begin{equation}
\label{u_rp_v}
u\leq u_{\ckrp}(v)\leq Ce^{-(2\kk-\delta)v}.
\end{equation}

We can estimate $\nu$ over $\Gamma_{\ckrp}$ by
combining~\eqref{nu_2} with~\eqref{v_rp}:
\begin{equation}
\label{nu_rp}
-C\left(\frac{1}{u}\right)^{\frac{C_{\alpha,3}}{c_{\alpha,3}}}\leq
\nu(u,\vgrp(u))\leq-c\left(\frac{1}{u}\right)^{\frac{c_{\alpha,3}}{C_{\alpha,3}}}.
\end{equation}

From~\eqref{lambda_u} and~\eqref{kappa_u} we obtain
\begin{equation}\label{mu_u}
 \partial_u(1-\mu)=\partial_u\Bigl(\frac\lambda\kappa\Bigr)
 =\nu\partial_r(1-\mu)-(1-\mu)\frac{\nu}{r}\Bigl(\frac\zeta\nu\Bigr)^2.
\end{equation}
So, for points in $\{(u,v)\in J^+({\cal A}): v\geq V\}$, we have, taking into account the definition of $\varpi_m$, 
$$
\partial_u(1-\mu)\leq\nu\partial_r(1-\mu)\leq\nu\partial_r(1-\mu)(r,\varpi_m).
$$
It follows that, for $\ckrp=r_+-\delta$,
\begin{eqnarray}
(1-\mu)(u,v)&=&\int_{u_\lambda(v)}^{\ugrp(v)}\partial_u(1-\mu)(\tilde u,v)\,d\tilde u\nonumber\\
&\leq&\int_{r_+}^{\ckrp}\partial_r(1-\mu)(\tilde r,\varpi_0)\,d\tilde{r}-2\int^{\ckrp}_{r_+}
\frac{\varpi_0-\varpi_m}{\tilde r^2}\,d\tilde r\nonumber\\
&=&(1-\mu)(\ckrp,\varpi_0)+2(\varpi_0-\varpi_m)\frac{r_+-\ckrp}{r_+\ckrp}\nonumber\\ 
&\leq&-\left(\frac{2\kk}{1+\eps}
-\frac{2(\varpi_0-\varpi_m)}{r_+\ckrp}
\right)\delta,\label{negative}
\end{eqnarray}
where $0<\eps<1$ is fixed,
provided $\delta$ is sufficiently small. 

Suppose that $(u,v)\in \Gamma_{\ckrp}$. Integrating~\eqref{mu_u} yields
$$
(1-\mu)(u,v)=\int_{u_\lambda(v)}^{\ugrp(v)}e^{-\int_{\tilde u}^u\bigl(\frac{\nu}{r}\bigl(\frac\zeta\nu\bigr)^2\bigr)(\bar u,v)\,d\bar u}
\nu\partial_r(1-\mu)(\tilde u,v)\,d\tilde u.
$$
In this expression, we can use
$$
e^{-\int_{\tilde u}^u\bigl(\frac{\nu}{r}\bigl(\frac\zeta\nu\bigr)^2\bigr)(\bar u,v)\,d\bar u}
\leq\left(\frac{r_+}{\ckrp}\right)^{C_{\ckrp}^2}
$$
and
$$
\nu\partial_r(1-\mu)\geq\nu\partial_r(1-\mu)(r,\varpi_0)+\nu\frac{2(\varpi_M-\varpi_0)}{r^2}.
$$
For $\ckrp=r_+-\delta$, we then obtain
\begin{eqnarray}
&&(1-\mu)(u,v)\geq\nonumber\\
&&\qquad\left(\frac{r_+}{\ckrp}\right)^{C_{\ckrp}^2}
\left((1-\mu)(\ckrp,\varpi_0)-\,\frac{2(\varpi_M-\varpi_0)(r_+-\ckrp)}{r_+\ckrp}\right)\geq\nonumber\\
&&\qquad-\,\left(\frac{r_+}{\ckrp}\right)^{C_{\ckrp}^2}
\left(\frac{2\kk}{1-\eps}+\frac{2(\varpi_M-\varpi_0)}{r_+\ckrp}\right)\,\delta,\label{negative_below}
\end{eqnarray}
where $\eps$ is any fixed positive number,
provided $\delta$ is sufficiently small. 

Combining~\eqref{negative} and~\eqref{negative_below}
with~\eqref{kappa_2}, we have, for $(u,v)\in\Gamma_{\ckrp}$
\begin{eqnarray}
&&-\left(\frac{r_+}{\ckrp}\right)^{C_{\ckrp}^2}
\left(\frac{2\kk}{1-\eps}+\frac{2(\varpi_M-\varpi_0)}{r_+\ckrp}\right)\,\delta
\leq\lambda(u,v)\label{lambda_2}\\
&&\qquad\qquad\qquad\qquad
\leq-C_{\kappa,2}
\left(\frac{2\kk}{1+\eps}-\frac{2(\varpi_0-\varpi_m)}{r_+\ckrp}\right)\delta.
\nonumber
\end{eqnarray}

Inequality~\eqref{dlmz_2} and equation~\eqref{lambda_u} imply that $\partial_u\lambda<0$ in $J^-(\Gamma_{\ckrp})\cap J^+({\cal A})$.
Thus, an equality analogous to~\eqref{bar} together with~\eqref{lambda_2} yields
\begin{eqnarray*}
|\theta(u,v)-\theta(u_\lambda(v),v)|&\leq& |\lambda|(u,v)\max_{[u_\lambda(v),u]\times\{v\}}\Bigl|\frac{\zeta}{\nu}\Bigr|
\int_{u_\lambda(v)}^u\left[\frac{-\nu}{r}\right](\tilde u,v)\,d\tilde u\\
&\leq&C\ln\left({\frac{r_+}{\ckrp}}\right)
\max_{[u_\lambda(v),u]\times\{v\}}\Bigl|\frac{\zeta}{\nu}\Bigr|.
\end{eqnarray*}

When $\s>2$, using~\eqref{zn_ii_a},
$$
|\theta(u,v)-\theta(u_\lambda(v),v)|\leq C
e^{-(2\kk-\delta)v}.
$$
In this situation,
$$
2\kk<\frac{C_1}{2}=\s\kk-\eps
$$
(for sufficiently small $\eps$),
and so, in view of~\eqref{theta_l_ii},
\begin{equation}\label{theta_rp_ii_a}
|\theta|(u,v)\leq C
e^{-(2\kk-\delta)v}
\end{equation}
in $J^-(\Gamma_{\ckrp})\cap J^+({\cal A})$.

When  $\s<2$, using~\eqref{constant4_new},
\begin{eqnarray}
|\theta|(u,v)&\leq&
\label{theta_rp_ii_c} C
e^{-(\s\kk-\delta)v}.
\end{eqnarray}

Note that, as mentioned at the beginning of this section, the decay of $\theta$ has been overrun by that of $\frac{\zeta}{\nu}$.

\section{The region $J^-(\Gamma_{\ckrm})\cap J^+(\Gamma_{\ckrp})$}\label{section6}

In this section we focus on the region $J^-(\Gamma_{\ckrm})\cap J^+(\Gamma_{\ckrp})$, for a given ${\ckrm} \in \left]r_-,r_0\right[$, which will later be set conveniently close to $r_-$. From this point on our analysis will follow closely the methods used in~\cite{relIst2} and \cite{relIst3}. We will prove that the exponential decay for $\frac{\zeta}{\nu}$ and $\theta$, obtained in the previous section, persists in this new region. This is a consequence of the fact that the overall contribution of the redshift and blueshift effects is essentially neutral here, and so the decays carry over from $\Gamma_{\ckrp}$ to $\Gamma_{\ckrm}$. As in Section~5 of~\cite{relIst2}, the solution still behaves qualitatively as the Reissner-Nordstr\"om solution: $\varpi$ is close to $\varpi_0$, $\kappa$ is close to $1$ and $\zeta, \theta$ are close to $0$.

More precisely, we start by bounding $1-\mu$ from above by a negative constant,
and $\partial_r(1-\mu)$ from below. However,
we do not estimate the pair $\frac{\zeta}{\nu}$ and $\theta$,
as we did in the previous two sections, because we do not have $\partial_u\lambda<0$,
and so it is not easy to bound the double integral of $\frac{\nu\lambda}{r}$ appearing in the two-dimensional Gronwall's inequality.
Instead, we go on to
estimate the pair $\frac{\zeta}{\nu}$ and $\frac{\theta}{\lambda}$
as in \cite{relIst2}, using equation~(54) therein;
the bounds on $1-\mu$ and $\partial_r(1-\mu)$
allow us to obtain an upper bound for the exponentials in that formula.
Estimates for $\kappa$ from below and
$\varpi$ from above follow. Moreover, $1-\mu$ is clearly bounded from below.
Integrating the Raychaudhuri equations, and using the estimates for $\lambda$ and $\nu$
over $\Gamma_{\ckrp}$ together with the bounds on $1-\mu$, lead to
bounds for $\lambda$ and $\nu$. Finally, we obtain estimates for 
$v_{\ckrm}$ and $u_{\ckrm}$, which can be used to improve our previous estimates on
$\frac{\zeta}{\nu}$ and $\frac{\theta}{\lambda}$. The estimate for $\theta$ is essentially the same as the estimate for $\frac{\theta}{\lambda}$, as $\lambda$ is bounded.

In $J^+({\cal A})$, the mass is bounded below by $\varpi_m$ (recall~\eqref{omega_m}). We assume that
$\ckrm$ is sufficiently close to $r_-$ so that
$(1-\mu)(\ckrp,\varpi_0)\leq(1-\mu)(\ckrm,\varpi_0)$. Then,
for $\ckrm\leq r\leq\ckrp$,
\begin{eqnarray*}
(1-\mu)(r,\varpi_m)&=&(1-\mu)(r,\varpi_0)+\frac{2(\varpi_0-\varpi_m)}{r}\\
&\leq&(1-\mu)(\ckrm,\varpi_0)+\frac{2(\varpi_0-\varpi_m)}{\ckrm}\\
&=&(1-\mu)(\ckrm,\varpi_m).
\end{eqnarray*}
So,
in the region $J^-(\Gamma_{\ckrm})\cap J^+(\Gamma_{\ckrp})$, we have
\begin{equation}
(1-\mu)(r,\varpi)\ \leq\ (1-\mu)(r,\varpi_m)\ \leq\  (1-\mu)(\ckrm,\varpi_m)\ <\ 0,\label{sun}
\end{equation}
provided that $V$ is chosen sufficiently large for $\varpi_m$ to be close enough to
$\varpi_0$, so that $(1-\mu)(\ckrm,\varpi_m)<0$. The inequality $(1-\mu)(\ckrp,\varpi_m)\leq(1-\mu)(\ckrm,\varpi_m)$ follows from $(1-\mu)(\ckrp,\varpi_0)\leq(1-\mu)(\ckrm,\varpi_0)$ as above.
Moreover, for $\ckrm\leq r\leq\ckrp$,
\begin{eqnarray}
\partial_r(1-\mu)(r,\varpi)&=&\partial_r(1-\mu)(r,\varpi_0)-\,\frac{2(\varpi_0-\varpi)}{r^2}\nonumber\\
&\geq&\partial_r(1-\mu)(\ckrm,\varpi_0)-\,\frac{2(\varpi_0-\varpi_m)}{\ckrm^2}\nonumber\\
&\geq&\partial_r(1-\mu)(\ckrm,\varpi_m).\label{moon}
\end{eqnarray}
For the first inequality above, see the beginning of Section~3 in \cite{relIst2}.
Combining~\eqref{sun} with~\eqref{moon}, we have
in the region $J^-(\Gamma_{\ckrm})\cap J^+(\Gamma_{\ckrp})$
$$
\frac{\partial_r(1-\mu)}{(1-\mu)}\leq
\frac{\partial_r(1-\mu)(\ckrm,\varpi_m)}{(1-\mu)}
\leq\frac{\partial_r(1-\mu)(\ckrm,\varpi_m)}{(1-\mu)(\ckrm,\varpi_m)}
=:c_{\ckrm}.
$$
Thus, the exponentials in~(54) of \cite{relIst2} can be bounded in the following way:
\begin{equation}\label{c}
e^{-\int_{\vgrp(u)}^v[\kappa\partial_r(1-\mu)](u,\tilde v)\,d\tilde v}\leq e^{c_{\ckrm}(\ckrp-\ckrm)}=
:C.
\end{equation}

We will use Bondi coordinates $(r,v)$, where
\begin{equation}
(u,v)\mapsto(r(u,v),v)\quad \Leftrightarrow\quad (r,v)\mapsto (\ug(v),v).
\label{coordinates}
\end{equation}
We denote by $\hatz$ the function $\frac\zeta\nu$ written in Bondi coordinates, so that
$$
\frac\zeta\nu(u,v)=\hatz(r(u,v),v)
\quad \Leftrightarrow\quad
\hatz(r,v)=\frac\zeta\nu(\ug(v),v).
$$
The same notation will be used for other functions.

Let $r \in \left[\ckrm,\ckrp\right]$.
As in Section~5 of \cite{relIst2}, for $s \in \left[r,\ckrp\right]$, define
\begin{equation}\label{z}
 {\cal Z}_{(r,v)}(s)=\max_{\tilde v\in[\vgs(\ug(v)),v]}\Bigl|\hatz\Bigr|(s,\tilde v)
\end{equation}
and
\begin{equation}\label{t}
{\cal T}_{(r,v)}(\ckrp)=\max_{\tilde v\in[\vgrp(\ug(v)),v]}\Bigl|\widehat{\frac\theta\lambda}\Bigr|(\ckrp,\tilde v).
\end{equation}
Recall that at the beginning of Section~\ref{section5} we chose $U \leq u_\lambda(V)$; this guarantees that $\vgs(u) \geq V$ for all $s \leq \ckrp$ and $0 < u \leq U$, and so the quantities above are well defined.

\begin{center}
\begin{turn}{45}
\begin{psfrags}
\psfrag{p}{{\tiny $(u_r(v),v)$}}
\psfrag{j}{{\tiny $\Gamma_s$}}
\psfrag{g}{{\tiny $\Gamma_{\ckrp}$}}
\psfrag{h}{{\tiny ${\cal A}$}}
\psfrag{a}{{\tiny $v_s(u_r(v))$}}
\psfrag{b}{{\tiny $v$}}
\psfrag{c}{{\tiny $v_{\ckrp}(u_r(v))$}}
\psfrag{u}{{\tiny $v$}}
\psfrag{v}{{\tiny $u$}}
\psfrag{f}{{\tiny $u_\lambda(V)$}}
\psfrag{t}{{\tiny $U$}}
\psfrag{x}{{\tiny $u_r(v)$}}
\psfrag{y}{{\tiny $V$}}
\includegraphics[scale=1]{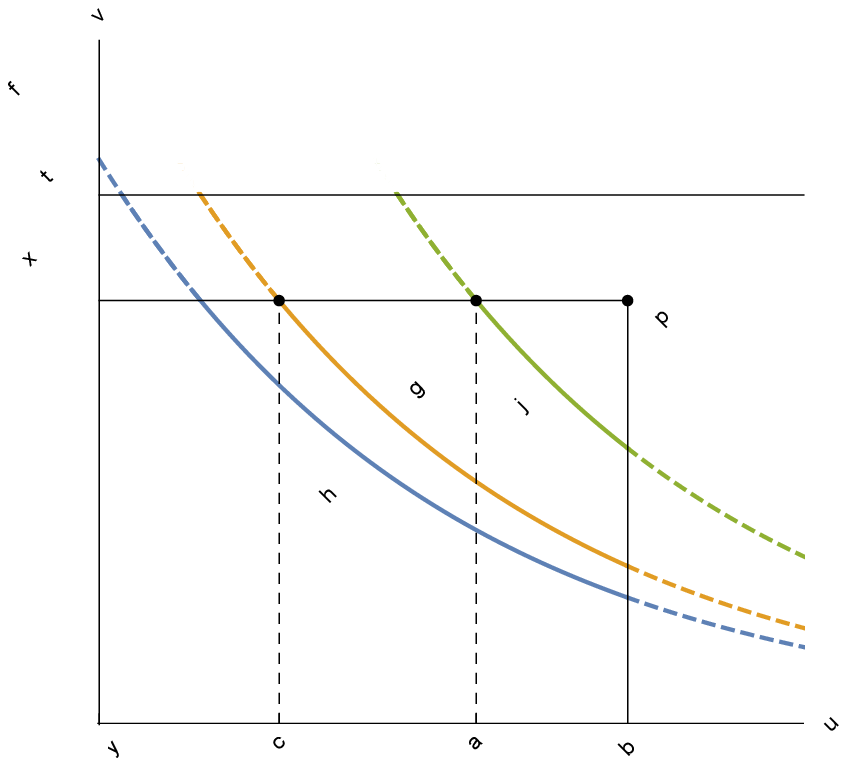}
\end{psfrags}
\end{turn}{45}
\end{center}

Let us define
\begin{equation}\label{ls}
l(s):=
\begin{cases}
\frac{\s}{2}&{\rm if}\ \s\leq 2 \\
1&{\rm if}\ \s>2.\\
\end{cases}
\end{equation}

From~\eqref{zn_ii_a} if $\s>2$, and~\eqref{constant4_new}
if $\s\leq 2$, we get
\begin{eqnarray}
\label{part-1a_ii_a}
{\cal Z}_{(r,v)}(\ckrp)&\leq& 
Ce^{-(2\kk l(s)-\delta)v_{\ckrp}(u_r(v))}.
\end{eqnarray}
Similarly, using \eqref{theta_rp_ii_a} if $\s>2$, 
and \eqref{theta_rp_ii_c} if $\s \leq 2$, together with \eqref{lambda_2}, we obtain
\begin{eqnarray}
\label{part-2a_ii_a}
{\cal T}_{(r,v)}(\ckrp)&\leq& C
e^{-(2\kk l(s)-\delta)v_{\ckrp}(u_r(v))}.
\end{eqnarray}
Arguing as in the proof of Lemma~5.1 of \cite{relIst2} leads to
\begin{equation}\label{zf}
{\cal Z}_{(r,v)}(r)\leq C\Bigl[{\cal Z}_{(r,v)}(\ckrp)+C\ln\Bigl(\frac{\ckrp}{r}\Bigr){\cal T}_{(r,v)}(\ckrp)\Bigr]e^{\frac{C^2(\ckrp-r)^2}{r\ckrp}}.
\end{equation}
Here the constant $C$ is as in~\eqref{c}.
Substituting~\eqref{part-1a_ii_a} and~\eqref{part-2a_ii_a} in~\eqref{zf} finally yields the key estimate
\begin{equation}\label{z-0_ii_a}
\Bigl|\hatz\Bigr|(r,v)\leq 
Ce^{-(2\kk l(s)-\delta)v_{\ckrp}(u_r(v))},
\end{equation}
for $\ckrm\leq r\leq\ckrp$.

From~\eqref{z-0_ii_a} we can now estimate the remaining quantities. Continuing to argue as in the proof of Lemma~5.1 of \cite{relIst2}, we have, using~\eqref{c},
\begin{eqnarray}
\Bigl|\widehat{\frac\theta\lambda}\Bigr|(r,v)&\leq&C\Bigl|\widehat{\frac\theta\lambda}\Bigr|(\ckrp,v)
+C\int_{r}^{\ckrp}\Bigl[\Bigl|\hatz\Bigr|\frac 1 {\tilde s}\Bigr](\tilde s,v)\,d\tilde s,\label{zfq}
\end{eqnarray}
again with the constant $C$ as in~\eqref{c}.
We then use~\eqref{part-2a_ii_a} and~\eqref{z-0_ii_a} in~\eqref{zfq}
to obtain
\begin{equation}\label{t-0_ii_a}
\Bigl|\widehat{\frac\theta\lambda}\Bigr|(r,v)\leq 
Ce^{-(2\kk l(s)-\delta)v_{\ckrp}(u_r(v))},
\end{equation}
for $\ckrm\leq r\leq\ckrp$.

Inequalities~\eqref{z-0_ii_a} and \eqref{t-0_ii_a},
together with~\eqref{v_rp}, show that, given $\hat{\delta}>0$, we can choose
$U$ sufficiently small so that $\bigl|\frac{\zeta}{\nu}\bigr|<\hat{\delta}$ and $\bigl|\frac{\theta}{\lambda}\bigr|<\hat{\delta}$
in the region $J^-(\Gamma_{\ckrm})\cap J^+(\Gamma_{\ckrp})$. Arguing as 
in~\eqref{kappa_2}, we conclude that
\begin{equation}\label{kappa_3}
C_{\kappa,3}:=C_{\kappa,2}\left(\frac{\ckrm}{\ckrp}\right)^{\hat{\delta}^2}\leq\kappa\leq 1,
\end{equation}
in $J^-(\Gamma_{\ckrm})\cap J^+(\Gamma_{\ckrp})$.

A version of~\eqref{omega_final} together with \eqref{omega_m} and~\eqref{omega_M} imply that, for $(u,v)\in
J^-(\Gamma_{\ckrm})\cap J^+(\Gamma_{\ckrp})$,
\begin{eqnarray*}
\varpi_0+o(1)= \varpi_m\leq& &\\
\varpi(u_{\ckrp}(v),v)\leq&\varpi(u,v)&\leq\varpi(u_{\ckrp}(v),v)+C\hat{\delta}^2
\\
&&\leq\varpi_M+C\hat{\delta}^2\\
&&\leq\varpi_0+o(1)+C\hat{\delta}^2.
\end{eqnarray*}

The proof of Lemma~5.2 of \cite{relIst2} shows that the curve $\Gamma_{\ckrm}$
intersects every line of constant $u$, and so 
$\lim_{v\to\infty}u_{\ckrm}(v)=0$.
In particular, the inequalities above
imply that
\begin{equation}\label{lim_varpi}
\lim_{\stackrel{(u,v)\in{J^-(\Gamma_{\ckrm})}}{{\mbox{\tiny{$v\to\infty$}}}}}\varpi(u,v)=\varpi_0.
\end{equation}

We rewrite~\eqref{kappa_3} in the form
$$
C_{\kappa,3}\leq\frac{\lambda}{1-\mu}(u,v)\leq 1.
$$
We have bounded $1-\mu$ from above by a negative constant in~\eqref{sun}.
Since $\varpi$ is bounded above, and $r$ is bounded below by a positive constant, $1-\mu$ is also
bounded from below, and so there exist positive
constants $\tilde c$ and $C$ such that
\begin{equation}\label{tilde_c}
-C\leq\lambda\leq -\tilde c
\end{equation}
$J^-(\Gamma_{\ckrm})\cap J^+(\Gamma_{\ckrp})$.

Integrating the Raychaudhuri equation~\eqref{ray_v_bis}
and taking into account the estimate
\begin{equation}\label{perto_um}
\Bigl(\frac{\ckrm}{r_+}\Bigr)^{\hat\delta^2}\leq e^{\int_{v_{\ckrp}(u)}^v\bigl((\frac{\theta}{\lambda})^2\frac{\lambda}{r}\bigr)(u,\tilde v)\,d\tilde v}\leq 1
\end{equation}
(which uses $\bigl|\frac{\theta}{\lambda}\bigr|<\hat{\delta}$), we deduce
\begin{equation}\label{nu_mu}
\Bigl(\frac{\ckrm}{r_+}\Bigr)^{\hat\delta^2}\frac{\nu}{1-\mu}(u,v_{\ckrp}(u))
\leq
\frac{\nu}{1-\mu}(u,v)\leq\frac{\nu}{1-\mu}(u,v_{\ckrp}(u)).
\end{equation}
Hence, combining the estimate~\eqref{nu_rp} with our bounds for $1-\mu$ in the region
$J^-(\Gamma_{\ckrm})\cap J^+(\Gamma_{\ckrp})$,
we arrive at
\begin{equation}
\label{nu_rm}
-C\left(\frac{1}{u}\right)^{\frac{C_{\alpha,3}}{c_{\alpha,3}}}\leq
\nu(u,v)\leq-c\left(\frac{1}{u}\right)^{\frac{c_{\alpha,3}}{C_{\alpha,3}}},
\end{equation}
for $(u,v)\in J^-(\Gamma_{\ckrm})\cap J^+(\Gamma_{\ckrp})$.

Integrating~\eqref{tilde_c} between $v_{\ckrp}(u)$ and $v$, for 
$(u,v)\in J^-(\Gamma_{\ckrm})\cap J^+(\Gamma_{\ckrp})$, we obtain
\begin{equation}\label{diff}
v-v_{\ckrp}(u)\leq{\textstyle\frac{\ckrp-r}{\tilde c}}=:
 c_{r,\ckrp}\leq c_{\ckrm,\ckrp}.
\end{equation}
Taking into account~\eqref{v_rp}, we have
\begin{equation}
\label{v_rm}
\frac{1}{C_{\alpha,3}}\ln\left(\frac{c}{u}\right)\leq v_{\ckrm}(u)\leq
c_{\ckrm,\ckrp}+\frac{1}{c_{\alpha,3}}\ln\left(\frac{C}{u}\right).
\end{equation}
This yields
\begin{eqnarray}
\label{u_rm}
ce^{-C_{\alpha,3}v}\ \leq\ u_{\ckrm}(v)&\leq& Ce^{c_{\alpha,3}c_{\ckrm,\ckrp}}
e^{-c_{\alpha,3}v}\\
&=&Ce^{-c_{\alpha,3}v}.\nonumber
\end{eqnarray}
Notice that this constant $C$ blows up as $\ckrp\nearrow r_+$ and $\ckrm\searrow r_-$, because $\tilde c$ approaches zero.

Using the estimate~\eqref{diff}, we can obtain improved estimates for
$\bigl|\frac{\zeta}{\nu}\bigr|$ and $\bigl|\frac{\theta}{\lambda}\bigr|$
in $J^-(\Gamma_{\ckrm})\cap J^+(\Gamma_{\ckrp})$, given by
\begin{equation}\label{zt-0_ii_a}
\Bigl|\frac\theta\lambda\Bigr|(u,v)+\Bigl|\frac{\zeta}{\nu}\Bigr|(u,v)\leq 
Ce^{-(2\kk l(s)-\delta)v}.
\end{equation}

From \eqref{tilde_c} we also conclude that
\begin{equation}
|\theta|(u,v)\leq 
Ce^{-(2\kk l(s)-\delta)v}. \label{theta_no_shift}
\end{equation}

\section{The region $J^-(\gamma)\cap J^+(\Gamma_{\ckrm})$}\label{section7}

As in Section~6 of \cite{relIst2}, we define a spacelike curve $\gamma=\gamma_{\ckrm,\beta}$ to the future of
$\Gamma_{\ckrm}$, parameterized by
\begin{equation}\label{small_gamma}
u\mapsto\big(u,(1+\beta)\,\vgr(u))=:(u,v_\gamma(u))
\end{equation}
for $u\in[0,U]$, where
\begin{equation}\label{beta_ii} 
0<\beta<{\textstyle\frac 12\left(\sqrt{1+8l(s)\frac{\kk}{\kkk}}-1\right)}.
\end{equation}
Here $k_- := \frac12 \left|\partial_r(1-\mu)(r_-,\varpi_0)\right|$ denotes the surface gravity of the Cauchy horizon for the Reissner-Nordstr\"{o}m black hole with parameters $r_-$ and $\varpi_0$.
Unlike the boundaries of the regions studied in the previous sections, this curve is not a level set of the radius function. Its purpose is to probe the geometry of the region where the blueshift effect, which is dominant at the Cauchy horizon, starts being felt. This is characterized by an exponential growth of the form $e^{2k_- v}$
Nevertheless, to the past of $\gamma$ the function $r$ is bounded below and the mass $\varpi$ is bounded above, so that
the solution still behaves qualitatively as in the interior of the Reissner-Nordstr\"om black hole. More precisely, we have
\begin{Lem}\label{boot_r}
For each $\beta$ as above, there exist $\overline{\ckrm} \in \left]r_-, r_0\right[$ and $\eps_0\in\left]0,r_-\right[$ for which, whenever\/ $\ckrm$ and $\eps$ are chosen satisfying $\ckrm\in\left]r_-,\overline{\ckrm}\right]$ and $\eps\in\left]0,\eps_0\right]$, the following holds:
there exists $U_\eps$ (depending on $\ckrm$ and $\eps$) such that if $(u,v)\in J^-(\gam)\cap J^+(\cg_{\ckrm})$,
with $0<u\leq U_\eps$,
then 
\begin{equation}\label{r_omega_2}
r(u,v)\geq r_--{\textstyle\frac\eps 2}\quad{\rm and}\quad\varpi(u,v)\leq\varpi_0+{\textstyle\frac\eps 2}.
\end{equation}
\end{Lem}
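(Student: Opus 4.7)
The plan is a bootstrap argument. Consider the region
\[
{\cal R}_\eps := \{(u,v) \in J^-(\gam)\cap J^+(\Gamma_{\ckrm}) : 0<u\leq U_\eps\},
\]
with $U_\eps$ small, to be fixed. Assume as bootstrap that $r\geq r_--\eps$ and $\varpi\leq\varpi_0+\eps$ on the connected component of their common validity set in ${\cal R}_\eps$ that contains a piece of $\Gamma_{\ckrm}$; by continuity of $r,\varpi$, by $r=\ckrm\geq r_-$ on $\Gamma_{\ckrm}$ and by~\eqref{lim_varpi}, this component is nonempty for $U_\eps$ small enough. I will improve the bootstrap bounds on this component to $r\geq r_--\eps/2$ and $\varpi\leq\varpi_0+\eps/2$; the usual openness/closedness argument will then extend them to all of ${\cal R}_\eps$.

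Under the bootstrap, $|1-\mu|\leq C(\ckrm-r_-+\eps)$ (since $(1-\mu)(r_-,\varpi_0)=0$) and $\kappa\,\partial_r(1-\mu)\geq -(2\kkk+\delta)$, with $\delta\to 0$ as $\eps$, $\ckrm-r_-$ and $U_\eps$ shrink. Integrating~\eqref{rafaeli} from $v_{\ckrm}(u)$, the blueshift-induced exponential factor is at most $\exp((2\kkk+\delta)(v-v_{\ckrm}(u)))\leq \exp((2\kkk+\delta)\beta v_{\ckrm}(u))$ on ${\cal R}_\eps$. Coupling this with the integrated form of~\eqref{theta_u} via a two-dimensional Gronwall argument modeled on Lemmas~\ref{our_Gronwall3}--\ref{our_Gronwall4}, and inputting the boundary decay \eqref{zt-0_ii_a}--\eqref{theta_no_shift} from $\Gamma_{\ckrm}$, one obtains
\[
\Bigl|\frac\zeta\nu\Bigr|(u,v) + |\theta|(u,v) \leq C\exp(-\alpha(\s,\beta)\,v_{\ckrm}(u))
\]
on ${\cal R}_\eps$, with $\alpha(\s,\beta)>0$ precisely when $\beta(\beta+1)<2l(\s)\kk/\kkk$. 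The quadratic bound~\eqref{beta_ii} is exactly what guarantees $\alpha(\s,\beta)>0$; the extra factor $\beta+1$ reflects the two-way feedback between $\theta$ and $\zeta/\nu$ in the coupled double-integral Gronwall estimate.

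To close the bootstrap, integrate~\eqref{omega_v} and $\partial_v r=\lambda=\kappa(1-\mu)$ from $v_{\ckrm}(u)$. The mass increment $\int_{v_{\ckrm}(u)}^v\theta^2/(2\kappa)\,d\tilde v$ is exponentially small in $v_{\ckrm}(u)\sim c\ln(1/u)$ and thus smaller than $\eps/4$ for $U_\eps$ small; combined with~\eqref{lim_varpi}, this gives the improved bound $\varpi\leq\varpi_0+\eps/2$. For the radius, $\partial_u\lambda=\nu\kappa\,\partial_r(1-\mu)>0$ in the trapped region (where $\partial_r(1-\mu)<0$) implies $|\lambda|$ is nonincreasing in $u$; the near-RN ODE $\partial_v(r-r_-)\approx -2\kkk\kappa(r-r_-)$, valid under the bootstrap once $\overline{\ckrm}$ is close enough to $r_-$, then forces $|\lambda|$ to decay exponentially in $v$ along each ingoing line, and a direct ODE analysis gives $r(u,v)\geq r_- - O(\ckrm-r_-+\eps)$ throughout ${\cal R}_\eps$. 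Choosing $\overline{\ckrm}$ close enough to $r_-$ and $\eps_0$ small enough produces the improved bound $r\geq r_- -\eps/2$. The main obstacle is the Gronwall estimate pitting blueshift amplification against the inherited redshift decay; the quadratic character of~\eqref{beta_ii}, rather than a merely linear bound, is the precise slack afforded by the coupled dynamics of $\theta$ and $\zeta/\nu$.
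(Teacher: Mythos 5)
Your overall bootstrap skeleton (assume $r\geq r_--\eps$, $\varpi\leq\varpi_0+\eps$, improve to $\eps/2$, conclude by openness/closedness) matches the paper, but the central quantitative step is wrong as stated. You claim a two-dimensional Gronwall argument gives the pointwise bound $\bigl|\frac{\zeta}{\nu}\bigr|+|\theta|\leq Ce^{-\alpha(\s,\beta)\vgr(u)}$ with $\alpha>0$ \emph{precisely} when $\beta(\beta+1)<2l(\s)\kk/\kkk$, attributing the factor $2$ to ``two-way feedback'' between $\theta$ and $\zeta/\nu$. That is not how the threshold \eqref{beta_ii} arises, and under \eqref{beta_ii} alone no such pointwise decay of $\frac{\zeta}{\nu}$ is available: integrating \eqref{zeta_nu} from $\cg_{\ckrm}$ one only gets (as in \eqref{zetaNuGamma}) an upper bound with exponent roughly $\frac{2\kk l(\s)}{1+\beta}-2\kkk\beta$, which is positive only under the strictly stronger condition $\beta(\beta+1)<l(\s)\kk/\kkk$; for $\beta$ in the range allowed by \eqref{beta_ii} this bound may grow. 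The actual mechanism in the paper is the quadratic structure of the mass equation \eqref{omega_u}: what must decay is the product integral $\int\bigl|\frac{\zeta}{\nu}\bigr||\zeta|\,d\tilde u$, in which the factor $\int|\zeta|\,d\tilde u$ inherits the unpenalized decay $e^{-(\frac{2\kk l(\s)}{1+\beta}-\delta)v}$ from the boundary estimates (via the analogue of \eqref{int_theta_zeta}, itself coming from the Gronwall-type inequality \eqref{Big} of Lemma~6.1 of \cite{relIst2}), while only the single factor $\bigl|\frac{\zeta}{\nu}\bigr|$ pays the blueshift $e^{2\kkk\beta v}$; the resulting exponent $\frac{4\kk l(\s)}{1+\beta}-2\kkk\beta$ is positive exactly under \eqref{beta_ii} (this is \eqref{K}--\eqref{success}). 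So the ``extra factor of two'' comes from paying the blueshift once while collecting the event-horizon decay twice in $\zeta^2/\nu$, not from any coupled-Gronwall feedback; as written, your key estimate is unproved and your derivation of the threshold does not close.

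There is also a gap in how you recover the radius bound. Your ODE argument $\partial_v(r-r_-)\approx-2\kkk\kappa(r-r_-)$ with forcing of size $O(\varpi-\varpi_0)$ only yields $r\geq r_--C(\ckrm-r_-+\eps)$ with an unspecified constant $C$, and shrinking $\eps_0$ or $\overline{\ckrm}$ does not remove a constant multiple of $\eps$, so it does not improve $r\geq r_--\eps$ to $r\geq r_--\eps/2$. The paper avoids this by decoupling the smallness parameters: it first closes the mass bound with a separate parameter $\bar\eps$ (using the product estimate above and $\lim_{v\to\infty}\varpi(\ugr(v),v)=\varpi_0$), and then, since $1-\mu\leq 0$ on $J^+({\cal A})$ and $1-\mu=(1-\mu)(r,\varpi_0)-\frac{2(\varpi-\varpi_0)}{r}$, deduces $(1-\mu)(r,\varpi_0)\leq\frac{\bar\eps}{r_--\eps_0}$; inspection of the graph of $(1-\mu)(\,\cdot\,,\varpi_0)$, which is bounded away from zero for $r\leq r_--\frac{\eps}{2}$, then forces $r>r_--\frac{\eps}{2}$ once $\bar\eps$ is chosen small relative to $\eps$. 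If you rework your proposal so that the bootstrap is closed through the product integral $\int\bigl|\frac{\zeta}{\nu}\bigr||\zeta|\,d\tilde u$ (for both the mass and the lower bound on $\kappa$) and the radius is recovered from the sign of $1-\mu$ rather than from an $O(\eps)$-forced ODE, you essentially recover the paper's argument.
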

\begin{proof}
Let $(u,v)\in J^-(\gamma)\cap J^+(\Gamma_{\ckrm})$ be such that
$r(u,v)\geq r_--\eps\geq r_--\eps_0$.
Recall from the proof of Lemma~6.1 of \cite{relIst2} that for 
$(u,v)\in J^-(\gamma)\cap J^+(\Gamma_{\ckrm})$ there exists a constant $\underline C$ 
(depending on $r_--\eps_0$) such that 
\begin{eqnarray}
&&\int_{\vgr(u)}^v|\theta|(u,\tilde v)\,d\tilde v+\int_{\ugr(v)}^u|\zeta|(\tilde u,v)\,d\tilde u\label{Big}\\
&&\qquad\leq\underline C\left(
\int_{\vgr(u)}^v|\theta|(\ugr(v),\tilde v)\,d\tilde v+\int_{\ugr(v)}^u|\zeta|(\tilde u,\vgr(u))\,d\tilde u
\right).\nonumber
\end{eqnarray}
Following the proof of Lemma~6.1 of \cite{relIst2},
we see that the crucial step is to bound the integral
$$
\int_{\ugr(v)}^u\Bigl[\Bigl|\frac\zeta\nu\Bigr||\zeta|\Bigr](\tilde u,v)\,d\tilde u,
$$
for $(u,v)\in J^-(\gamma)\cap J^+(\Gamma_{\ckrm})$,
by a function that goes to zero when $v$ goes to infinity. In order to do that, 
we bound the first integral on the right-hand side of~\eqref{Big} by 
using the estimates for $\theta$ obtained above.
These are \eqref{theta_l_ii} in ${\cal P}_\lambda$;
\eqref{theta_rp_ii_a}$-$\eqref{theta_rp_ii_c} in $J^-(\Gamma_{\ckrp})\cap J^+({\cal A})$;
and \eqref{theta_no_shift} in $J^-(\Gamma_{\ckrm})\cap J^+(\Gamma_{\ckrp})$.
In summary, the upper bound \eqref{theta_no_shift} can be used in $J^-(\Gamma_{\ckrm})$.
Therefore, using  $v-\vgr(u)\leq \beta\vgr(u)$ and $\vgr(u)=\frac{\vgam(u)}{1+\beta}\geq\frac{v}{1+\beta}$,
we have
\begin{eqnarray*}
\int_{\vgr(u)}^v|\theta|(\ugr(v),\tilde v)\,d\tilde v
&\leq&Ce^{-\left(2\kk l(s)-\delta\right)\vgr(u)}\beta\vgr(u)\\
&\leq&Ce^{-\left(\frac{2\kk l(s)}{1+\beta}-\delta\right)v}.
\end{eqnarray*}

To bound the second integral on the right-hand side of~\eqref{Big},
we use the estimates for $\frac{\zeta}{\nu}$ obtained above.
These are \eqref{constant} in ${\cal P}_\lambda$;
\eqref{zn_ii_a} and~\eqref{constant4_new}
in $J^-(\Gamma_{\ckrp})\cap J^+({\cal A})$;
and \eqref{zt-0_ii_a} in $J^-(\Gamma_{\ckrm})\cap J^+(\Gamma_{\ckrp})$. 
Thus, we can write
\begin{eqnarray*}
\int_{\ugr(v)}^u|\zeta|(\tilde u,\vgr(u))\,d\tilde u
&=&\int_{\ugr(v)}^u\left[\left|\frac{\zeta}{\nu}\right|(-\nu)\right](\tilde u,\vgr(u))\,d\tilde u\\
&\leq&C(\ckrm-r(u,v))e^{-(2\kk l(s)-\delta)\vgr(u)}\\
&\leq&Ce^{-\left(\frac{2\kk l(s)}{1+\beta}-\delta\right)v}.
\end{eqnarray*}

It follows that the left-hand side of~\eqref{Big} can be bounded by
\begin{eqnarray}
\int_{\vgr(u)}^v|\theta|(u,\tilde v)\,d\tilde v+\int_{\ugr(v)}^u|\zeta|(\tilde u,v)\,d\tilde u\leq
&Ce^{-\left(\frac{2\kk l(s)}{1+\beta}-\delta\right)v}.\label{int_theta_zeta}
\end{eqnarray}

In $J^+(\Gamma_{\ckrm})$ we have $\varpi\geq\varpi_m$ (see \eqref{omega_m}). So, for $(u,v)\in J^-(\gamma)\cap J^+(\Gamma_{\ckrm})$ with $r(u,v)\geq r_--\eps_0$,
as in~\eqref{moon},

\begin{equation}\label{desigual}
\partial_r(1-\mu)\geq \partial_r(1-\mu)(r_--\eps_0,\varpi_m),
\end{equation}
whence, using $\kappa \leq 1$,
\begin{eqnarray*}
e^{-\int_{\vgr(u)}^{v}[\kappa\partial_r(1-\mu)](u,\tilde v)\,d\tilde v}&\leq&
e^{-\partial_r(1-\mu)(r_--\eps_0,\varpi_m)\beta\vgr(u)}\\ &\leq& e^{-\partial_r(1-\mu)(r_--\eps_0,\varpi_m)\beta v}.
\end{eqnarray*}
Thus, integrating \eqref{zeta_nu} from $\Gamma_{\ckrm}$ (similar to \eqref{rafaeli}) we have
\begin{eqnarray} 
\nonumber
\Bigl|\frac\zeta\nu\Bigr|(u,v)&\leq&\Bigl|\frac\zeta\nu\Bigr|(u,\vgr(u))
e^{-\int_{\vgr(u)}^{v}[\kappa\partial_r(1-\mu)](u,\tilde v)\,d\tilde v}
\\
\nonumber
&&+\int_{\vgr(u)}^{v}\frac{|\theta|} r(u,\bar v)e^{-\int_{\bar v}^{v}[\kappa
\partial_r(1-\mu)](u,\tilde{v})\,d\tilde{v}}\,d\bar v
\\
\nonumber
&\leq&Ce^{-(2\kk l(s)-\delta)\vgr(u)}e^{-\partial_r(1-\mu)(r_--\eps_0,\varpi_m)\beta v}
\\
\nonumber
&&+\frac{e^{-\partial_r(1-\mu)(r_--\eps_0,\varpi_m)\beta v}}{r_--\eps_0}\int_{\vgr(u)}^{v}|\theta|(u,\bar v)\,d\bar v
\\
\nonumber
&\leq&Ce^{-\bigl(\frac{2\kk l(s)}{1+\beta}-\delta\bigr) v}e^{-\partial_r(1-\mu)(r_--\eps_0,\varpi_m)\beta v}
\\
\nonumber
&&+\frac{e^{-\partial_r(1-\mu)(r_--\eps_0,\varpi_m)\beta v}}{r_--\eps_0}Ce^{-\bigl(\frac{2\kk l(s)}{1+\beta}-\delta\bigr)v}
\\
&\leq&Ce^{-\bigl(\frac{2\kk l(s)}{1+\beta}+\partial_r(1-\mu)(r_--\eps_0,\varpi_m)\beta-\delta\bigr) v}.\label{zetaNuGamma}
\end{eqnarray}

From the previous estimate and \eqref{int_theta_zeta} we obtain
\begin{eqnarray}
&&\int_{\ugr(v)}^u\Bigl[\Bigl|\frac\zeta\nu\Bigr||\zeta|\Bigr](\tilde u,v)\,d\tilde u\nonumber\\ 
&&\ \ \ \ \leq
Ce^{-\bigl(\frac{2\kk l(s)}{1+\beta}+\partial_r(1-\mu)(r_--\eps_0,\varpi_m)\beta-\delta\bigr) v}
\int_{\ugr(v)}^u|\zeta|(\tilde u,v)\,d\tilde u\nonumber\\
&&\ \ \ \ \leq Ce^{-\bigl(\frac{4\kk l(s)}{1+\beta}+\partial_r(1-\mu)(r_--\eps_0,\varpi_m)\beta-\delta\bigr) v}.\label{K}
\end{eqnarray}

The constant in the exponent,
\begin{equation}\label{exponent}
\frac{4\kk l(s)}{1+\beta}+\partial_r(1-\mu)(r_--\eps_0,\varpi_m)\beta-\delta,
\end{equation}
is positive for 
\begin{equation}\label{success} 
\textstyle
\beta<\frac 12\left(
\sqrt{(1+\tilde\delta)^2-8
\frac{2\kk l(s)-\frac{\delta}{2}}{\partial_r(1-\mu)(r_--\eps_0,\varpi_m)}}-(1+\tilde\delta)
\right),
\end{equation}
where
$$
\textstyle
\tilde{\delta}=-\,\frac{\delta}{\partial_r(1-\mu)(r_--\eps_0,\varpi_m)}.
$$
The right-hand side tends to
$$
{\textstyle\frac 12\left(\sqrt{1+8l(s)\frac{\kk}{\kkk}}-1\right)}
$$
as $(\ckrp,\eps_0,\delta,\varpi_m)\to(r_+,0,0,\varpi_0)$.
So, if $\beta$ satisfies~\eqref{beta_ii}, we may choose $\ckrp$
sufficiently close to $r_+$, $\eps_0$ and $\delta$
sufficiently small, and
$U$ sufficiently small (which in $J^-(\gamma)$ implies $v$ sufficiently large,
so that $\varpi_m$
is sufficiently close to $\varpi_0$) so that~\eqref{success} holds.

Now that we have the bound \eqref{K} with the right-hand side going to zero as $v \to \infty$,
the formula
\begin{eqnarray*}
\varpi(u,v)&\leq&
\varpi(\ugr(v),v)e^{\frac{1}{r_--\eps_0}\int_{\ugr(v)}^u\bigl[\bigl|\frac\zeta\nu\bigr||\zeta|\bigr](\tilde u,v)\,d\tilde u}\\
&&+C 
\int_{\ugr(v)}^ue^{\frac{1}{r_--\eps_0}\int_{s}^u\bigl[\bigl|\frac\zeta\nu\bigr||\zeta|\bigr](\tilde u,v)\,d\tilde u}
\Bigl[\Bigl|\frac\zeta\nu\Bigr||\zeta|\Bigr](s,v)\,ds
\end{eqnarray*}
and the fact that $\lim_{v\to\infty}\varpi(\ugr(v),v)=\varpi_0$ (recall~\eqref{lim_varpi}) imply that for each $0 <\bar\eps < \eps_0$ there exists $\bar U_{\bar\eps} > 0$ such that
\begin{eqnarray*}
\varpi(u,v)
&\leq&\varpi_0+{\textstyle\frac{\bar\eps} 2},
\end{eqnarray*}
provided that  $u\leq\bar U_{\bar\eps}$.
Since $1-\mu$
is nonpositive in $J^+({\cal A})$ and $1-\mu=(1-\mu)(r,\varpi_0)-\frac{2(\varpi-\varpi_0)}{r}$, we have
$$\textstyle
(1-\mu)(r(u,v),\varpi_0)\leq\frac{2(\varpi(u,v)-\varpi_0)}{r}\leq\frac{\bar\eps}{r_--\eps_0}.
$$
Hence, by inspection of the graph of $(1-\mu)(r,\varpi_0)$, there exists $\bar\eps_0$ such that for $0<\bar\eps\leq\bar\eps_0$, we have $r(u,v)>r_--\frac\eps 2$ provided that  $u\leq\bar U_{\bar\eps}$.
For $0<u\leq U_\eps:=\min\{\bar U_{\bar\eps_0},\bar U_\eps\}$, both inequalities~\eqref{r_omega_2} hold. A standard bootstrap argument now yields the result, as the sets
\[
\{ (u,v)\in J^-(\gamma)\cap J^+(\Gamma_{\ckrm}) : r(u,v) > r_--\eps \}
\]
and
\[
\{ (u,v)\in J^-(\gamma)\cap J^+(\Gamma_{\ckrm}) : r(u,v) \geq r_--{\textstyle\frac\eps 2} \},
\]
coincide, and are therefore both open and closed in the relative topology of the connected set $J^-(\gamma)\cap J^+(\Gamma_{\ckrm})$.
\end{proof}

From the previous proof it is clear that,
given $\eps>0$, we may choose $U$ sufficiently small so that
if $(u,v)\in J^-(\gamma)\cap J^+(\Gamma_{\ckrm})$, then
\begin{equation}\label{H} 
1-\eps \leq \kappa(u,v)\leq 1.
\end{equation}

Now we turn to the behavior of $\lambda$ and $\nu$ over the curve $\gamma$.
The conclusions of Lemma~6.6 of \cite{relIst2} still hold in our case:
\begin{Lem}\label{lambda-and-nu}
Suppose that $\beta$ is given satisfying~\eqref{beta_ii}.
Let $\gam$ be the curve parametrized by~\eqref{small_gamma}.
Let also $\delta>0$. For a choice of $\ckrm$ sufficiently close to $r_-$, and $U$
sufficiently small,
there exist constants $c$ and $C$, such that
 for  $(u,v)\in\gam$, with $0<u\leq U$, we have
 \begin{equation} 
 ce^{\bigl(-2\kkk\frac\beta{1+\beta}-\delta\bigr)v}\leq -\lambda(u,v)\leq Ce^{\bigl(-2\kkk\frac\beta{1+\beta}+\delta\bigr)v}\label{lambda-above}
 \end{equation}
 and
  \begin{equation}
    cu^{\mbox{\tiny$\,\frac{\kkk}{\kk}\beta$}\,-1+\delta} \leq -\nu(u,v)\leq Cu^{\mbox{\tiny$\,\frac{\kkk}{\kk}\beta$}\,-1-\delta}.
  \label{nu-above}
  \end{equation}
\end{Lem}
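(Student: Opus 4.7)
The plan is to combine the already-established estimates on $\Gamma_{\ckrm}$ (equations \eqref{nu_rm}, \eqref{v_rm}, \eqref{tilde_c} and $\lambda=\kappa(1-\mu)$) with controlled evolution of $\nu$ and $\nu/(1-\mu)$ from $\Gamma_{\ckrm}$ up to $\gamma$. The key structural input, made available by Lemma~\ref{boot_r} and \eqref{H}, is that on $J^-(\gamma)\cap J^+(\Gamma_{\ckrm})$ the quantities $r$, $\varpi$ and $\kappa$ are respectively arbitrarily close to $r_-$, $\varpi_0$ and $1$; consequently
\[
-2\kkk-\delta \,\leq\, \kappa\,\partial_r(1-\mu) \,\leq\, -2\kkk+\delta
\]
can be arranged uniformly on this region by first choosing $\ckrm$ close to $r_-$ and then $U$ small.

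First I would bound $-\nu$ along $\gamma$. Integrating \eqref{nu_v} from $v_{\ckrm}(u)$ to $v_\gamma(u)=(1+\beta)v_{\ckrm}(u)$ gives
\[
\nu(u,v_\gamma(u))=\nu(u,v_{\ckrm}(u))\,\exp\!\left(\int_{v_{\ckrm}(u)}^{v_\gamma(u)}[\kappa\partial_r(1-\mu)](u,\tilde v)\,d\tilde v\right).
\]
The exponent has absolute value between $(2\kkk\mp\delta)\,\beta\, v_{\ckrm}(u)$. Since \eqref{v_rm} gives $v_{\ckrm}(u)=\frac{1}{2\kk}\ln(1/u)+O(1)$ modulo a factor $1\pm\delta$ (using that $c_{\alpha,3},C_{\alpha,3}\to 2\kk$ as $\ckrp\nearrow r_+$), the exponential factor is $u^{(\kkk/\kk)\beta\pm\delta}$ and \eqref{nu_rm} contributes a factor of the form $u^{-1\pm\delta}$. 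Multiplying yields \eqref{nu-above}.

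Next, for $-\lambda$ along $\gamma$, I would integrate the Raychaudhuri equation in the form \eqref{ray_v_bis} from $v_{\ckrm}(u)$ to $v_\gamma(u)$, obtaining
\[
\frac{\nu}{1-\mu}(u,v_\gamma(u))=\frac{\nu}{1-\mu}(u,v_{\ckrm}(u))\,\exp\!\left(\int_{v_{\ckrm}(u)}^{v_\gamma(u)}\Bigl(\Bigl(\tfrac{\theta}{\lambda}\Bigr)^{\!2}\tfrac{\lambda}{r}\Bigr)(u,\tilde v)\,d\tilde v\right).
\]
On $J^-(\gamma)\cap J^+(\Gamma_{\ckrm})$, $|\theta/\lambda|$ is arbitrarily small (by the analogue of \eqref{zt-0_ii_a} propagated past $\Gamma_{\ckrm}$ via Lemma~\ref{boot_r}, together with $\lambda$ being bounded away from zero on $\Gamma_{\ckrm}$ by \eqref{tilde_c}) and $\lambda$ remains bounded, so this exponential factor is $1+O(\delta)$. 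Thus $\nu/(1-\mu)$ at $v_\gamma(u)$ has the same order as at $v_{\ckrm}(u)$, which by \eqref{kappa_3}, \eqref{tilde_c} and \eqref{nu_rm} is of order $u^{-1\pm\delta}$. Combining with the bound on $\nu$ just established, and using $\lambda=\kappa(1-\mu)$ with $\kappa\in[1-\eps,1]$, gives
\[
-\lambda(u,v_\gamma(u))\;\asymp\; u^{(\kkk/\kk)\beta\pm\delta}.
\]
Finally, converting the $u$-decay to $v$-decay via $u\asymp e^{-(2\kk\pm\delta) v_\gamma(u)/(1+\beta)}$ (from \eqref{v_rm} and $v_\gamma(u)=(1+\beta)v_{\ckrm}(u)$) produces the exponent $-2\kkk\beta/(1+\beta)$ up to $\delta$, which is precisely \eqref{lambda-above}.

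The main obstacle is bookkeeping: absorbing all the auxiliary small parameters ($c_{\alpha,3},C_{\alpha,3}$ versus $2\kk$; $\varpi_m,\varpi_M$ versus $\varpi_0$; the Raychaudhuri exponential correction in the evolution of $\nu/(1-\mu)$; the deviation of $\kappa$ from $1$) into the single $\delta$ appearing in the statement, in a consistent order of quantifiers ($\ckrp\nearrow r_+$ first, then $\ckrm\searrow r_-$, then $V$ large and $U$ small) so that the exponents in \eqref{lambda-above} and \eqref{nu-above} differ from their ideal values $-2\kkk\beta/(1+\beta)$ and $(\kkk/\kk)\beta-1$ by at most the prescribed $\pm\delta$.
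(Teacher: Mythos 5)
Your treatment of $\nu$ is sound and is essentially the paper's route (integrate \eqref{nu_v} over the interval of length $\beta\vgr(u)$, using $\kappa\partial_r(1-\mu)\approx-2\kkk$ from Lemma~\ref{boot_r} and \eqref{H}, then convert with \eqref{nu_rm} and \eqref{v_rm}). The gap is in the $\lambda$ estimate, precisely at the step you dismiss in one line: the Raychaudhuri correction $\exp\bigl(\int_{\vgr(u)}^{v}\bigl(\tfrac{\theta}{\lambda}\bigr)^{2}\tfrac{\lambda}{r}\,d\tilde v\bigr)$. Your justification is that $\bigl|\tfrac{\theta}{\lambda}\bigr|$ is ``arbitrarily small'' on $J^-(\gamma)\cap J^+(\cg_{\ckrm})$, citing Lemma~\ref{boot_r} and \eqref{tilde_c}. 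Neither gives this: Lemma~\ref{boot_r} controls $r$, $\varpi$ (and hence $\kappa$), not $\tfrac{\theta}{\lambda}$, and \eqref{tilde_c} bounds $|\lambda|$ away from zero only up to $\cg_{\ckrm}$ — beyond it $|\lambda|$ decays (that is the content of \eqref{lambda-above} itself), so smallness of $\theta$ does not give smallness of $\tfrac{\theta}{\lambda}$. In the blueshift region between $\cg_{\ckrm}$ and $\gamma$ the quantity $\tfrac{\theta}{\lambda}$ is amplified by a factor of order $e^{2\kkk\beta v}$; the paper's own estimate \eqref{thetaLambdaGamma} (the replacement of (126) of \cite{relIst2}, and the real substance of this lemma's proof) has exponent roughly $\frac{2\kk l(s)}{1+\beta}-2\kkk\beta$, which is \emph{negative} for $\beta$ in the upper part of the range allowed by \eqref{beta_ii} (smallness of $\tfrac{\theta}{\lambda}$ would require roughly $\beta(1+\beta)<l(s)\kk/\kkk$, whereas \eqref{beta_ii} allows up to $2\,l(s)\kk/\kkk$). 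So the assertion you rely on is not available for the full range of $\beta$ covered by the lemma.

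The repair is the pairing the paper actually performs: bound $\int_{\vgr(u)}^{v}\frac{\theta^2}{|\lambda|\,r}\,d\tilde v\le \frac{1}{r_--\eps_0}\,\sup\bigl|\tfrac{\theta}{\lambda}\bigr|\cdot\int|\theta|\,d\tilde v$, using the (possibly growing) bound \eqref{thetaLambdaGamma} for $\bigl|\tfrac{\theta}{\lambda}\bigr|$ together with the decay \eqref{int_theta_zeta} of $\int|\theta|\,d\tilde v$; the combined exponent is of the form $\frac{4\kk l(s)}{1+\beta}-2\kkk\beta-\delta$, whose positivity is exactly the condition \eqref{beta_ii}. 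This is where the hypothesis on $\beta$ genuinely enters, and your proposed proof never invokes it (beyond its indirect use inside Lemma~\ref{boot_r}); moreover, establishing \eqref{thetaLambdaGamma} itself requires integrating the evolution equation for $\tfrac{\theta}{\lambda}$ in $u$ from $\cg_{\ckrm}$ and controlling the blueshift exponential $\exp\bigl(-\int\frac{\nu}{1-\mu}\partial_r(1-\mu)\,du\bigr)$ by $e^{(2\kkk+\delta)\beta v}$, using that $\frac{\nu}{1-\mu}$ is monotone in $v$ (Raychaudhuri) and equals $-\kappa\,v_{\ckrm}'$ along $\cg_{\ckrm}$ — none of which appears in your outline. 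With that estimate in hand, your remaining bookkeeping (transferring $\frac{\nu}{1-\mu}$ from $\cg_{\ckrm}$ to $\gamma$, writing $-\lambda=\kappa(-\nu)\big/\frac{\nu}{1-\mu}$, and converting $u$ to $v$ via $v_\gamma=(1+\beta)\vgr$) is correct and matches the structure of Lemma~6.6 of \cite{relIst2} that the paper invokes.
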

\begin{proof}
The proof of Lemma~6.6 of \cite{relIst2}
goes through with minor modifications. Hence, we point out that formula~(126) of
\cite{relIst2} should be replaced by
\begin{eqnarray} 
\nonumber
&&\left|\frac\theta\lambda\right|(u,v)\\ \nonumber
&&
\ \leq \left(Ce^{-(2\kk l(s)-\delta) v}+
Ce^{-\bigl(\frac{2\kk l(s)}{1+\beta}-\delta\bigr) v}\right)\times
\\
\nonumber
&&\qquad\qquad\qquad\qquad\qquad\qquad\times
e^{-\partial_r(1-\mu)(r_--\eps_0,\varpi_m)\mbox{\tiny$\frac{\min_{\cg_{\ckrm}}(1-\mu)}{\max_{\cg_{\ckrm}}(1-\mu)}$}\beta v}\\
&&\ \leq Ce^{-\left(\frac{2\kk l(s)}{1+\beta}+
\partial_r(1-\mu)(r_--\eps_0,\varpi_m)\mbox{\tiny$\frac{\min_{\cg_{\ckrm}}(1-\mu)}{\max_{\cg_{\ckrm}}(1-\mu)}$}\beta-\delta\right)v}.\label{thetaLambdaGamma}
\end{eqnarray}
This leads to (127) of \cite{relIst2}.
\end{proof}
Let us also point out that, analogously to (135) and (136) in \cite{relIst2}, given $\delta>0$,
\begin{equation} 
ce^{(-2\kk-\delta)\frac{v_\gamma(u)}{1+\beta}}\label{O_ii} \leq u \leq Ce^{(-2\kk+\delta)\frac{v_\gamma(u)}{1+\beta}},
\end{equation}
for $u\leq U$ sufficiently small.

\bigskip

The inequalities \eqref{lambda-above} highlight the importance of the curve $\gamma$ in probing the geometry of the region near the Cauchy horizon. These exponential decays, which will be crucial to establish the integrability of $\lambda$, and consequently the stability of the radius function at the Cauchy horizon, already exhibit the characteristic blueshift exponent $-2k_-$, multiplied by the positive parameter $\beta$. Observe that these estimates cannot be obtained over level curves of $r$ (corresponding to $\beta=0$).

\section{The region $J^+(\gamma)$}\label{section8}

In this section we treat the region $J^+(\gamma)$, where the solution departs qualitatively from the Reissner-Nordstr\"{o}m solution. Nevertheless the radius function remains bounded away from zero, and approaches $r_-$ as $u \to 0$, implying that the existence of a Cauchy horizon is a stable property.

We may apply the arguments in the proof of Lemma~7.1 of~\cite{relIst2}.
We see that the estimates~\eqref{lambda-above} and~\eqref{nu-above},
 \begin{eqnarray} 
  -\lambda(u,v)\leq
  Ce^{\bigl(-2\kkk \frac\beta{1+\beta}+\delta\bigr)v}\label{lambda-above_bis}
 \end{eqnarray}
 and
  \begin{eqnarray}
  -\nu(u,v)\leq
 Cu^{\mbox{\tiny$\,\frac{\kkk}{\kk}\beta$}\,-1-\delta},
  \label{nu-above_bis}
  \end{eqnarray}
also hold in $\{ r > r_- - \eps \}\cap J^+(\gamma)$ for $\eps>0$ sufficiently small. 

Using the integrability of $\lambda$ and $\nu$ implicit in~\eqref{lambda-above_bis} or~\eqref{nu-above_bis}, as in Section~7 of~\cite{relIst2},
we can prove the stability of the Cauchy horizon.

\begin{Thm}\label{stability_of_Cauchy_horizon}
Given $\delta>0$, there exists $U_\delta>0$ such that $r(u,v)>r_--\delta$ for $(u,v)\in J^+(\gamma)$
with $u\leq U_\delta$. In particular, ${\cal P}$ contains
$[0,U_\delta]\times[0,\infty[$. 
\end{Thm}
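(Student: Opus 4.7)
The plan is a bootstrap argument on the radius function in $J^+(\gamma)$, fed by the exponential decay of $\lambda$ obtained in the preceding section. Fix a small $\eps>0$ (to be determined in terms of $\delta$), and take $U_\delta\leq U_\eps$ so that, by Lemma~\ref{boot_r}, $r(u,v_\gamma(u))\geq r_--\eps/2$ holds all along $\gamma\cap\{u\leq U_\delta\}$. The bootstrap assumption is that $r>r_--\eps$ on the set $\{(u,v)\in J^+(\gamma):u\leq U_\delta\}$; under this assumption the estimates \eqref{lambda-above_bis} and \eqref{nu-above_bis} remain valid, as noted at the start of Section~\ref{section8}.

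To improve the bootstrap, I would integrate $\lambda$ from $\gamma$ along lines of constant $u$:
\[
r(u,v)=r(u,v_\gamma(u))+\int_{v_\gamma(u)}^{v}\lambda(u,\tilde v)\,d\tilde v.
\]
Since $\beta$ is chosen to satisfy \eqref{beta_ii} and $\delta$ can be taken arbitrarily small, \eqref{lambda-above_bis} yields an integrable bound on $|\lambda|$, and hence
\[
|r(u,v)-r(u,v_\gamma(u))|\leq\int_{v_\gamma(u)}^{\infty}|\lambda(u,\tilde v)|\,d\tilde v\leq C\,e^{-(2\kkk\frac{\beta}{1+\beta}-\delta)v_\gamma(u)}.
\]
Now \eqref{O_ii} forces $v_\gamma(u)\to\infty$ as $u\to 0^+$, so by further shrinking $U_\delta$ the right-hand side can be made smaller than $\eps/2$ uniformly in $v$. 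Combined with $r(u,v_\gamma(u))\geq r_--\eps/2$, this yields the strict improvement $r(u,v)>r_--\eps/2$ throughout the relevant region, closing the bootstrap via the usual open-closed argument in the connected set $J^+(\gamma)\cap\{u\leq U_\delta\}$. Choosing $\eps\leq\delta$ gives the claimed bound $r(u,v)>r_--\delta$.

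For the second assertion, that $[0,U_\delta]\times[0,\infty[\,\subset{\cal P}$, I would invoke the continuation criterion of Theorem~4.4 in \cite{relIst1} (already used in Theorem~\ref{maximal}): the maximal past set fails to reach a point only if $r$ tends to zero along an incomplete characteristic. Since we have just shown $r\geq r_--\delta>0$ uniformly in this rectangle, no such breakdown can occur, and the solution extends to all of $[0,U_\delta]\times[0,\infty[$.

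The delicate step is the bootstrap closure: it requires a consistent choice of the several small parameters $(\beta,\eps,\delta,U)$ and of $\ckrp,\ckrm$ so that the exponent $2\kkk\frac{\beta}{1+\beta}-\delta$ in \eqref{lambda-above_bis} is strictly positive and $v_\gamma(u)$ grows fast enough as $u\to 0$ to absorb the prefactor $C$ (which itself depends on the same parameters through the constants inherited from Sections~\ref{section5}--\ref{section7}). Once these parameters are chosen in the correct order—first $\beta$ per \eqref{beta_ii}, then $\ckrp$ near $r_+$ and $\ckrm$ near $r_-$, then $\eps$ and the ambient $\delta$ small, and finally $U_\delta$—the estimates close and the theorem follows without further difficulty.
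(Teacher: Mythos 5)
Your proposal is essentially the paper's own argument: the paper proves this theorem by noting that \eqref{lambda-above_bis}--\eqref{nu-above_bis} persist on $\{r>r_--\eps\}\cap J^+(\gamma)$ and then invoking the integrability of $\lambda$ (or $\nu$) exactly as in Section~7 of \cite{relIst2}, i.e.\ the same bootstrap-plus-integration scheme you describe, with the ``in particular'' clause following, as you say, from the breakdown criterion of \cite{relIst1} ($r$ tends to zero at the boundary of ${\cal P}$), so your route is correct and matches the paper. The only blemish is an arithmetic slip in the closure step: with $r\geq r_--\eps/2$ on $\gamma$ and the integral of $|\lambda|$ only required to be $<\eps/2$ you recover $r>r_--\eps$, not $r>r_--\eps/2$, so to strictly improve the bootstrap assumption you should make the integral $<\eps/4$ (say), giving $r>r_--3\eps/4$ — a trivial adjustment that does not affect the argument.
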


Due to the monotonicity properties of $r$ and $\varpi$, the limits $r(u,\infty)=\lim_{v\to\infty}r(u,v)$ and $\varpi(u,\infty)=\lim_{v\to\infty}\varpi(u,v)$ are
well defined, and 
$$\lim_{u\searrow\, 0}r(u,\infty)=r_-.$$

The proofs of Theorem~8.1 and Lemma~8.2 of~\cite{relIst2} establish
\begin{Lem}\label{rmenos}
Either $r(\,\cdot\,,\infty)\equiv r_-$ and $\varpi(\,\cdot\,,\infty)\equiv \varpi_0$, or
$r(u,\infty)<r_-$ and $\varpi(u,\infty)>\varpi_0$ for all $u>0$. In the second
case $\int_0^\infty\kappa(u,v)\,dv<\infty$ and
$\liminf_{v\to\infty}-\nu(u,v)>0$, for all $u>0$.
\end{Lem}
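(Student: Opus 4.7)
The plan is to adapt Theorem~8.1 and Lemma~8.2 of \cite{relIst2} to our setting, which is feasible because the geometric picture in $J^+(\gamma)$ established in Sections~\ref{section7}--\ref{section8} is qualitatively the same as in that paper.

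First I would establish that the limits $r(u,\infty):=\lim_{v\to\infty}r(u,v)$ and $\varpi(u,\infty):=\lim_{v\to\infty}\varpi(u,v)$ exist and are respectively non-increasing and non-decreasing in $u$: this follows from $\nu<0$ and $\lambda\leq 0$ in $J^+(\mathcal{A})$ for $r$, and from~\eqref{omega_v} together with $\partial_u\varpi=\tfrac12(1-\mu)(\zeta/\nu)^2\nu\geq 0$ in $J^+(\mathcal{A})$ (since $1-\mu\leq 0$) for $\varpi$. Combined with the boundary behavior $r(u,\infty)\to r_-$ as $u\searrow 0$ established earlier, this gives $r(u,\infty)\leq r_-$ and $\varpi(u,\infty)\geq\varpi_0$. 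The algebraic identity $\varpi=\omega(r)-\tfrac{r}{2}(1-\mu)$, with $\omega$ defined by $(1-\mu)(r,\omega(r))=0$, passes to the limit as $\varpi(u,\infty)\geq\omega(r(u,\infty))$; since $\omega(r_-)=\varpi_0$ and $\omega'(r_-)=-k_-<0$, one gets $\omega(r)>\varpi_0$ for $r<r_-$, so $r(u,\infty)<r_-$ automatically implies $\varpi(u,\infty)>\varpi_0$.

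The heart of the argument is the dichotomy: the set $S:=\{u\in(0,U_\delta]:r(u,\infty)=r_-\}$ must be either empty or all of $(0,U_\delta]$. By monotonicity of $u\mapsto r(u,\infty)$, $S$ has the form $(0,u^*]$ when nonempty, so one needs to rule out $u^*\in(0,U_\delta)$. For $u>u^*$, where $r(u,\infty)<r_-$, the quantity $(1-\mu)(u,v)$ is bounded above by a strictly negative constant for $v$ large; hence $-\lambda=\kappa|1-\mu|$ combined with the finite integral $\int_0^\infty(-\lambda)\,dv=r(u,0)-r(u,\infty)<\infty$ gives $\int_0^\infty\kappa(u,v)\,dv<\infty$, and then integrating~\eqref{nu_v} using the boundedness of $|\partial_r(1-\mu)|$ in $J^+(\gamma)$ yields $\liminf_{v\to\infty}|\nu(u,v)|>0$. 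Fatou's lemma applied to $\kappa(u_n,v)$ with $u_n\searrow u^*$ transfers these bounds to $u=u^*$, and then to a small left-neighborhood via the integrated form of~\eqref{kappa_u} together with the estimates for $\nu$ and $\zeta/\nu$ from Lemma~\ref{lambda-and-nu} and~\eqref{zetaNuGamma}; comparing $r(\tilde u,v)-r(u^*,v)=\int_{\tilde u}^{u^*}(-\nu)\,d\bar u\geq c(u^*-\tilde u)$ and letting $v\to\infty$ gives $r(\tilde u,\infty)>r_-$, contradicting $\tilde u<u^*\in S$.

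The main obstacle is this horizontal propagation across $u^*$, whose success depends on uniform-in-$v$ control of the exponent in $\kappa(\tilde u,v)=\kappa(u^*,v)\exp\bigl(\int_{\tilde u}^{u^*}(-\nu)(\zeta/\nu)^2/r\,d\bar u\bigr)$; this control follows from the integrability in $u$ of the estimate~\eqref{nu-above}, which is guaranteed by our choice of $\beta$ satisfying~\eqref{beta_ii}. Once the dichotomy holds, the quantitative conclusions $\int_0^\infty\kappa(u,v)\,dv<\infty$ and $\liminf_{v\to\infty}-\nu(u,v)>0$ at every $u>0$ in Case~2 are exactly what was derived above for $u>u^*$, now available throughout. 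As noted in the paper, the absence of an ingoing perturbation here (unlike in \cite{relIst2, relIst3}) is precisely what keeps Case~1 a genuine possibility, which is why the lemma is phrased as a dichotomy.
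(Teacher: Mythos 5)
Your soft preliminaries (existence and monotonicity of the limits, $r(u,\infty)\leq r_-$, $\varpi(u,\infty)\geq \varpi_0$, and the implication $r(u,\infty)<r_-\Rightarrow\varpi(u,\infty)>\varpi_0$ via $\varpi\geq\omega(r)$ and $\omega>\varpi_0$ on $\left]0,r_-\right[$) are fine, and adapting Theorem~8.1 and Lemma~8.2 of \cite{relIst2} is indeed what the paper does (it gives no further detail, simply invoking those proofs). But the two steps that carry the actual content of your argument do not go through as written. First, for $u>u^*$ you assert that $(1-\mu)(u,v)$ is eventually bounded above by a strictly negative constant ``where $r(u,\infty)<r_-$''. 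What the trapped-region inequality gives is only $\varpi(u,\infty)\geq\omega(r(u,\infty))$; the strict inequality (equivalently, that $1-\mu$ does not tend to $0$ along the ray) is precisely what must be proved before one can conclude $\int\kappa(u,v)\,dv<\infty$ and $\liminf(-\nu)>0$, and it does not follow from $r(u,\infty)<r_-$ alone (monotonicity in $u$ pushes $\omega(r(\cdot,\infty))$ the wrong way, since $\omega$ is decreasing near $r_-$). Since these quantitative conclusions are also the second half of the lemma, this is a gap in the statement itself, not just in the dichotomy.

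Second, the propagation across $u^*$ fails. Fatou only yields $\int\kappa(u^*,v)\,dv\leq\liminf_n\int\kappa(u_n,v)\,dv$, and there is no uniform bound making the right-hand side finite: $\kappa$ is non-increasing in $u$ by \eqref{kappa_u}, so $\kappa(u^*,v)\geq\kappa(u_n,v)$, and the bounds $\int\kappa(u_n,\cdot)\,dv\leq C/c_n$ degenerate because $c_n$, the eventual negative upper bound on $1-\mu$ along $u=u_n$, is controlled by $\varpi(u_n,\infty)-\omega(r(u_n,\infty))$, which may tend to $0$ as $u_n\searrow u^*$. Likewise, the leftward step through $\kappa(\tilde u,v)=\kappa(u^*,v)\exp\bigl(\int_{\tilde u}^{u^*}(-\nu)(\zeta/\nu)^2 r^{-1}\,d\bar u\bigr)$ requires a $v$-uniform bound on $(\zeta/\nu)^2$ in $J^+(\gamma)$ up to the Cauchy horizon; \eqref{zetaNuGamma} is only proved in $J^-(\gamma)\cap J^+(\Gamma_{\ckrm})$, and \eqref{nu-above} bounds $-\nu$ but says nothing about $\zeta/\nu$, which the paper controls beyond $\gamma$ only under the extra hypotheses of the no-mass-inflation regime. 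Bounding integrals of $\zeta^2/((-\nu)r)$ near the Cauchy horizon is exactly the hard point in this circle of papers (compare the proof of Lemma~\ref{boot_r} and \eqref{K}), so it cannot be taken for granted here; as it stands, the heart of your dichotomy argument — establishing the bounds at $u^*$ and on a left neighborhood — is not justified. (A smaller omission: your dichotomy only concerns the set where $r(\cdot,\infty)=r_-$; the case $r\equiv r_-$ with $\varpi(u_1,\infty)>\varpi_0$ for some $u_1$ must also be excluded to obtain the pairing stated in the lemma.)
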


We also have
\begin{Lem}\label{affine}
Let $u>0$. Consider an outgoing null geodesic $t\mapsto(u,v(t))$ for $({\cal M},g)$, with $g$ given by
$$
g=-\Omega^2(u,v)\,dudv+r^2(u,v)\,\sigma_{{\mathbb S}^2},
$$
where
$\Omega^2=-4\kappa\nu$.
Then $v^{-1}(\infty)<\infty$, i.e.\
the affine parameter is finite at the Cauchy horizon.
\end{Lem}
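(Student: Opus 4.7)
The plan is to reduce the finiteness of the affine parameter to the integrability of $\Omega^2$ along the outgoing null direction, and then to verify this integrability by combining an algebraic identity for $\Omega^2$ with a Raychaudhuri-type monotonicity argument and the integrability of $\lambda$.

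\textbf{Step 1: Reduction to an integral condition.} Along the outgoing geodesic $u$ is constant, and the Euler-Lagrange equation in $u$ coming from the Lagrangian $L = -\Omega^2 \dot u \dot v$ yields the first integral
$\Omega^2 \dot v = C$
for some positive constant $C$ (equivalently, $\Omega^2 \ddot v + (\partial_v \Omega^2)\dot v^2 = 0$). Hence $\dot v = C/\Omega^2$, and the affine parameter needed to reach $v = \infty$ from some initial $v_0$ is
\[
t(\infty) - t(v_0) = \frac{1}{C} \int_{v_0}^{\infty} \Omega^2(u, v)\, dv.
\]
The lemma therefore reduces to showing that this integral is finite for some (hence any sufficiently large) $v_0$.

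\textbf{Step 2: Algebraic rewriting and Raychaudhuri monotonicity.} Combining the defining relation $\kappa = -\Omega^2/(4\nu)$ with the constraint $\lambda = \kappa(1-\mu)$ gives the identity
\[
\Omega^2 = -\frac{4\nu\lambda}{1-\mu}.
\]
Pick $v_0$ large enough that the line $\{u\}\times[v_0,\infty[$ lies in $J^+(\gam)\subset J^+({\cal A})$, where $\lambda<0$, $\nu<0$, $1-\mu<0$ and thus $\nu/(1-\mu)>0$. The Raychaudhuri equation~\eqref{ray_v_bis} reads
\[
\partial_v\!\left(\frac{\nu}{1-\mu}\right) = \frac{\nu}{1-\mu}\left(\frac{\theta}{\lambda}\right)^2\frac{\lambda}{r},
\]
and the right-hand side is non-positive since $\nu/(1-\mu)>0$, $(\theta/\lambda)^2\geq 0$ and $\lambda/r\leq 0$ in $J^+({\cal A})$. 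Consequently $v\mapsto \nu(u,v)/(1-\mu)(u,v)$ is non-increasing, hence bounded above by $M(u):=\nu(u,v_0)/(1-\mu)(u,v_0)<\infty$ for all $v\geq v_0$.

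\textbf{Step 3: Integrability of $\lambda$ and conclusion.} Since $\lambda\leq 0$ in $J^+({\cal A})$, the function $v\mapsto r(u,v)$ is non-increasing, and by Theorem~\ref{stability_of_Cauchy_horizon} it is bounded below. Therefore $r(u,\infty):=\lim_{v\to\infty} r(u,v)$ exists and
\[
\int_{v_0}^{\infty} |\lambda(u,v)|\, dv = r(u,v_0) - r(u,\infty) < \infty.
\]
Combining this with the bound from Step~2 yields
\[
\int_{v_0}^{\infty} \Omega^2(u,v)\, dv = 4\int_{v_0}^{\infty} \frac{-\nu}{|1-\mu|}\,|\lambda|\, dv \leq 4 M(u) \int_{v_0}^{\infty} |\lambda(u,v)|\, dv < \infty,
\]
which by Step~1 completes the proof. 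The main subtlety is controlling the ratio $\nu/(1-\mu)$, which a priori might blow up as both $\nu$ and $1-\mu$ degenerate in the approach to the Cauchy horizon (this is especially delicate in the first alternative of Lemma~\ref{rmenos}, where $r(u,\infty)=r_-$ and $1-\mu\to 0$); the Raychaudhuri monotonicity is precisely the structural input that rules this scenario out and reduces the problem to the integrability of $\lambda$, which is a direct consequence of the stability of the radius function.
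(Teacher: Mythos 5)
Your proposal is correct and follows essentially the same route as the paper: reduce the affine parameter to $\int \Omega^2\,dv$ via the geodesic first integral (which the paper imports from Corollary~8.3 of~\cite{relIst2}), rewrite $\Omega^2=-4\lambda\,\frac{\nu}{1-\mu}$, bound $\frac{\nu}{1-\mu}$ by its initial value using the monotonicity coming from the Raychaudhuri equation~\eqref{ray_v_bis}, and conclude from the integrability of $\lambda$ given by the monotonicity and lower bound of $r$.
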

\begin{proof}
Let $u>0$. Fix a $V>v_\lambda(u)$ such that $(1-\mu)(u,V)<0$.
As shown in the proof of Corollary~8.3 of~\cite{relIst2}, there exists a constant $c>0$
such that
$$
t=v^{-1}(V)+c\int_V^v\Omega^2(u,\bar v)\,d\bar v=v^{-1}(V)-4c\int_V^v(\nu\kappa)(u,\bar v)\,d\bar v.
$$
Integrating~\eqref{ray_v_bis}, for $\bar v\geq V$, we get
$$ 
0<\frac{\nu(u,\bar v)}{(1-\mu)(u,\bar v)}\leq\frac{\nu(u,V)}{(1-\mu)(u,V)}.
$$ 
So
\begin{eqnarray*}
t&\leq& v^{-1}(V)-4c\frac{\nu(u,V)}{(1-\mu)(u,V)}\int_V^v\lambda(u,\bar v)\,d\bar v\\
&\leq& v^{-1}(V)+4c\frac{\nu(u,V)}{(1-\mu)(u,V)}r_+<\infty.
\end{eqnarray*}
\end{proof}

\section{Mass inflation}\label{section9}

As mentioned in page~\pageref{regime}, there exist two distinct regimes, depending on the parameter $\s > 0$, which reflect the competition phenomenon between the redshift arising from the evolution equation and the exponential decay of $\theta_0$ along the event horizon: for $\s < 2$ the decay of $\theta_0$ is slower, and thus the dominant effect, whereas for $\s > 2$ this decay is overwhelmed by the redshift effect. Since mass inflation is more likely for slower decays, it is not surprising that sufficient conditions for its occurrence can be obtained when $\s < 2$.

\subsection{Positivity of $\theta$ and $\zeta$}

In this subsection we prove positivity of $\theta$ and $\zeta$
over ${\cal A}$ (for large $v$). This implies positivity of $\theta$ and $\zeta$
in $J^+({\cal A})$, which in turn imply $\varpi(u,\infty)>\varpi_0$ for all $u>0$.

Note that if $(r,\nu,\lambda,\varpi,\theta,\zeta,\kappa)$ is a solution of the first order
system \eqref{r_u}$-$\eqref{kappa_at_u}, then 
$(r,\nu,\lambda,\varpi,-\theta,-\zeta,\kappa)$ is also a solution of that system.
So, without loss of generality, taking into account \eqref{theta_l_ii}, we assume that
\begin{equation}\label{theta_l_ii_plus}
\theta(u,v)\geq ce^{-\,\frac{C_2}{2}v},
\end{equation}
for $(u,v)\in{\cal P}_\lambda$.

According to~\eqref{c-alpha}, for $(u,v)\in{\cal R}_{(U,V)}$ and $V\leq\tilde v\leq v$,
\begin{equation}\label{bounds}
e^{-C_\alpha(v-\tilde v)}\leq
e^{-\int_{\tilde v}^{v}[\kappa\partial_r(1-\mu)](u,\bar{v})\,d\bar{v}}\leq 
e^{-c_\alpha(v-\tilde v)}.
\end{equation}
Integrating~\eqref{zeta_nu}, we obtain, analogously to~\eqref{rafaeli},
\begin{eqnarray}
\frac\zeta\nu(u,v)&=&\frac\zeta\nu(u,V)e^{-\int_{V}^{v}[\kappa\partial_r(1-\mu)](u,\tilde v)\,d\tilde v}\nonumber\\
&&-\int_{V}^{v}\frac\theta r(u,\tilde v)e^{-\int_{\tilde v}^{v}[\kappa\partial_r(1-\mu)](u,\bar{v})\,d\bar{v}}\,d\tilde v.\label{field_7}
\end{eqnarray}
Using \eqref{theta_l_ii_plus} and~\eqref{bounds} in~\eqref{field_7} yields, for $(u,v)\in{\cal R}_{(U,V)}$,
$$
\frac\zeta\nu(u,v)\leq Ce^{-c_\alpha(v-V)}-\frac{c}{r_+}\int_{V}^{v}e^{-\,\frac{C_2}{2}\tilde v}
e^{-C_\alpha(v-\tilde v)}\,d\tilde v.
$$
Assuming that $\s>2$ we can choose our parameters so that
$\frac{C_2}{2}<c_\alpha<C_\alpha$. Then
$$
\frac\zeta\nu(u,v)\leq Ce^{-c_\alpha(v-V)}-ce^{-\,\frac{C_2}{2}v}+
ce^{-C_\alpha(v-V)}e^{-\,\frac{C_2}{2}V}.
$$
This shows there exists $\bar V\geq V$ such that
$$
\frac\zeta\nu(u_\lambda(v),v)<0\ \ \ \mbox{for}\ v\geq \bar V,
$$
and so
\begin{equation}
\label{zeta_l_ii_plus}
\zeta(u_\lambda(v),v)>0\ \ \ \mbox{for}\ v\geq \bar V.
\end{equation}

We restrict $U$ to be at most $u_\lambda(\bar V)$.
With this choice, \eqref{theta_l_ii_plus} and~\eqref{zeta_l_ii_plus} ensure that
$\theta$ and $\zeta$ are positive on ${\cal A}$. 
This implies that $\theta>0$ and $\zeta>0$ in $J^+({\cal A})$:
otherwise, there would exist a point $(u,v)\in J^+({\cal A})$ such that $\theta(u,v)=0$ or $\zeta(u,v)=0$
but $\theta>0$ and $\zeta>0$ in $J^-(u,v)\cap J^+({\cal A})$. Integrating~\eqref{theta_u} and~\eqref{zeta_v},
we would obtain a contradiction.

From Corollary~12.3 of~\cite{Dafermos2}, it follows that, for $0<u_1<u_2$ and $v$ sufficiently large
so that $(u_1,v)$ (and hence $(u_2,v)$) belong to $J^+({\cal A})$,
$$
\varpi(u_2,\infty)-\varpi(u_1,\infty)\geq \varpi(u_2,v)-\varpi(u_1,v).
$$
The right-hand side is positive because $\zeta$ is positive on $J^+({\cal A})$. We conclude
that $\varpi(u,\infty)>\varpi_0$ for all $u>0$. Moreover, Lemma~\ref{rmenos} implies $r(u,\infty)<
r_-$ for all $u>0$.

\subsection{Blow up of the mass at the Cauchy horizon}

We define 
\begin{equation}\label{def_psi}
\Psi:=\frac{k_-}{k_+} > 1.
\end{equation}

The following result establishes sufficient conditions for the occurrence of mass inflation.

\begin{Thm}\label{mass_inflation_thm}
If $\s<\min\left\{\rho,2\right\}$ then $\varpi(u,\infty)=\infty$ for all $u>0$. 
\end{Thm}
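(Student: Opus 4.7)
The strategy is to show that
\[
\int_V^\infty \frac{\theta^2(u_0,v)}{2\kappa(u_0,v)}\,dv \;=\; +\infty
\]
along each outgoing null ray $\{u=u_0\}$ with $u_0\in\,]0,U]$; by \eqref{omega_v} this forces $\varpi(u_0,\infty)=+\infty$. The argument adapts the scheme used for the polynomial Price law in \cite{Dafermos2} and for the $\Lambda>0$ Reissner-Nordstr\"{o}m analog in \cite{relIst3}, so the task reduces to combining a lower bound on $|\theta|$ with a sufficiently sharp upper bound on $\kappa$ along $\{u_0\}\times\left[V,\infty\right[$.

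First, I would establish the pointwise lower bound $|\theta|(u_0,v)\geq c\,e^{-(\s\kk+\delta)v}$ in $J^+(\gam)$, with $\delta>0$ arbitrarily small upon reducing $U$ and enlarging $V$. The starting point is the initial-data lower bound in \eqref{theta_ii}, together with the positivity of $\theta$ on ${\cal A}$ (hence on $J^+({\cal A})$) proved in the preceding subsection. Since $\s<2$, $e^{-\s\kk v}$ decays more slowly than the redshift $e^{-2\kk v}$, so the initial data dominate the propagation through ${\cal P}_\lambda$. Mirroring the derivations of the upper bounds \eqref{theta_l_ii}, \eqref{theta_rp_ii_c}, and \eqref{theta_no_shift} by means of the Gr\"onwall-type estimates of Sections~\ref{section4}--\ref{section6}, and then applying the technique of Section~\ref{section7}, one obtains matching pointwise lower bounds for $|\theta|$ throughout $J^-(\Gamma_{\ckrm})$ and into $J^+(\gam)$.

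Second, I would establish a sharp pointwise upper bound on $\kappa$ along $\{u_0\}\times\left[V,\infty\right[$. In the present setting, the positivity results above ensure $\varpi(u_0,\infty)>\varpi_0$, so by Lemma~\ref{rmenos} we are in case~(2): $\int_V^\infty \kappa(u_0,v)\,dv<\infty$ and $\liminf_{v\to\infty}(-\nu)(u_0,v)>0$. Combining this with the blueshift estimates of Sections~\ref{section7}--\ref{section8} and the identity $\kappa=\lambda/(1-\mu)$, one derives $\kappa(u_0,v)\leq C\,e^{-(2\kkk-\delta)v}$ for $v$ large; achieving the full blueshift exponent $2\kkk$ along $\{u_0\}\times\left[V,\infty\right[$ (rather than the weaker rate $2\kkk\beta/(1+\beta)$ visible only on the curve $\gam$) is accomplished by iterating the blueshift argument along spacelike curves approaching the Cauchy horizon, and exploiting the monotonicity of $\nu/(1-\mu)$ encoded in \eqref{ray_v_bis} together with the integrability $\int\kappa\,dv<\infty$.

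Combining the two steps,
\[
\varpi(u_0,v)-\varpi(u_0,V)\;=\;\int_{V}^{v}\frac{\theta^2}{2\kappa}(u_0,\tilde v)\,d\tilde v\;\geq\;c\int_V^v e^{(2\kkk\,-\,2\s\kk\,-\,3\delta)\tilde v}\,d\tilde v.
\]
The hypothesis $\s<\rho=\kkk/\kk$ gives $2\kkk-2\s\kk>0$, so for $\delta$ small enough the exponent is strictly positive and the integral diverges, proving $\varpi(u_0,\infty)=+\infty$. Nonexistence of $C^2$ extensions follows immediately, since the Kretschmann scalar is controlled below by $\varpi$. The main technical obstacle is the upgrade in Step~2: the naive bound from \eqref{lambda-above} only yields a decay rate $2\kkk\beta/(1+\beta)$ which, for any admissible $\beta$ satisfying \eqref{beta_ii}, is strictly less than $2\kkk$ and falls short of covering the full range $\s<\rho$. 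The bootstrap promoting this rate all the way to $2\kkk$, together with the quantitative balance against the initial-data decay $\s\kk$, is precisely what singles out the range $\s<\min\{\rho,2\}$ in the statement.
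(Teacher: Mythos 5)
Your overall outline (lower bound on $|\theta|$ of order $e^{-\s\kk v}$ in $J^+({\cal A})$ via positivity, divergence of $\int\theta^2/(2\kappa)\,dv$ using \eqref{omega_v}, and the threshold $\s<\rho$ coming from comparing $\s\kk$ with $k_-$) matches the paper's strategy in spirit, but your Step~2 contains a genuine gap, and it is exactly the step you yourself flag as ``the main technical obstacle''. You assert an \emph{unconditional} pointwise bound $\kappa(u_0,v)\leq Ce^{-(2\kkk-\delta)v}$ along every outgoing ray, to be obtained by ``iterating the blueshift argument along spacelike curves'' and using Lemma~\ref{rmenos} together with the monotonicity in \eqref{ray_v_bis}. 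None of the tools you cite yields this: the integrability $\int\kappa\,dv<\infty$ gives no exponential rate; the estimates of Sections~\ref{section7}--\ref{section8} give the blueshift exponent only in the weakened form $2\kkk\beta/(1+\beta)$ (see \eqref{lambda-above}, \eqref{lambda-above_bis}), and the admissible range \eqref{beta_ii} caps $\beta$ strictly below what would be needed; moreover, relating $\kappa$ to $\lambda$ through $\kappa=\lambda/(1-\mu)$ is delicate precisely in the mass-inflation regime, where $1-\mu$ is not a priori controlled. The ``bootstrap promoting this rate all the way to $2\kkk$'' is the entire content of the hard step, and it is not supplied.

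The paper avoids having to prove this unconditional decay by arguing by contradiction within the dichotomy of Theorem~3.1 of \cite{relIst3}: one may assume $\lim_{u\searrow 0}I(u)=0$, where $I(u)=\int_{\vgr(u)}^\infty\bigl[\theta^2/(-\lambda)\bigr](u,\tilde v)\,d\tilde v$ (in the remaining cases mass inflation follows directly). It is \emph{only under this hypothesis} that the full blueshift decay $-\lambda(u,v)\leq C(u)e^{(-2\kkk+\delta)v}$ of \eqref{lambda_cima_ii} (equation (117) of \cite{relIst3}) is available in $J^+(\gamma)$; combined with the lower bound \eqref{theta_l_ii_plus} for $\theta$ (which propagates to $J^+({\cal A})$ by positivity of $\theta$ and $\zeta$, valid for $\s<2$), this forces $I(u)=\infty$ when $\s<\rho$, contradicting the hypothesis. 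So the divergence is extracted from $\theta^2/(-\lambda)$ under a contradiction hypothesis, not from an unconditional estimate on $\kappa$. To repair your argument you would either have to reproduce this conditional/dichotomy structure, or actually prove the unconditional $e^{-2\kkk v}$ decay of $\kappa$ along outgoing rays, which is not known and is not a consequence of the estimates established in the paper.
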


\begin{proof}
We proved in the previous subsection that 
$\varpi(u,\infty)>\varpi_0$ for all $u>0$ and so it follows from Lemma~\ref{rmenos}
that $r(u,\infty)<r_-$ for all $u>0$. Going through the proof of 
Theorem~3.1 in~\cite{relIst3}, we see that to prove mass inflation it is sufficient to
consider Case~3.2, namely it is sufficient to assume that
$$
I(u)=\int_{\vgr(u)}^\infty\left[\frac{\theta^2}{-\lambda}\right](u,\tilde v)\,d\tilde v
$$
satisfies $\lim_{u\searrow 0}I(u)=0$ and from this derive the contradiction
$I(u)=\infty$. This is done by using improved upper bounds for $-\lambda$
in the region $J^+(\gamma)$ together with the lower bounds satisfied by $\theta$ in
this region.

The assumption $\lim_{u\searrow 0}I(u)=0$ together with~\eqref{lambda-above} leads
to~(117) of~\cite{relIst3}, which states that
\begin{eqnarray}
-\lambda(u,v)\leq C(u)
e^{(-2\kkk+\delta)v}\label{lambda_cima_ii}
\end{eqnarray}
in $J^+(\gamma)$.
When $\s<2$, we know that the lower bound for $\theta$ 
in~\eqref{theta_l_ii_plus} holds on ${\cal A}$
and we know that $\theta$ and $\zeta$ are positive on $J^+({\cal A})$. Hence,
the lower bound for $\theta$ 
in~\eqref{theta_l_ii_plus} holds on $J^+({\cal A})$.
Using~\eqref{theta_l_ii_plus} and~\eqref{lambda_cima_ii}, we are then led to the following lower bound for 
$I(u)$:
\begin{eqnarray*}
I(u)&\geq&\int_{\vgam(u)}^\infty\left[\frac{\theta^2}{-\lambda}\right](u,\tilde v)\,d\tilde v\\
&\geq&C(u)\int_{\vgam(u)}^\infty\frac{e^{-C_2\tilde v}}{e^{(-2\kkk+\delta)\tilde v}}\,d\tilde v.
\end{eqnarray*}
We can choose our parameters so that this integral is infinite if (see~\eqref{C2})
$$\textstyle
\frac{2}{r_+}(A-r_+\kk)<2\kkk.
$$
This inequality is equivalent to 
$$
\s<\rho.
$$
Therefore
$\varpi(u,\infty)=\infty$ for all $u>0$ if $s<\min\left\{\rho,2\right\}$.
\end{proof}

\subsection{Mass inflation or $\frac{\theta}{\lambda}$ unbounded}

Suppose that $\s<2$ and that $\varpi(\,\cdot\,,\infty)$ is
not identically equal to $\infty$. Then, from the proof of Theorem~3.2 of~\cite{relIst3} we know
that $\lim_{u\searrow 0}I(u)=0$, that $-\lambda$ is bounded above by~\eqref{lambda_cima_ii}
in $J^+(\gamma)$, and that $\theta$ is bounded below by~\eqref{theta_l_ii_plus}.
We conclude that, for $(u,v)\in J^+(\gamma)$,
\begin{eqnarray}
\Bigl|\frac{\theta}{\lambda}\Bigr|(u,v)&\geq&
\frac{e^{-\,\frac{C_2}{2}v}}
{C(u)e^{(-2\kk+\delta)v}}
\nonumber\\
&=& \frac 1{C(u)}e^{\bigl(-\,\frac{1}{r_+}(A-r_+\kk)
+2\kkk-\delta\bigr)v}.\label{penelope}
\end{eqnarray}
This exponent can be made positive if
$$\textstyle
\frac{1}{r_+}(A-r_+\kk)<2\kkk,
$$
which is equivalent to 
$$
\s<2\rho.
$$
However, as shown in Appendix~A of~\cite{relIst3}, $\rho$ is necessarily greater than one,
and so the last inequality is always satisfied for $\s<2$. Therefore we have the following result.  

\begin{Thm}
If $\s<2$ then either $\varpi$ or $\frac{\theta}{\lambda}$ blow up at the Cauchy horizon.
\end{Thm}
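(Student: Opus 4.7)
I would argue by dichotomy: either $\varpi(u,\infty)=\infty$ for all $u>0$, in which case the conclusion holds trivially, or there exists some $u_0>0$ with $\varpi(u_0,\infty)<\infty$, and I must then show that $|\theta/\lambda|$ blows up at the Cauchy horizon.

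Granting the second alternative, my first step would be to invoke Theorem~3.2 of \cite{relIst3} to deduce that
\[
\lim_{u\searrow 0}I(u)\ =\ \lim_{u\searrow 0}\int_{\vgr(u)}^\infty\left[\frac{\theta^2}{-\lambda}\right](u,\tilde v)\,d\tilde v\ =\ 0.
\]
As recalled just before the statement of the theorem, the vanishing of $I(u)$ is precisely what upgrades the basic bound on $-\lambda$ along $\gamma$ to the sharper estimate~\eqref{lambda_cima_ii}, namely $-\lambda(u,v)\leq C(u)\,e^{(-2\kkk+\delta)v}$ throughout $J^+(\gamma)$. Independently, because $\s<2$, the positivity argument of Subsection~9.1 applies and yields, after a suitable choice of sign for the scalar field, the lower bound $\theta(u,v)\geq c\,e^{-\frac{C_2}{2}v}$ on $J^+({\cal A})$.

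Combining the two bounds would produce, in $J^+(\gamma)$,
\[
\Bigl|\frac{\theta}{\lambda}\Bigr|(u,v)\ \geq\ \frac{1}{C(u)}\,e^{\bigl(-\frac{1}{r_+}(A-r_+\kk)+2\kkk-\delta\bigr)v}.
\]
The last step is to check that the exponent is strictly positive for small $\delta$. Using the definitions $\s=A/(r_+\kk)-1$ and $\rho=\kkk/\kk$, this reduces to $\s<2\rho$, which, thanks to the lower bound $\rho>1$ established in Appendix~A of \cite{relIst3}, is automatic from the hypothesis $\s<2$. Hence $|\theta/\lambda|(u,v)\to\infty$ as $v\to\infty$ for each fixed $u>0$.

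The step I expect to be most delicate is verifying that the ingredients borrowed from~\cite{relIst3}, in particular the implication from $\varpi(u_0,\infty)<\infty$ to~\eqref{lambda_cima_ii}, transfer cleanly to the present exponential-Price-law context; but once this compatibility is secured, the argument is essentially the three-line computation above.
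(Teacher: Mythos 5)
Your proposal is correct and follows essentially the same route as the paper: assuming $\varpi(\,\cdot\,,\infty)$ is not identically infinite, the paper likewise invokes the proof of Theorem~3.2 of \cite{relIst3} to get $\lim_{u\searrow 0}I(u)=0$, hence the improved upper bound $-\lambda\leq C(u)e^{(-2k_-+\delta)v}$ on $J^+(\gamma)$, combines it with the lower bound $\theta\geq ce^{-\frac{C_2}{2}v}$ from the positivity argument, and closes by noting that the resulting exponent is positive precisely when $\s<2\rho$, which follows from $\rho>1$. The only difference is cosmetic (your explicit dichotomy versus the paper's ``not identically $\infty$'' phrasing), so no further comment is needed.
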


\section{No mass inflation}\label{section10}

In this section we will establish sufficient conditions guaranteeing that the renormalized Hawking mass does not blow up at the Cauchy horizon. As might be expected, this is what occurs in the regime $\s > 2$, where the decay of the initial data is faster, if the reference Reissner-N\"ordstrom black hole is sufficiently close to extremality. More surprising is the fact that mass inflation can also be avoided for $\s < 2$, although $\bigl|\frac{\theta}{\lambda}\bigr|$ necessarily blows up.

\begin{Thm}\label{no_mass}
Suppose that\/ $\rho$ satisfies $\frac{7\rho}{9}<l(s)$ (see~\eqref{ls}), that is,
\begin{equation}\label{ls2}
\left\{
\begin{array}{ll}
1<\rho<\frac{9}{7}&{\rm if}\ \s>2,\\
\vspace{-3mm}\\
1<\rho<\frac{9}{14}\s
&{\rm if}\ \frac{9}{14}<s\leq 2.
\end{array}
\right.
\end{equation}
Then $\varpi(u,\infty)<\infty$ for each $0<u\leq U$, provided that $U$ is sufficiently small. Furthermore,\/ $\lim_{u\searrow 0}\varpi(u,\infty)=\varpi_0$.
\end{Thm}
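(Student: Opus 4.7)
The goal is to show that $\partial_v\varpi=\theta^2/(2\kappa)$ is integrable along each constant-$u$ line for small $u$, with $\kappa$ staying bounded away from zero, so that $\varpi(u,\infty)$ is finite. The plan is to adapt the strategy of~\cite{relIst3} to the exponential decay setting: first establish sharp exponential upper bounds for $|\theta|$ and lower bounds for $\kappa$ along $\gamma$; then propagate them into $J^+(\gamma)$ via a bootstrap; and finally match the exponents against the blueshift to produce the sharp threshold $l(\s)>7\rho/9$.

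On the curve $\gamma$, combining Lemma~\ref{lambda-and-nu} with~\eqref{thetaLambdaGamma} gives
\[
|\theta|(u,v_\gamma(u))\le C\exp\!\left(-\left(\tfrac{2\kk l(\s)}{1+\beta}+2\kkk\beta-\delta\right)v\right)\cdot\exp\!\left(-\tfrac{2\kkk\beta}{1+\beta}v\right),
\]
where the blueshift amplification picked up when passing from $\cg_{\ckrm}$ to $\gamma$ appears as the negative contribution of the $\kkk\beta$ terms. To propagate this into $J^+(\gamma)$, I would integrate~\eqref{theta_u} in $u$ starting from $\gamma$, using the upper bounds~\eqref{lambda-above_bis}, \eqref{nu-above_bis} and a bootstrap bound of the form $|\zeta/\nu|\le C(u)e^{\alpha v}$ with $\alpha$ slightly larger than that valid on $\gamma$. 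Equation~\eqref{kappa_u} then controls $\kappa$ from below by a positive constant $c(u)$, since $\int_{u_\gamma(v)}^u(\zeta/\nu)^2(\nu/r)\,du$ can be made small by shrinking $U$. The outcome is that in $J^+(\gamma)\cap([0,U]\times[V,\infty[)$ one has
\[
|\theta|(u,v)\le C(u)\,e^{-\mu v},\qquad\kappa(u,v)\ge c(u)>0,
\]
for a suitable $\mu=\mu(\s,\rho,\beta,\eps,\delta)>0$.

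Step three is the integrability estimate. For fixed $u\in(0,U]$,
\[
\varpi(u,\infty)-\varpi(u,v_\gamma(u))=\int_{v_\gamma(u)}^\infty\frac{\theta^2}{2\kappa}(u,\tilde v)\,d\tilde v\le\frac{C(u)^2}{2c(u)}\int_{v_\gamma(u)}^\infty e^{-2\mu\tilde v}\,d\tilde v,
\]
which is finite precisely when $\mu>0$. The critical relation becomes, after pushing the ratio $\min_{\cg_{\ckrm}}(1-\mu)/\max_{\cg_{\ckrm}}(1-\mu)$ in~\eqref{thetaLambdaGamma} to $1$ by taking $\ckrm$ close to $r_-$,
\[
\frac{2\kk l(\s)}{1+\beta}>\frac{2\kkk\beta}{1+\beta}+\delta,
\]
combined with the constraint~\eqref{beta_ii}. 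Optimising $\beta$ over its admissible interval, the best choice is an interior point where both the Price-law decay and the blueshift cost are balanced, and a direct computation shows that such a $\beta$ exists if and only if $l(\s)>\tfrac{7\rho}{9}$, which is exactly~\eqref{ls2}. For the final assertion $\lim_{u\searrow 0}\varpi(u,\infty)=\varpi_0$, I would combine~\eqref{lim_varpi}, which gives $\varpi(u,v_\gamma(u))\to\varpi_0$, with the fact that $C(u)/\sqrt{c(u)}$ can be made to vanish as $u\searrow 0$ because by~\eqref{O_ii} we have $v_\gamma(u)\to\infty$, so the exponential factor $e^{-\mu v_\gamma(u)}$ sent into the remainder integral also vanishes.

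The main obstacle will be pinning down the optimal $\beta$ that yields the sharp exponent $7\rho/9$: one must simultaneously track the Price-law decay surviving the redshift region from Sections~\ref{section4}--\ref{section6}, the neutral propagation across $J^-(\cg_{\ckrm})\cap J^+(\cg_{\ckrp})$, and the subsequent blueshift amplification in $J^+(\gamma)$, then verify that the optimiser is interior to~\eqref{beta_ii}. A secondary technical difficulty is closing the bootstrap in $J^+(\gamma)$: the estimates for $\kappa$, $|\theta|$ and $|\zeta/\nu|$ are mutually dependent (each feeding into the evolution equation of the next), so they must be established by a single simultaneous continuity argument on an exhaustion of the connected set $\{(u,v)\in J^+(\gamma):u\le U,\,v\ge V\}$, with constants chosen carefully enough that the bootstrap assumptions are strictly improved.
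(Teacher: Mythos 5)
Your overall skeleton (get exponential bounds on $\gamma$, propagate them into $J^+(\gamma)$ by a bootstrap, then integrate $\partial_v\varpi$) is the right shape, but two of your central claims do not survive scrutiny. First, the framing ``show $\theta^2/(2\kappa)$ is integrable with $\kappa$ bounded away from zero'' is not available: by Lemma~\ref{rmenos}, whenever $r(u,\infty)<r_-$ one has $\int_0^\infty\kappa(u,v)\,dv<\infty$, and indeed $-\lambda$ decays exponentially in $J^+(\gamma)$ by \eqref{lambda-above_bis} while $1-\mu$ tends to a negative limit, so $\kappa=\lambda/(1-\mu)\to 0$; integrating \eqref{kappa_u} from $\gamma$ only compares $\kappa(u,v)$ with $\kappa(u_\gamma(v),v)$ and cannot give a uniform-in-$v$ positive lower bound. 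The actual competition is between $\theta^2$ and the blueshifted $|\lambda|$, which is why the paper bootstraps on the quantity $\int_{v_\gamma(u)}^{v}\bigl|\theta^2/\lambda\bigr|$ (the set ${\cal D}_{\eps_1}$, shown to be open in $J^+(\gamma)$), controls $\partial_v\varpi=\tfrac12\,\frac{\theta^2}{\lambda}\,(1-\mu)$ through the boundedness of $1-\mu$ inside ${\cal D}$, and closes the argument with the pointwise bound \eqref{exponent_delta}. Your alternative bootstrap on $(|\zeta/\nu|,\kappa,|\theta|)$ is not carried out, and as stated it rests on the false lower bound for $\kappa$.

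Second, the threshold $l(s)>\tfrac{7\rho}{9}$ does not come from the balance you propose. The inequality $\frac{2\kk l(s)}{1+\beta}>\frac{2\kkk\beta}{1+\beta}+\delta$ reduces to $l(s)>\rho\beta$ (up to $\delta$), which is satisfiable for \emph{every} $l(s)>0$ by taking $\beta$ small, and even at the endpoint of \eqref{beta_ii} it would give $l(s)>\rho$, never $\tfrac{7\rho}{9}$; so the claimed ``direct computation'' cannot produce \eqref{ls2}. In the paper the exponent $\tfrac79$ arises only after propagating $\theta$ into ${\cal D}$ via the machinery of Lemma~4.2 of~\cite{relIst3} (with $s+1$ replaced by $l(s)$): the resulting bound \eqref{eq10} contains a second term with exponent governed by $\frac{k_-(\beta+q)}{1+\beta}$ and a factor $u^{\,l(s)-\Psi(\beta^2+q)-\delta}$, and positivity of $\Delta$ in \eqref{exponent_delta} requires $l(s)>\Psi(\beta^2+\beta+q)$; the specific choices $q=\tfrac13$, $\beta=\tfrac13+\eps$ then give $\beta^2+\beta+q\approx\tfrac79$. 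None of the $\beta^2$ or $q$ structure appears in your bookkeeping. (Minor but symptomatic: your bound for $|\theta|$ on $\gamma$ has the blueshift term with the wrong sign; combining \eqref{thetaLambdaGamma} with \eqref{lambda-above} gives \eqref{bbtheta}, with exponent $\frac{2\kk l(s)}{1+\beta}-\frac{2\kkk\beta^2}{1+\beta}-\delta$, not the stronger decay you wrote.) The concluding statement $\lim_{u\searrow 0}\varpi(u,\infty)=\varpi_0$ does follow along the lines you indicate once \eqref{exponent_delta} is in hand, but that estimate is precisely the part your proposal leaves unproved.
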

\begin{proof}
Given $\eps_1>0$, define
\begin{eqnarray*}
{\cal D}=
{\cal D}_{\eps_1}=\left\{(u,v)\in J^+(\gamma):\ u\leq U\/\ {\rm and}\   
\int_{\vgam(u)}^v\Bigl| \frac{\theta^2}{\lambda} \Bigr|(u,\tilde v)\,d\tilde v\leq \eps_1\right\}.
\end{eqnarray*}
The conclusion follows by proving that $\cD$ is open in $J^+(\gamma)$ if $\eps_1$ and $U$ are sufficiently small.
As in the proof of Theorem~4.1 in~\cite{relIst3}, this is accomplished 
by deriving a formula showing that in $\cD$ we have
\begin{equation}\label{exponent_delta}
\Bigl|\frac{\theta^2}{\lambda}\Bigr|(u,v)\leq Ce^{-\Delta v}
\end{equation}
for $0<u\leq U$.

From~\eqref{thetaLambdaGamma} we get
\begin{equation} 
\Bigl|\frac\theta\lambda\Bigr|(u_\gamma(v),v)
\leq Ce^{-\left(\frac{2\kk l(s)}{1+\beta}-
2\kkk \beta-\delta\right)v},\label{piece1}
\end{equation}
and recall that in~\eqref{lambda-above} we obtained
\begin{equation} 
 ce^{\bigl(-2\kkk\frac\beta{1+\beta}+\delta\bigr)v}\leq
  -\lambda(u_\gamma(v),v)\leq
  Ce^{\bigl(-2\kkk\frac\beta{1+\beta}-\delta\bigr)v}.\label{piece2}
 \end{equation}
 Combining~\eqref{piece1} with~\eqref{piece2} yields
 \begin{eqnarray}
 |\theta|(\ugam(v),v)&\leq& Ce^{-\left(\frac{2\kk l(s)}{1+\beta}-\frac{2\kkk\beta^2}{1+\beta}-\delta\right)v}.\label{bbtheta}
 \end{eqnarray}
Moreover, according to~\eqref{O_ii},
\begin{equation*} 
\frac{1+\beta}{2\kk+\delta}\ln\Bigr(\frac cu\Bigr) \leq \vgam(u)\leq
\frac{1+\beta}{2\kk-\delta}\ln\Bigr(\frac Cu\Bigr).
\end{equation*}
Therefore, the proof of Lemma~4.2 of~\cite{relIst3} goes through if one replaces
$s+1$ by $l(s)$. For example, for $(u,v)\in{\cal D}$, $q=\frac{1}{3}$ and $\beta=\frac{1}{3}+\eps$
with $\eps$ sufficiently small, we have
\begin{eqnarray}
|\theta|(u,v)&\leq& Ce^{-2\left(\frac{k_+l(s)}{1+\beta}-\,\frac{k_-\beta^2}{1+\beta}-\delta\right)v}\nonumber\\
&&+Cu^{l(s)-\Psi(\beta^2+q)-\delta}e^{-2\left(\frac{k_-(\beta+q)}{1+\beta}-\delta\right)v}.\label{eq10}
\end{eqnarray}
An estimate of the form~\eqref{exponent_delta} then follows if we assume 
$$
l(s)>\Psi(\beta^2+\beta+q)>\frac{7\rho}{9}.
$$
\end{proof}

\section{Breakdown of the Christodoulou-Chru\'sciel criterion}\label{section11}

In this section we prove that when there is no mass inflation the solution can be extended across the Cauchy horizon with enough regularity to violate the Christodoulou-Chru\'sciel version of strong cosmic censorship. The extension is constructed by first changing $v$ to a new coordinate with finite range, essentially the distance to the apparent horizon as measured by the radius function along $u=U$. This is the most natural choice to bring the Cauchy horizon to a finite coordinate value. If there is no mass inflation, all functions except $\theta$ are then shown to extend continuously to any subset of the Cauchy horizon away from the event horizon. We then change to yet another coordinate system, where we are able to prove that the Christoffel symbols are locally square integrable.

We regard the $(u,v)$ plane, the domain of our first order system, as a $C^2$ manifold.
We define a new null coordinate along the outgoing direction by
\begin{equation}\label{change}
\tilde v=r(U,V_\lambda)-r(U,v),
\end{equation}
where $V_\lambda=\max\{v:\lambda(U,v)=0\}$.
Equality~\eqref{omega_eh} and the assumption that $\hat{f}$ is continuous and integrable
imply that $\hat{\varpi}$ is a continuously differentiable function of $r$.
Similarly, \eqref{kappa_eh} and~\eqref{lambda_eh} guarantee that $\hat\kappa$ and 
$\hat\lambda$ are also continuously differentiable functions of $r$.
Then, equation~\eqref{r-v} shows that the coordinate $v$ over the event horizon is a
continuously differentiable function of $r$ with nonvanishing derivative, so that,
by the Inverse Function Theorem, $r$ is a continuously differentiable function
of~$v$ over the event horizon. We conclude that $\kappa$, $\varpi$ and $\lambda$ are
continuously differentiable functions of $v$ over the event horizon (in particular, $r$ is a $C^2$ function of $v$ along the event horizon). In addition, $\nu_0$
is continuously differentiable. Therefore, hypothesis (h4) in Section~6 of~\cite{relIst1} is 
satisfied, and so, by Lemma~6.1 in~\cite{relIst1}, the function $r$ is $C^2$.
Since we have $\lambda(U,v)<0$ for $v>V_\lambda$, equation~\eqref{change} allows us to define an
 admissible coordinate change
 $$[0,U]\times\left]V_\lambda,\infty\right[\,\ \rightarrow\ [0,U]\times\left]0,\tilde V\right[,$$
 where $\tilde V=r(U,V_\lambda)-r(U,\infty)$. We write a tilde over
 a function to indicate that we are using these new coordinates.

Assume that the hypotheses of Theorem~\ref{no_mass} hold. Let $0<\delta<U$.
As in Proposition~5.2 of~\cite{relIst3},
we can extend $\tilde r$ and $\tilde{\varpi}$ to continuous functions
on $[\delta,U]\times[0,\tilde V]$.

Using~\eqref{ray_v_bis} and the bound~\eqref{exponent_delta}
(which holds in $[\delta,U]\times[v,\infty[$, for $v>v_\gamma(\delta)$),
one proves that $\frac{\tilde\nu}{1-\tilde\mu}(\,\cdot\,,\tilde v)$
converges uniformly for $u\in[\delta,U]$ when $\tilde v\to \tilde V$.
As in Step 2 of the proof of Proposition~5.2 of~\cite{relIst3}, this implies that
$\frac{\tilde\nu}{1-\tilde\mu}$ admits a continuous extension to the
rectangle $[\delta,U]\times[0, \tilde V]$ (for arbitrary $\delta$). Notice that the function
$\frac{\tilde\nu}{1-\tilde\mu}(\,\cdot\,, \tilde V)$ is strictly positive on $]0,U]$.
When $\tilde r(u,\tilde V)=r_-$ and $\tilde\varpi(u,\tilde V)=\varpi_0$, we have $(1-\tilde\mu)(u,\tilde V)=0$; 
therefore, $\tilde\nu(u,\tilde V)$ exists and
is zero. On the other hand, when $\tilde r(u,\tilde V)<r_-$, Lemma~\ref{rmenos} implies that $\liminf_{\tilde v\to \tilde V}-\tilde{\nu}(u,\tilde v)>0$,
and so $(1-\tilde{\mu})(u,\tilde V)<0$. We conclude that in this case $\tilde{\nu}(u,\tilde V)$ exists and is negative.

The function $\tilde\lambda:=\partial_{\tilde v} r$ satisfies
$$
\tilde{\lambda}(U,\tilde v)\equiv -1.
$$
The integration of~\eqref{lambda_u} leads to
$$
\tilde \lambda(u,\tilde v)=\tilde\lambda(U,\tilde v)e^{-\int_u^U\bigl[\frac{\tilde\nu}{1-\tilde\mu}
\partial_{\tilde r}(1-\tilde\mu)\bigr](\tilde u,\tilde v)\,d\tilde u}.
$$
Since $\tilde\lambda(U,\tilde v)$ extends to $[0,\tilde V]$, and
$\frac{\tilde\nu}{1-\tilde\mu}$ and $\partial_{\tilde r}(1-\tilde\mu)$ extend to
 $[\delta,U]\times[0,\tilde V]$, $\tilde\lambda$ extends as a continuous function to $[\delta,U]\times[0,\tilde V]$.
 Moreover, $\tilde\lambda(\,\cdot\,,\tilde V)$ is strictly negative on $]0,U]$.

Taking into account the behavior of $\tilde\lambda$ and $\frac{\tilde\nu}{1-\tilde\mu}$
on $]0,U]\times\{\tilde V\}$, we see that the coefficient of the metric 
$$
\tilde{\Omega}^2=-4\tilde\kappa\tilde\nu=-4\tilde\lambda\frac{\tilde\nu}{1-\tilde\mu}
$$
is strictly positive on $\left]0,U\right]\times\{\tilde V\}$ and is continuous on $[\delta,U]\times[0,\tilde V]$,
for any $0<\delta<U$.

Equation~\eqref{nu_v} can be written as 
$$
\partial_{\tilde v}\tilde{\nu}=-\,\frac{\tilde{\Omega}^2}{4}\partial_{\tilde r}(1-\tilde{\mu}).
$$
Therefore the convergence of $\tilde{\nu}(\,\cdot\,,\tilde v)$ to $\tilde{\nu}(\,\cdot\,,\tilde V)$
is uniform for $u \in [\delta,U]$, and so $\tilde{\nu}$ is continuous on
 $[\delta,U]\times[0,\tilde V]$, for any $0<\delta<U$.
 
 Integrating~\eqref{zeta_v},
 $$ 
 \tilde\zeta(u,\tilde V)=\tilde\zeta(u,\tilde v)-\int_{\tilde v}^{\tilde V}\frac{\tilde\theta\tilde\nu}{\tilde r}(u,\bar v)\,d\bar v.
 $$ 
 We use
 $$\int_{\tilde v}^{\tilde V}|\tilde\theta|(u,\bar v)\,d\bar v=\int_{f^{-1}(\tilde v)}^{\infty}|\theta|(u,\bar v)\,d\bar v\to 0$$
 as $\tilde v\nearrow \tilde V$ (by~\eqref{eq10}). Note that the last convergence is uniform for $u\in[\delta,U]$.
 We may define $\tilde\zeta(\,\cdot\,,\tilde V)$ 
 as the uniform limit of $\tilde\zeta(\,\cdot\,,\tilde v)$ when $\tilde v\nearrow \tilde V$.
 
Therefore we have proved the following result.
 
\begin{Thm}\label{fechado-0}
Assume that the hypotheses of\/ {\rm Theorem~\ref{no_mass}} hold.
Then, for all\/ $0<\delta<U$, the functions $\tilde r$, $\tilde\nu$, $\tilde\lambda$, $\tilde\varpi$,
$\tilde\zeta$ and $\tilde\kappa$ (but not necessarily $\tilde{\theta}$)
admit continuous extensions to the closed rectangle\/ $[\delta,U]\times[0,\tilde V]$.
Moreover, ${(1-\tilde\mu)}(u,\tilde V)$ is negative for $u>0$, unless $r(\,\cdot\,,\infty)\equiv r_-$.
\end{Thm}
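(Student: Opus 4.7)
The strategy is to repackage the estimates collected in Sections~\ref{section4}--\ref{section10} by introducing a coordinate $\tilde v$ with finite range, so the Cauchy horizon $\{v=\infty\}$ becomes a compact boundary $\{\tilde v=\tilde V\}$, and then to verify that each unknown of the first order system~\eqref{r_u}--\eqref{kappa_at_u} extends continuously up to that boundary, uniformly on $u\in[\delta,U]$. The natural choice $\tilde v = r(U,V_\lambda) - r(U,v)$ (radius gap along $u=U$) is an admissible change of coordinates: Hypothesis~\ref{hyp}(ii), combined with~\eqref{kappa_eh}, \eqref{lambda_eh}, and~\eqref{r-v}, makes $\hat\kappa$, $\hat\varpi$, $\hat\lambda$ into $C^1$ functions of $v$ along the event horizon by the Inverse Function Theorem; this meets the hypotheses of Lemma~6.1 of~\cite{relIst1}, yielding $r\in C^2$. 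Since $\tilde\lambda(U,v)<0$ for $v>V_\lambda$, the map $v\mapsto\tilde v$ is then a $C^2$ diffeomorphism onto $\,]0,\tilde V[\,$.

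Next I would extend the seven functions in a cascade, driven by the exponential bound $|\theta^2/\lambda|(u,v)\le Ce^{-\Delta v}$ established in the proof of Theorem~\ref{no_mass}, with $\Delta>0$ and constants uniform on $[\delta,U]$. The functions $\tilde r$ and $\tilde\varpi$ are monotone in $\tilde v$ with limits controlled by Theorem~\ref{stability_of_Cauchy_horizon} (from below) and by the no-mass-inflation hypothesis (from above); an equicontinuity-in-$u$ argument promotes pointwise convergence to continuity on $[\delta,U]\times[0,\tilde V]$. Integrating the Raychaudhuri equation~\eqref{ray_v_bis} and invoking the bound on $|\theta^2/\lambda|$ shows that $\tilde\nu/(1-\tilde\mu)$ is monotone with uniformly summable increments, hence it extends continuously. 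The function $\tilde\lambda$ then extends via the closed-form integral of~\eqref{lambda_u},
\[
\tilde\lambda(u,\tilde v) = \tilde\lambda(U,\tilde v)\,\exp\!\Bigl(-\!\int_u^U\!\Bigl[\tfrac{\tilde\nu}{1-\tilde\mu}\,\partial_{\tilde r}(1-\tilde\mu)\Bigr](\tilde u,\tilde v)\,d\tilde u\Bigr),
\]
using the gauge $\tilde\lambda(U,\tilde v)\equiv -1$. The conformal factor $\tilde\Omega^2 = -4\tilde\lambda\cdot\tilde\nu/(1-\tilde\mu)$ is then continuous, and $\partial_{\tilde v}\tilde\nu = -(\tilde\Omega^2/4)\,\partial_{\tilde r}(1-\tilde\mu)$ has a continuous right-hand side, so $\tilde\nu$ extends uniformly in $u$; $\tilde\kappa=\tilde\lambda/(1-\tilde\mu)$ inherits continuity. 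For $\tilde\zeta$, I would integrate~\eqref{zeta_v} in $\tilde v$ and use the uniform integrability of $|\tilde\theta|$ coming from~\eqref{eq10} to conclude uniform convergence of $\tilde\zeta(\cdot,\tilde v)$ as $\tilde v\nearrow\tilde V$. The sign of $(1-\tilde\mu)(u,\tilde V)$ follows from the dichotomy in Lemma~\ref{rmenos}: either $r(\cdot,\infty)\equiv r_-$ (in which case $1-\tilde\mu$ vanishes at the Cauchy horizon), or $r(u,\infty)<r_-$ for every $u>0$, forcing $(1-\tilde\mu)(u,\tilde V)<0$.

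The principal obstacle is upgrading pointwise convergence to uniform convergence in $u\in[\delta,U]$, which is precisely why the theorem excludes a neighborhood of the event horizon at $u=0$: the decay constants $C$ and $\Delta$ in~\eqref{exponent_delta} depend on the lower bound $\delta$ on $u$, and deteriorate as $\delta\searrow 0$. A secondary subtlety, which I expect will require some care to present cleanly, is that $\tilde\theta$ is deliberately \emph{not} claimed to extend continuously — only its integral in $\tilde v$ across the Cauchy horizon is controlled — reflecting the anticipated borderline regularity of the scalar field (continuous with $H^1_{\rm loc}$ derivatives) that will later enter Theorem~\ref{breakdown}.
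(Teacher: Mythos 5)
Your proposal follows essentially the same route as the paper: the same coordinate change $\tilde v=r(U,V_\lambda)-r(U,v)$ justified by the $C^1$ regularity of $\hat f,\hat\varpi,\hat\lambda$ and Lemma~6.1 of~\cite{relIst1}, then the same cascade of extensions ($\tilde r,\tilde\varpi$; $\tfrac{\tilde\nu}{1-\tilde\mu}$ from~\eqref{ray_v_bis} and~\eqref{exponent_delta}; $\tilde\lambda$ from the integrated~\eqref{lambda_u} with $\tilde\lambda(U,\cdot)\equiv-1$; $\tilde\Omega^2$; $\tilde\nu$ from~\eqref{nu_v}; $\tilde\zeta$ from~\eqref{zeta_v} and~\eqref{eq10}), and the same use of Lemma~\ref{rmenos} for the sign of $(1-\tilde\mu)(\cdot,\tilde V)$. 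The only detail left implicit is that the case $r(u,\infty)<r_-$ forces $(1-\tilde\mu)(u,\tilde V)<0$ because $\liminf_{v\to\infty}(-\nu)(u,v)>0$ combines with the already-established continuity and strict positivity of $\tfrac{\tilde\nu}{1-\tilde\mu}$, exactly as in the paper.
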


\begin{Rmk}
Since $\tilde r$ and $\tilde{\Omega}^2$ are strictly positive on $\left]0,U\right]\times\{\tilde V\}$, it is immediate to construct continuous extensions of the metric beyond the Cauchy horizon, as was done in the proof of\/ {\rm Corollary~5.11} of\/~{\rm \cite{relIst3}}.
\end{Rmk}

To construct extensions which also have locally square integrable Christoffel symbols it is useful to consider a new $\mathring v$ coordinate determined by the condition
\begin{equation}
\mathring \Omega^2(U,\mathring v)\equiv 1\;. 
\end{equation}
One has 
$$\frac{d \mathring v}{d v}=\Omega^2(U,v)$$
and consequently 
$$\frac{d \mathring v}{d \tilde v}=-\frac{\Omega^2(U,v)}{\lambda(U,v)}=\frac{4\nu}{1-\mu}(U,v)=\frac{4\tilde \nu}{1-\tilde \mu}(U,\tilde v)=-\frac{{\tilde\Omega}^2(U,\tilde v)}{\tilde\lambda(U,\tilde v)}\;.$$
We conclude that these coordinate systems are $C^1$-compatible up to and including $\tilde v=\tilde V$. In particular this shows that $\mathring V=\mv(\tilde V)$ is finite and that we can construct continuous extensions of the corresponding metric components beyond the Cauchy horizon $\mathring v=\mathring V$.  
 
Note that the choice of coordinates provided by $\hat \kappa (U,\hat v)\equiv 1$, used in \cite{relIst3} for an analogous extension, is not regular in the case when $\tilde r(u,\tilde V)\equiv r_-$, since $\frac{d\hat v}{d \tilde v}=-\frac1{1-\mu}$ diverges as we approach the Cauchy horizon.

Since
$$ \frac{\tilde \nu}{1-\tilde \mu}\equiv -\frac{{\tilde\Omega}^2}{4\tilde\lambda}$$
extends continuously to the Cauchy horizon, where it is strictly positive, the only potentially problematic Christoffel symbols are 
$$\mrpp \Gamma^u_{uu}=\partial_u \log \mrpp \Omega^2$$
and
$$\mrpp \Gamma^{\mv}_{\mv\mv}=\partial_{\mv} \log \mrpp \Omega^2 \;.$$
Now, in view of the boundedness of the quantities $\mathring r$, $\mathring\nu$, $\mathring\lambda$, $\mathring\varpi$, $\mathring\zeta$ and $\mathring\kappa$ (but not necessarily $\mathring{\theta}$) guaranteed by Theorem~\ref{fechado-0} and the $C^1$-compatibility of the two coordinate systems, Einstein's equation \eqref{wave_Omega} gives us, for $\mv<\mrpp V$, 
\begin{equation}
\label{Einstein}
 \partial_u\partial_{\mv}  \log \mrpp \Omega^2 = O(1) (\mrpp \theta +1)\;.
\end{equation}
Since our choice of coordinates gives $ \log \mrpp \Omega^2 (U,\mv)\equiv 0$, integrating the previous equation first in $u$ and then in $\mv$, while applying H\"older's inequality in between, gives
$$\int_{\mrpp v_0}^{\mrpp V} \left(\mrpp \Gamma^{\mv}_{\mv\mv}\right)^2(u,\mv) d\mv \leq 
C \left(1+\int_u^U \int_{\mrpp v_0}^{\mrpp V}\mrpp \theta^2(\bar u,\mv) d\mv  d\bar u \right)\;.$$       
The proof that leads to inequality \eqref{exponent_delta} also shows that 
\begin{equation}
\label{est}
\left| \frac{ \theta^2(u,v)}{\lambda(U,v)}\right|\leq C e^{-\Delta v}\;,
\end{equation}
since it uses an upper bound for $|\theta|$ and a lower bound for $|\lambda|$, both of which are uniform in $u$. Therefore, 
$$\int_{\mrpp v_0}^{\mrpp V}\mrpp \theta^2(u,\mv) d\mv = \int_{v_0}^{\infty}\theta^2(u,v)\frac{1}{\Omega^2(U,v)} dv \leq C  \int_{v_0}^{\infty}\theta^2(u,v)\frac{1}{-\lambda(U,v)} dv\leq C\;,$$
and so
$$\int_{\mrpp v_0}^{\mrpp V} \left(\mrpp \Gamma^{\mv}_{\mv\mv}\right)^2(u,\mv) d\mv \leq C.$$
  
Note that smoothness provides $|\partial_u\log\mrpp \Omega^2(u,\mv_0)|\leq C$ for fixed $\mv_0$. Then, integrating~\eqref{Einstein} in $v$ leads to
$$\left|\mrpp \Gamma^u_{uu}\right|(u,\mv) \leq C\left(1+\int_{\mv_0}^{\mv} |\mrpp \theta|(u,\mv) d\mv \right)\leq C\;,$$
again by H\"older's inequality.

Therefore we have the following result.

\begin{Thm}\label{breakdown} 
Let ${\cal M}_\delta$ be the preimage of $[\delta,U]\times[0,\mathring{V}]$ by the null coordinate functions $(u,\mathring v)$. Then the Christoffel symbols and $\mrpp\theta$ are in $L^2({\cal M}_\delta)$.
\end{Thm}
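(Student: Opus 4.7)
The plan is to follow the outline already sketched in the paragraphs preceding the theorem. The first step is to introduce the outgoing null coordinate $\mv$ defined by the normalization $\mrpp\Omega^2(U,\mv)\equiv 1$, equivalently by $\frac{d\mv}{d\tilde v} = -\frac{\tilde\Omega^2}{\tilde\lambda}(U,\tilde v)$, and verify that the change of coordinates from $\tilde v$ to $\mv$ is $C^1$-compatible up to and including the Cauchy horizon. The key point here, which is the main novelty compared to the analogous extension in \cite{relIst3}, is that $-\frac{\tilde\Omega^2}{\tilde\lambda}(U,\tilde v)$ is continuous and strictly positive on $[0,\tilde V]$ by Theorem~\ref{fechado-0}, so the range $[0,\mrpp V]$ is bounded and the new coordinate is regular even in the borderline situation where $\tilde r(\cdot,\tilde V)\equiv r_-$ (where the choice $\hat\kappa\equiv 1$ would have degenerated). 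In the new coordinates, all quantities except possibly $\mrpp\theta$ extend continuously across the Cauchy horizon, with $\mrpp r$ and $\mrpp\Omega^2$ strictly positive.

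The second step is to identify the potentially problematic Christoffel symbols. Since $g = -\mrpp\Omega^2\,du\,d\mv + \mrpp r^2 \sigma_{\mathbb{S}^2}$, only $\mrpp\Gamma^u_{uu} = \partial_u \log\mrpp\Omega^2$ and $\mrpp\Gamma^{\mv}_{\mv\mv} = \partial_{\mv}\log\mrpp\Omega^2$ are not immediately controlled by quantities already known to be bounded. Rewriting \eqref{wave_Omega} in the new coordinates and using the continuous extendibility of $\mrpp r, \mrpp\nu, \mrpp\lambda, \mrpp\varpi, \mrpp\zeta, \mrpp\kappa$ provided by Theorem~\ref{fechado-0}, one obtains
\[
\partial_u\partial_{\mv}\log\mrpp\Omega^2 = O(1)(\mrpp\theta + 1).
\]
The strategy is then to integrate this identity in each variable separately, using the estimate \eqref{exponent_delta} to control the integrals of $\mrpp\theta$ and $\mrpp\theta^2$.

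The third step is to run the integration. For $\mrpp\theta \in L^2$ itself, change variables in $\mv$ back to $v$: since $d\mv = \Omega^2(U,v)\,dv$ and $\Omega^2(U,v) \sim -\lambda(U,v)$ times a bounded, bounded-below factor, the estimate \eqref{exponent_delta} in the form $|\theta^2(u,v)/\lambda(U,v)| \leq Ce^{-\Delta v}$, which is uniform in $u \in [\delta,U]$, yields $\int_{\mv_0}^{\mrpp V} \mrpp\theta^2(u,\mv)\,d\mv \leq C$ uniformly. This gives $\mrpp\theta \in L^2(\mathcal{M}_\delta)$. For $\mrpp\Gamma^{\mv}_{\mv\mv}$, integrate the Einstein equation above in $u$ from $U$ to $u$, using $\partial_{\mv}\log\mrpp\Omega^2(U,\mv) \equiv 0$ (from the choice of coordinate), then square, apply Hölder in $u$, and integrate in $\mv$; Fubini and the uniform bound on $\int\mrpp\theta^2\,d\mv$ give the $L^2$ bound. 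For $\mrpp\Gamma^u_{uu}$, use $C^\infty$ regularity of the solution away from the Cauchy horizon to bound $|\partial_u\log\mrpp\Omega^2(u,\mv_0)|$ uniformly at a fixed $\mv_0 < \mrpp V$, then integrate in $\mv$ from $\mv_0$ and apply Hölder to $\int|\mrpp\theta|\,d\mv \leq \sqrt{\mrpp V - \mv_0}\left(\int\mrpp\theta^2\,d\mv\right)^{1/2}$ to conclude that $\mrpp\Gamma^u_{uu}$ is in fact bounded, and in particular in $L^2$. The hardest step is actually the first one — ensuring that the new coordinate is genuinely regular across the Cauchy horizon in all possible cases (including $r(\cdot,\infty)\equiv r_-$) — after which the PDE argument and the decay estimate \eqref{exponent_delta} do the work.
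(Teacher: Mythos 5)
Your proposal is correct and follows essentially the same route as the paper: the regular coordinate $\mv$ with $\mrpp\Omega^2(U,\mv)\equiv 1$, the identification of $\mrpp\Gamma^u_{uu}$ and $\mrpp\Gamma^{\mv}_{\mv\mv}$ as the only problematic symbols, the equation $\partial_u\partial_{\mv}\log\mrpp\Omega^2=O(1)(\mrpp\theta+1)$, and the uniform-in-$u$ form of \eqref{exponent_delta} (i.e.\ \eqref{est}) to bound $\int\mrpp\theta^2\,d\mv$, with H\"older giving the bounds on the two symbols. The only step you leave implicit is the one the paper's formal proof actually consists of, namely converting the two-dimensional bounds into $L^2({\cal M}_\delta)$ via $dV_4=4\pi\,\mrpp r^2\frac{\mrpp\Omega^2}{2}\,du\,d\mv$, which is immediate from the boundedness of $\mrpp r$ and $\mrpp\Omega^2$ that you already established.
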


\begin{proof}
The square of the $L^2$ norm of a function $\mrpp h$ on ${\cal M}_\delta$ is given by
$$
\int_{{\cal M}_\delta}{\mrpp h}^2\,dV_4=4\pi\int_{[\delta,U]\times[0,\mrpp{V}]}\left[\mrpp r^2\frac{\mrpp\Omega^2}2\mrpp h^2\right](u,\mrpp v)\,dud\mrpp v.
$$
Since the functions $\mrpp r$ and $\mrpp\Omega^2=-4\mrpp\nu\mrpp\kappa$ are bounded in $[\delta,U]\times[0,\mrpp{V}]$, we conclude that the Christoffel symbols and $\mrpp\theta$ are in $L^2({\cal M}_\delta)$.
\end{proof}

\begin{Rmk}
Again, as was done in the proof of
{\rm Corollary~5.11} of\/~{\rm \cite{relIst3}}, it is easy to construct extensions of the metric beyond the Cauchy horizon whose Christoffel symbols are in $L^2_{{\rm loc}}$ (and whose scalar field is in $H^1_{{\rm loc}}$). In other words, the Christodoulou-Chru\'sciel version of strong cosmic censorship does not hold in this setup.
\end{Rmk}

\end{document}